\documentclass[authorversion, sigconf,screen]{acmart}
 \pdfoutput=1
 
\AtBeginDocument{}

\copyrightyear{2024} 
\acmYear{2024} 
\setcopyright{rightsretained} 
\acmConference[LICS '24]{39th Annual ACM/IEEE Symposium on Logic in Computer Science}{July 8--11, 2024}{Tallinn, Estonia}
\acmBooktitle{39th Annual ACM/IEEE Symposium on Logic in Computer Science (LICS '24), July 8--11, 2024, Tallinn, Estonia}\acmDOI{10.1145/3661814.3662130}
\acmISBN{979-8-4007-0660-8/24/07}

\usepackage{graphicx}
\usepackage[disable]{todonotes}
\usepackage[outline]{contour}
\usepackage{quiver}

\usepackage{cmll}
\usepackage{enumitem}
\usepackage{cleveref}
\usepackage{mathtools} 
\usepackage{enumitem}
\usepackage{tikz}
\usepackage{stmaryrd}

\usepackage{subfig}
\usepackage{wrapfig}
\usepackage{caption}
\usepackage{thm-restate}
\usepackage{makecell}

\usepackage{booktabs}

\tikzcdset{scale cd/.style={every label/.append style={scale=#1},
    cells={nodes={scale=#1}}}}
\tikzcdset{scalenodes/.style=
  {cells={nodes={scale=#1}}}}
\usetikzlibrary{calc,matrix,decorations.markings,decorations.pathreplacing,arrows,cd,positioning,shapes.misc}
\usetikzlibrary{decorations.pathmorphing}
\usetikzlibrary{backgrounds}
\tikzstyle{game-causality}=[dotted, thick]
\tikzstyle{strat-causality}=[->, thick, -{Triangle[open]}]
\tikzset{curve/.style={settings={#1},to path={(\tikztostart)
    .. controls ($(\tikztostart)!\pv{pos}!(\tikztotarget)!\pv{height}!270:(\tikztotarget)$)
    and ($(\tikztostart)!1-\pv{pos}!(\tikztotarget)!\pv{height}!270:(\tikztotarget)$)
    .. (\tikztotarget)\tikztonodes}},
    settings/.code={\tikzset{quiver/.cd,#1}
        \def\pv##1{\pgfkeysvalueof{/tikz/quiver/##1}}},
    quiver/.cd,pos/.initial=0.35,height/.initial=0}

\tikzset{
  prof/.style = {decoration = {markings, mark = at position 0.5 with { \node[transform shape, yscale=.4] {$|$}; } }, postaction = {decorate} },
}
\tikzstyle{neutralnode}=[fill=gray!25, draw, thick,  inner sep=1pt, minimum size=9pt]
\tikzstyle{posnode}= [fill=blue!25, draw, circle, thick,  inner sep=1pt, minimum size=9pt]
\tikzstyle{negnode}= [fill=red!25, draw, thick, circle, inner sep=1pt, minimum size=9pt]

\newcommand{\modification}[1]{\boldsymbol{#1}}
\newcommand{\x}{\modification{x}}
\newcommand{\y}{\modification{y}}
\newcommand{\z}{\modification{z}}
\newcommand{\w}{\modification{w}}
\newcommand{\n}{\modification{n}}
\newcommand{\m}{\modification{m}}
\newcommand{\p}{\modification{p}}
\renewcommand{\b}{\modification{b}}

\newcommand{\commute}{ \modification{c} }
\newcommand{\bistrong}{ \b}

\newcommand{\mongamma}{\gamma}
\newcommand{\mondelta}{\delta}
\newcommand{\monomega}{\varpi}

\newcommand{\proddec}[1]{ {#1}_2 }
\newcommand{\unitdec}[1]{  {#1}_0 }

\newcommand{\etaprod}{ \proddec{\eta} }
\newcommand{\etaunit}{ \unitdec{\eta} }
\newcommand{\muprod}{ \proddec{\mu} }
\newcommand{\muunit}{ \unitdec{\mu} }

\newtheorem{notation}{Notation}

\usepackage{listings}
\definecolor{darkgreen}{RGB}{0,102,0}
\definecolor{lightblue}{RGB}{111, 151, 242}
\definecolor{darkred}{RGB}{178,0,0}
\definecolor{lightgrey}{rgb}{0.5,0.5,0.5}
\definecolor{mymauvnne}{rgb}{0.58,0,0.82}
\definecolor{brightyellow}{RGB}{237, 217, 83}

\definecolor{orange}{RGB}{255,123,0}
\definecolor{lightpink}{RGB}{255, 103, 129}
\definecolor{brightpink}{RGB}{232, 88, 232}
\definecolor{grassgreen}{RGB}{0,154,23}
\definecolor{seablue}{RGB}{52,111,111}
\definecolor{skyblue}{RGB}{135, 206, 235}
\newcommand{\opacity}{0.3}

\newcommand{\alias}[1]{ |[alias=#1]| }

\newcommand{\pseudomonadcolour}						{seablue}
\newcommand{\monoidalcolour}							 {lightpink}
\newcommand{\strongpseudofuncolour}					{grassgreen}
\newcommand{\strongpseudomoncolour}				{brightyellow}
\newcommand{\altstrongpseudomoncolour}			{brightpink}
\newcommand{\bistrongcolour}								{lightblue}

\newcommand{\commutecolour}								{orange}

\newcommand{\monadmcolour}{\pseudomonadcolour}
\newcommand{\monadncolour}{\pseudomonadcolour}
\newcommand{\monadpcolour}{\pseudomonadcolour}

\newcommand{\monoidalpcolour}{\monoidalcolour}
\newcommand{\monoidalmcolour}{\monoidalcolour}
\newcommand{\monoidallcolour}{\monoidalcolour}
\newcommand{\monoidalrcolour}{\monoidalcolour}

\newcommand{\strongxcolour}{\strongpseudofuncolour}
\newcommand{\strongycolour}{\strongpseudofuncolour}
\newcommand{\strongzcolour}{\strongpseudomoncolour}
\newcommand{\strongwcolour}{\strongpseudomoncolour}

\newcommand{\strongzprimecolour}{\altstrongpseudomoncolour}
\newcommand{\strongwprimecolour}{\altstrongpseudomoncolour}

\newcommand{\actiondecorate}[2]{ {#1}^{#2} }

\newcommand{\acttri}{\act}
\newcommand{\alphatri}{ \actiondecorate{\alpha}{\acttri} } 
\newcommand{\lambdatri}{ \actiondecorate{\lambda}{\acttri} } 
\newcommand{\penttri}{ \actiondecorate{\pentagonator}{\acttri} } 
\newcommand{\ltri}{ \actiondecorate{\montrianglel}{\acttri} }
\newcommand{\mtri}{ \actiondecorate{\montrianglem}{\acttri} }

\newcommand{\actstar}{\star}
\newcommand{\alphastar}{ \actiondecorate{\alpha}{\actstar} } 
\newcommand{\lambdastar}{ \actiondecorate{\lambda}{\actstar} } 
\newcommand{\pentstar}{ \actiondecorate{\pentagonator}{\actstar} } 
\newcommand{\lstar}{ \actiondecorate{\montrianglel}{\actstar} }
\newcommand{\mstar}{ \actiondecorate{\montrianglem}{\actstar} }

\newcommand{\actmaptrans}{\chi}
\newcommand{\actmapassoccell}{\omega}
\newcommand{\actmapunitcell}{\gamma}

\newcommand{\acttranstrans}{\sigma}
\newcommand{\acttransmodif}{\Gamma}

\newcommand{\actthreecell}{ q }

\newcommand{\vertequals}{\rotatebox{90}{$\,=$}}
\newcommand{\para}[1]{\widetilde{#1}}
\DeclareMathOperator{\paratens}{\para{\tens}}

\newcommand{\homBicat}[2]{ \mathrm{Hom}({#1}, {#2}) }

\newcommand{\cf}{\emph{c.f.}}

\newcommand{\eg}{\emph{e.g.}}
\newcommand{\ie}{\emph{i.e.}}

\newcommand{\st}{\mid}

\newcommand{\To}{\ensuremath{\Rightarrow}} \newcommand{\xra}[1]{\ensuremath{\xrightarrow{#1}}}  \newcommand{\XRA}[1]{\ensuremath{\xRightarrow{#1}}}
\newcommand{\id}{\ensuremath{\mathrm{id}}} \newcommand{\Id}{\mathsf{Id}}

  \renewcommand{\exp}[2]{{#1} \To {#2}}

\renewcommand{\epsilon}{\varepsilon}

\newcommand{\psinv}[1]{{#1}^{{\bullet}}}  \newcommand{\iso}{\cong}
 
 \newcommand{\inr}{\ensuremath{\mathrm{inr}}} \newcommand{\inl}{\ensuremath{\mathrm{inl}}}    
    \newcommand{\seq}[1]{\langle #1 \rangle}

\newcommand{\Nat}{\mathbb{N}}

\newcommand{\act}{\triangleright}
\newcommand{\tens}{\otimes}
\newcommand{\tensu}{I}
\newcommand{\cellOf}[1]{{#1}}

\newcommand{\monoidal}[1]{ (#1, \tens, \tensu) }

\newcommand{\pentagonator}{\mathfrak{p}}
\newcommand{\montrianglel}{\mathfrak{l}}

\newcommand{\montrianglem}{\mathfrak{m}}

\newcommand{\actalpha}{{\widetilde\alpha}}
\newcommand{\actlambda}{{\widetilde\lambda}}
\newcommand{\actpentagonator}{{\widetilde\pentagonator}}
\newcommand{\acttrianglel}{{\widetilde\montrianglel}}
\newcommand{\acttrianglem}{{\widetilde\montrianglem}}

\newcommand{\Rel}{\mathbf{Rel}}

\newcommand{\catA}{\ensuremath{\mathbb{A}}} \newcommand{\catB}{\ensuremath{\mathbb{B}}} \newcommand{\catC}{\ensuremath{\mathbb{C}}} \newcommand{\catD}{\ensuremath{\mathbb{D}}}

\newcommand{\baseCat}{\B}
    \newcommand{\Set}{\mathbf{Set}}
\newcommand{\Cat}{\mathbf{Cat}}

\newcommand{\Bicat}{\mathbf{Bicat}}

\newcommand{\Para}{\mathbf{Para}}

\renewcommand{\a}{\mathsf{a}}
\renewcommand{\r}{\mathsf{r}}
\renewcommand{\l}{\mathsf{l}}

\newcommand{\braid}{\beta}

\newcommand{\ltie}{ \rtimes }
\newcommand{\rtie}{ \ltimes }

\DeclareMathOperator{\binder}{ . }
\newcommand{\bind}{\binder}

\DeclareMathOperator{\bulletop}{\bullet}
\renewcommand{\vert}{\bulletop}
\newcommand{\vertsub}[1]{\vert}

  \newcommand{\Span}{\mathbf{Span}}

\newcommand{\twocell}[1]{\overset{\small{#1}}{\Leftarrow}}

\newenvironment{bprooftree}
  {\leavevmode\hbox\bgroup}
  {\DisplayProof\egroup}

\makeatletter
\renewcommand{\@marginparreset}{\reset@font\footnotesize
  \footnote
  \raggedright
  \@setminipage
}
\makeatother

\newcommand{\oncell}[1]{  #1 }

\newcommand{\B}{\mathscr{B}}
\newcommand{\C}{\mathscr{C}}
\newcommand{\Ccat}{\mathbb{C}}

\newcommand{\V}{\mathscr{V}}
\newcommand{\Vcat}{\mathbb{V}}

\newcommand{\mto}{\to} 

\newcommand{\LeftAct}{\mathbf{LeftExt}}

\newcommand{\catname}[1]{ \mathbf{#1} }
\newcommand{\MonCat}{ \catname{MonCat} }
\newcommand{\MonBicat}{ \catname{MonBicat} }
\newcommand{\Vact}[1]{ {#1}\text{-}\catname{Act} }

\newcommand{\seqop}{;}
\DeclareMathOperator{\linseq}{\seqop}

\newcommand{\concseqOp}{ \mid \mid } 
\DeclareMathOperator{\concseq}{\concseqOp}

\usepackage{framed}

\begin{document}

\title[Strong, commutative, and concurrent pseudomonads]{
	 Effectful semantics in bicategories: strong, commutative, and concurrent pseudomonads
}

\author{Hugo Paquet}
\email{paquet@lipn.fr}
\affiliation{\institution{LIPN, Universit{\'e} Sorbonne Paris Nord}
  \city{Villetaneuse}
  \country{France}
}

\author{Philip Saville}
\email{philip.saville@cs.ox.ac.uk}
\affiliation{\institution{University of Oxford}
  \city{Oxford}
  \country{UK}}

\begin{abstract}
We develop the theory of strong and commutative monads in the 2-dimensional setting of bicategories. This provides a framework for the analysis of effects in many recent models which form bicategories and not categories, such as those based on profunctors, spans, or strategies over games.

We then show how the 2-dimensional setting provides new insights into the semantics of concurrent functional programs. We introduce concurrent pseudomonads, which capture the fundamental weak interchange law connecting parallel composition and sequential composition. This notion brings to light an intermediate level, strictly between strength and commutativity, which is invisible in traditional categorical models. We illustrate the concept with the continuation pseudomonad in concurrent game semantics.

In developing this theory, we take care to understand the coherence laws governing the structural 2-cells. We give many examples and prove a number of practical and foundational results.
      \end{abstract}

\begin{CCSXML}
<ccs2012>
<concept>
<concept_id>10003752.10010124.10010131.10010137</concept_id>
<concept_desc>Theory of computation~Categorical semantics</concept_desc>
<concept_significance>500</concept_significance>
</concept>
<concept>
<concept_id>10003752.10010124.10010131.10010133</concept_id>
<concept_desc>Theory of computation~Denotational semantics</concept_desc>
<concept_significance>500</concept_significance>
</concept>
</ccs2012>
\end{CCSXML}

\ccsdesc[500]{Theory of computation~Categorical semantics}
\ccsdesc[500]{Theory of computation~Denotational semantics}
\keywords{Semantics, effect, monad, strength, concurrency, bicategory}

\maketitle

\section{Introduction}
\label{sec:introduction}
Moggi~\cite{Moggi1989,Moggi1991} famously observed that the structure of effectful computation is captured by the category-theoretic notion of \emph{strong monad}. 
This gives a framework for constructing new models and relating existing ones, abstracting away from any particular effect. 
This paper lays the foundations for modelling effects using monads in 
	\emph{2-dimensional}
category theory, where one has not just morphisms between objects, but also morphisms between morphisms (\Cref{sec:semant-2-dimens,sec:examples-of-bicats}). 
We have two motivations:
\begin{enumerate}
\item 
	Many recent semantic models are not categories but
  	\emph{bicategories} (\eg~\cite{FioreSpecies,cg1,
          template-games, fscd-poly}). However, we lack a unifying
        framework for these models. The time is right to set up the proper theoretical foundations for these models. 
\item  
Some well-known effects are already 2-categorical (see \Cref{sec:semant-2-dimens,sec:pseudo-mono-lax}). Making this structure explicit lets us see them as instances of a larger pattern, highlighting new connections, theoretical insights, and examples.
\end{enumerate}

In this paper we lift Moggi's foundational framework to the \mbox{2-dimensional} setting 
	(\Cref{sec:strong-pseudomonads,sec:symmetry}),
and show this is a suitable setting for modelling effectful programs
	(\Cref{sec:premonoidal-Kleisli-bicats}).
In doing so, we discover new notions that are invisible in 1-dimensional approaches (\Cref{sec:concurrency}).
Throughout we give plenty of examples (\eg~\Cref{sec:examples,sec:games}) and take care to mathematically justify our choice of definitions (\Cref{sec:justify}).

\subsection{Semantics in 2-dimensional categories}
\label{sec:semant-2-dimens}

A 2-dimensional category comes with objects $(A, B, \ldots)$, morphisms 
($f, g, \ldots : A \to B$), often called \emph{1-cells}, and \emph{2-cells} 
($\sigma, \tau, \ldots : {f \To g} $)
between the 1-cells. 
There are various kinds of 2-dimensional categories. In this paper we
work with \emph{bicategories}, a general notion in which the
associativity and identity laws for the composition of morphisms only
hold up to isomorphism. 

Bicategories typically arise when the composition of morphisms uses a
universal property (\eg~a categorical limit or colimit), because it is
then determined only up to isomorphism. 
There are many examples from
semantics:  game semantics
	\cite{cg1, template-games}, 
recent models of linear logic based on profunctors 
	\cite{fscd-poly, FioreSpecies,galal-profunctors},
and models describing the \mbox{$\beta\eta$-rewrites} of the simply-typed $\uplambda$-calculus~\cite{Seely1987,Hilken1996,LICS2019}.
These models come with more structure, and typically provide
  finer-grained or more intensional information than
  categorical~ones.
(See also \Cref{sec:examples-of-bicats} for detailed examples.) 

In addition to these recent models, many 
traditional categories from semantics are already 2-dimensional:
\begin{description}
\item \emph{Domain theory:} The basic idea of domain theory is to
  model recursion using a partial order on sets of continuous
  functions. This is a simple form of 2-dimensional structure on
  categories of domains, but there is a rich theory
  (\eg~\cite{Sterling2023,hyland-somereasons,taylor-algebraic-approach}).
\item \emph{Non-determinism:} Perhaps the simplest model for
  non-determinism is the category of sets and relations, where programs
  correspond to functions $A \to \mathcal{P}(B)$. The inclusion order on
  relations gives 2-dimensional structure with a natural semantic
  interpretation in terms of possible returned~values. 
\item \emph{Concurrency:} Maps of processes play a central role in
  models of concurrency based on event structures or presheaves
  \cite{winskel1986event,cattani1996presheaf}, and the abstract
  framework of \emph{concurrent Kleene Algebra} is similarly based on
  a partial order over processes \cite{Hoare2011}.
  \end{description}
2-dimensional aspects are also relevant on the syntactic
  side (see \cite{LICS2019,Olimpieri2021, Kerinec2023}). Other
  2-dimensional notions are also important, such as lax 2-dimensional functors
  for comparing models
\cite{DBLP:conf/csl/BaillotDER97,DBLP:conf/lics/ClairambaultOP23}.

\subsection{The monadic theory of effects}

We recall the traditional framework (\eg~\cite{Moggi1989,Moggi1991}). A strong monad on a monoidal category $\monoidal\catC$ is a monad $(T, \mu, \eta)$ equipped with  natural transformations
\[A \tens T(B) \xra{t_{A, B}} T(A \tens B) \qquad T(A) \tens B
  \xra{s_{A, B}} T(A \tens B)\]
called the \emph{left strength} and the \emph{right strength},
compatible with both the monoidal structure of $\catC$ and the monad
structure of $T$ (see~\eg~\cite{Kock1972,McDermott2022}).
An effectful program
	$(\Gamma \vdash M : A)$ 
is then modelled by a Kleisli arrow 
	$\Gamma \to TA$
        in $\catC$.
        
The strength makes substitution possible
even in the presence of free variables. For example, we can
substitute $M$ for a variable $x : A$ in another program
$(\Delta, x: A \vdash N : B)$ using the strength and the
Kleisli extension operation:
\[
\Delta \otimes \Gamma 
		\xra{\Delta \otimes M} 
	\Delta \otimes TA
		\xra{t_{\Delta, A}} 
	T(\Delta \otimes A) 
		\xra{\mathtt{>\!\!>\!=} N} 
	TB.
\]

This paper is about a notion of pseudostrength for
2-dimensional pseudomonads, where \emph{pseudo} indicates that the
equations in the definition of a strong monad have been replaced by
2-dimensional isomorphisms. These isomorphisms must in turn satisfy a number of
equations, which we justify in various
ways; see \Cref{sec:justify}.

\subsection{Pseudo monoidality and lax monoidality: commutativity and concurrency}
\label{sec:pseudo-mono-lax}

The theory of strong monads provides a basis for reasoning about
sequential composition. A natural question is whether the order of
execution matters for the two components of a pair: if
$(\Gamma \vdash M : A)$ and $(\Delta \vdash N : B)$ are effectful
programs then typically the program
\[
\Gamma, \Delta \vdash (M, N) : A \otimes B
\]
behaves differently depending on which component is evaluated
first. (We model contexts linearly to remain
as general as possible, since categories with products are instances
of monoidal categories. But this is orthogonal to the topic of
this paper.)

\subsubsection{Commutativity}
An effect is called commutative if the choice of
evaluation order for pairs has no impact on program behaviour. 
For example, random choice and divergence are commutative effects;
printing and state are not.
Correspondingly, a strong monad
is called {commutative} when the equation 
\begin{equation} \label{eq:commutativity-defn}
\begin{tikzcd}[row sep=.8em]
	& {TA \otimes TB} \\
	{T(A \otimes TB)} && {T(TA \otimes B)} \\[1em]
	{TT(A \otimes B)} && {TT(A \otimes B)} \\
	& {T(A \otimes B)}
	\arrow["s"', from=1-2, to=2-1]
	\arrow["t", from=1-2, to=2-3]
	\arrow[""{name=0, anchor=center, inner sep=0}, "Tt"', from=2-1, to=3-1]
	\arrow["\mu"', from=3-1, to=4-2]
	\arrow[""{name=1, anchor=center, inner sep=0}, "Ts", from=2-3, to=3-3]
	\arrow["\mu", from=3-3, to=4-2]
\end{tikzcd}
\end{equation}
holds. This is a semantic counterpart to the property that the
evaluation order for pairs does not affect program behaviour: commutative monads model commutative effects.
In
\Cref{sec:symmetry} we will define commutative pseudomonads by
replacing (\ref{eq:commutativity-defn}) with an invertible 2-cell, subject to coherence axioms.
  
\subsubsection{Monoidality}
Kock~\cite{Kock1970,Kock1972} showed that, for a
commutative monad $T$, the family of maps
\begin{equation}
  \label{eq:3}
 \chi_{A, B} : TA \otimes TB \longrightarrow T(A \otimes B)
\end{equation}
defined by either of the routes around (\ref{eq:commutativity-defn}) gives $T$ the structure of a \emph{monoidal}
monad; and that, conversely, given maps as in \eqref{eq:3} satisfying suitable equations we can
recover a commutative strength for $T$. In this paper we prove a general
2-categorical version of Kock's theorem (\Cref{res:monoidal-iff-commutative}):
pseudomonoidality of a pseudomonad corresponds to pseudocommutativity.

\subsubsection{Concurrency}
By moving to a 2-dimensional setting we can give a presentation of
concurrency. The starting observation is that a monoidal structure for
$T$ could be used to evaluate program fragments in parallel:
\[
P \parallel Q \ \ :=\ \  \Gamma \otimes \Delta \xra{P \otimes Q} TA
\otimes TB \overset{\chi_{A,B}}{\longrightarrow}{T(A \otimes B)}
\]
By Kock's theorem, this parallel evaluation is semantically indistinguishable from either of the two sequential executions: modelling concurrency in this way forces the effect to be commutative.

In a 2-dimensional category, however, we can weaken the notion of
monoidality to obtain a setting in which programs with \emph{non-commutative} effects can be evaluated in parallel, according to a 2-dimensional constraint:
\[\begin{tikzcd}[row sep=.8em]
	& {TA \otimes TB} \\
	{T(A \otimes TB)} && {T(TA \otimes B)} \\[1em]
	{TT(A \otimes B)} && {TT(A \otimes B)} \\
	& {T(A \otimes B)}
	\arrow["s"', from=1-2, to=2-1]
	\arrow["t", from=1-2, to=2-3]
	\arrow[""{name=0, anchor=center, inner sep=0}, "Tt"', from=2-1, to=3-1]
	\arrow["\mu"', from=3-1, to=4-2]
	\arrow[""{name=1, anchor=center, inner sep=0}, "Ts", from=2-3, to=3-3]
	\arrow["\mu", from=3-3, to=4-2]
	\arrow[""{name=2, anchor=center, inner sep=0}, "\chi", from=1-2, to=4-2]
	\arrow[shorten <=16pt, shorten >=16pt, Rightarrow, from=0, to=2]
	\arrow[shorten <=16pt, shorten >=16pt, Rightarrow, from=1, to=2]
\end{tikzcd}\]
The 2-cells above are not invertible in general, and do not make the pseudomonad commutative.
Replacing the equation~(\ref{eq:commutativity-defn}) by a pair of
non-invertible 2-cells, as above, corresponds to replacing the equation
$
	(P \concseq Q) \linseq\ (P' \concseq Q')  = (P \linseq P') \concseq (Q \linseq Q')
$	
relating sequential and parallel composition of processes by the \emph{weak interchange law} for parallel and sequental composition
\begin{equation} \label{eq:weak-interchange}
	(P \concseq Q) \linseq\ (P' \concseq Q')  \implies (P \linseq P') \concseq (Q \linseq Q')
\end{equation}
attributed to Hoare, M\"oller, Struth, and Wehrman \cite{Hoare2011}. 
This law is a basic feature of maps in
models of concurrency. Intuitively, the program on the left has more
dependencies---and so fewer possible traces---than the right one: see~\Cref{fig:introweak} for an illustration
with event structures (made formal in \Cref{sec:games}). 

The 2-categorical nature of the weak interchange law is already appreciated (see \cite{mellies2020concurrent}); in this paper we reframe it in the general context of 2-dimensional monad theory and computational effects.
We show that the appropriate monadic abstraction for modelling
the parallel execution of effectful programs is a particular class of lax monoidal pseudomonads, in
which certain structural 2-cells are not required to be invertible. 
These are a fully 2-dimensional generalisation of the concurrent monads of Rivas and Jaskelioff~\cite{Rivas2019}.
Accordingly, we call these \emph{concurrent pseudomonads}
(\Cref{def:concurrent-pseudomonad}).

Concurrent pseudomonads are always strong
(\Cref{res:concurrent-implies-bistrong}) and, as we explain, in the
Kleisli bicategory for a concurrent pseudomonad, the premonoidal
structure determines a lax functor $\otimes$ of two arguments (\Cref{res:concurrent-to-lax-monoidal}). 
This corresponds precisely to requiring a 2-cell as in~(\ref{eq:weak-interchange}).

\begin{figure}[t]
\[
\begin{minipage}{3cm}
	\centering
	$(a \concseq c) \linseq\: (b \concseq d)$ \\[2mm]
  \begin{tikzpicture}
\node[posnode] (4) at (4.8, 2.2) {};
	\node at (5.2, 2.2) {$c$};    
    \node[posnode] (5) at (4.8, 1) {};  
    \node at (5.2, 1) {$d$};    
    \node[posnode] (6) at (3.3, 2.2) {};
    \node at (2.9, 2.2) {$a$};          
    \node[posnode] (7) at (3.3, 1) {};
    \node at (2.9, 1) {$b$};

\draw [bend right=0, strat-causality] (4) to (5);
\draw [bend left=0, strat-causality] (6) to (7);
    \draw [bend right=5, strat-causality] (4) to (7);
    \draw [bend left=5, strat-causality] (6) to (5);
\end{tikzpicture}
 \end{minipage}
\begin{minipage}{1cm}
 \centering
 	 \vspace{1.4cm}
         \begin{tikzcd}[column sep=3.5em]
           {} \arrow[Rightarrow, thick]{r}{} & {}
           \end{tikzcd}
  \vspace{7mm}
  \end{minipage}
  \hspace{7mm}
\begin{minipage}{3cm}
	\centering
	$(a \linseq b) \concseq (c \linseq d)$ \\[2mm]
  \begin{tikzpicture}
\node[posnode] (4) at (4.8, 2.2) {};
	\node at (5.2, 2.2) {$c$};    
    \node[posnode] (5) at (4.8, 1) {};  
    \node at (5.2, 1) {$d$};    
    \node[posnode] (6) at (3.3, 2.2) {};
    \node at (2.9, 2.2) {$a$};          
    \node[posnode] (7) at (3.3, 1) {};
    \node at (2.9, 1) {$b$};              
\draw [bend right=0, strat-causality] (4) to (5);
\draw [bend left=0, strat-causality] (6) to (7);
\end{tikzpicture}  
 \end{minipage}
\]
\caption{The weak interchange law of sequential and parallel
  composition, as a map of event structures
  (see~\Cref{sec:games}).}
\label{fig:introweak}
\vspace{-3mm}
\end{figure}

\subsection{Outline}
\label{subsec:thispaper}

We begin with an introduction to bicategories and their basic theory 
	(\Cref{sec:examples-of-bicats,sec:bicat-theory}). 
We then introduce a new definition of strong pseudomonads 
	(\Cref{sec:strong-pseudomonads}),
and illustrate this 
with plenty of examples 
	(\Cref{sec:examples}).

We then turn to commutative and monoidal structure (\Cref{sec:symmetry}). We define monoidal pseudomonads and generalise Hyland \& Power's definition for commutative pseudomonads~\cite{Hyland2002}, then prove a version of Kock's theorem that the two are interchangeable (\Cref{res:monoidal-iff-commutative}).
We also explore the structure of the Kleisli bicategory for strong and commutative pseudomonads (\Cref{sec:premonoidal-Kleisli-bicats}). 

In \Cref{sec:concurrency} we introduce concurrent pseudomonads and show they are strong; we also observe their Kleisli bicategory does indeed model the weak interchange law~(\ref{eq:weak-interchange}). \Cref{sec:games} illustrates the key ideas with an extended example in concurrent game semantics.

Finally, in \Cref{sec:justify} we put the definitions in their proper mathematical context---namely, as internal pseudomonads---and establish a form of coherence result. Together, these give us confidence in the correctness of our definitions, especially the often-subtle question of how to choose coherence axioms on the 2-cells.

\section{Two examples of bicategories}
\label{sec:examples-of-bicats}

As an introduction to bicategories, we consider two illustrative examples.
First we look at a model based on spans. Spans occur widely in models of programming languages and computational processes (\eg~\cite{template-games,  Fiadeiro2007, Genovese2021, Baez2016}). 

\subsubsection*{\textsc{Example.} Spans of sets}
Consider a model in which objects are sets and a morphism from $A$ to
$B$ consists of a set $S$ and a span of functions 
	$A \leftarrow S \to B$. 
We can compose a pair of morphisms 
	$A \leftarrow S \to B$ 
and 
	$B \leftarrow R \to C$ 
using a pullback of functions:
\begin{equation*}
\label{eq:composition-of-spans}
\begin{tikzcd}[column sep=.3em, row sep=0.3em]
	&[1em] & {R \circ S} \\
	& S && R \\
	A && B &&[1em] C
	\arrow[from=2-2, to=3-1]
	\arrow[from=2-2, to=3-3]
	\arrow[from=2-4, to=3-3]
	\arrow[from=2-4, to=3-5]
	\arrow[from=1-3, to=2-2]
	\arrow[from=1-3, to=2-4]
	\arrow["\lrcorner"{anchor=center, pos=0.125, rotate=-45}, draw=none, from=1-3, to=3-3]
\end{tikzcd}
\end{equation*}
This correctly captures a notion of `plugging together' spans but is only 
associative in a weak sense: the two ways of taking pullbacks
\begin{equation}
	\label{eq:assoc-for-spans}
	\begin{tikzcd}[column sep=0.7em, row sep=0.7em, scalenodes=0.6]
	&&& \bullet \\
	&& \bullet \\
	& \bullet && \bullet && \bullet \\
	\bullet && \bullet && \bullet && \bullet
	\arrow[from=3-2, to=4-3]
	\arrow[from=3-4, to=4-3]
	\arrow[from=3-4, to=4-5]
	\arrow[from=2-3, to=3-2]
	\arrow[from=2-3, to=3-4]
	\arrow["\lrcorner"{anchor=center, pos=0.125, rotate=-45}, draw=none, from=2-3, to=4-3]
	\arrow[from=3-6, to=4-5]
	\arrow[from=1-4, to=2-3]
	\arrow[from=1-4, to=3-6]
	\arrow[from=3-6, to=4-7]
	\arrow["\lrcorner"{anchor=center, pos=0.125, rotate=-45}, draw=none, from=1-4, to=3-4]
	\arrow[from=3-2, to=4-1]
      \end{tikzcd}
\hspace{8mm}
\begin{tikzcd}[column sep=0.7em, row sep=0.7em, scalenodes=0.6]
	&&& \bullet \\
	&&&& \bullet \\
	& \bullet && \bullet && \bullet \\
	\bullet && \bullet && \bullet && \bullet
	\arrow[from=3-2, to=4-3]
	\arrow[from=3-4, to=4-3]
	\arrow[from=3-4, to=4-5]
	\arrow[from=3-6, to=4-5]
	\arrow[from=3-6, to=4-7]
	\arrow[from=3-2, to=4-1]
	\arrow[from=1-4, to=3-2]
	\arrow[from=2-5, to=3-4]
	\arrow[from=2-5, to=3-6]
	\arrow[from=1-4, to=2-5]
	\arrow["\lrcorner"{anchor=center, pos=0.125, rotate=-45}, draw=none, from=1-4, to=3-4]
	\arrow["\lrcorner"{anchor=center, pos=0.125, rotate=-45}, draw=none, from=2-5, to=4-5]
	\end{tikzcd}
    \end{equation}
are not generally equal, but they can be shown to be isomorphic by the universal property that
defines pullbacks. 
Similarly, the span 
    	$A \xleftarrow{\id} A \xrightarrow{\id} A$ 
   	is only a weak identity for composition, because pulling back along
    $\id$ only gives an isomorphic set. 
    
   To describe the laws of composition in this model, therefore, we require a notion of morphism between spans. 
If $S$
    and $S'$ are spans from $A$ to
    $B$, then a map between them is a function $\sigma : S \to S'$
    that commutes with the span legs on
    each side:
\[
     \begin{tikzcd}[column sep = 1.5em]
         \: & S & \: & {S'} \\
         A & \: & \: & \: &  B  
         \arrow[from=1-2, to=2-1]
         \arrow[from=1-4, to=2-1]
         \arrow[crossing over, from=1-2, to=2-5]
         \arrow[from=1-4, to=2-5]
         \arrow["\sigma", dashed, from=1-2, to=1-4]
       \end{tikzcd}
   \]
The two iterated composites in~(\ref{eq:assoc-for-spans}) are
isomorphic as spans, so composition of spans is associative up to
isomorphism. Similarly, the identity span is unital up to
isomorphism. Because these isomorphisms arise from a universal
property, they behave well together. 
Bicategories axiomatise this situation.

\begin{definition}[{\cite{Benabou1967}}]
A bicategory $\B$ consists of: 
\begin{itemize}
\item A collection of objects $A, B, \ldots$
\item For all objects $A$ and $B$, a collection of morphisms from
  $A$ to $B$, themselves related by morphisms: thus we have a \emph{hom-category} $\B(A, B)$ whose
  objects (typically denoted $f, g : A \to B$) are called \emph{1-cells}, 
  and whose morphisms (typically denoted $\sigma, \tau : f \To g$) are called
  \emph{2-cells}. The category structure means we can compose 2-cells between parallel 1-cells.
\item For all objects $A, B,$ and $C$, a composition functor 
     $
    	 \circ_{A, B, C} : \B(B, C) \times \B(A, B) \longrightarrow \B(A, C)
       $
      and, for all $A$, an identity 1-cell $\Id_A \in \B(A, A)$.
    \item Coherent structural 2-cells: since the composition of \mbox{1-cells} is weak,
we have a natural family of invertible 2-cells
    $\a_{f, g, h} : 
    	(f \circ g) \circ h \Longrightarrow f \circ (g \circ h)$ 
    instead of the usual associativity equation. 
    Similarly, we have natural families of invertible 2-cells
$\l_f : \Id_B \circ f \Longrightarrow f$ 
    and 
    $\r_f : f \circ \Id_A \Longrightarrow f$
instead of the left and right identity laws. 
These 2-cells must satisfy coherence axioms similar to those for a monoidal category. 
\end{itemize}
\end{definition}

To illustrate further we consider the $\Para$ construction, which is
a general way to build models of parametrized processes~\cite{Hermida2012,Fong2019}
	(see also~\cite{Capucci2022,Cruttwell2022,hugos-popl-paper}). 
In this bicategory, the 2-cells are reparametrizations, and the
weakness arises because we are tracking extra information.
We will use this bicategory several times, so we spell out the definition in detail.
	
\paragraph*{{\sc Example}: the $\Para$ construction.}
Starting from a monoidal category $(\catC, \otimes, I)$, the 
 bicategory $\Para(\catC)$ is defined as follows:
\begin{itemize}
\item The objects are those of $\catC$.
 \item A 1-cell from $A$ to $B$ is a parametrized
   $\catC$-morphism, defined as an object $P \in \catC$ together with a morphism 
   	$f : P \tens A \to B$ in $\catC$. The object
   $P$ is thought of as a space of parameters. 
  \item A 2-cell from $f : P \tens A \to B$ to $g : P' \tens A \to B$
     is a reparametrization
 map, \emph{i.e.}~a map $\sigma : P \to P'$ such that 
$g \circ (\sigma \tens A) = f$.
\end{itemize}
Composition of 1-cells is defined using the tensor product of
parameters: if $f : P \tens A \to B$ and $g : Q \otimes B \to C$,
then $g \circ f$ is the object $Q \otimes P$ equipped with the map
\[
  (Q \tens P) \tens A
  \xra{\cong}
   Q \tens (P \tens A)
   \xra{Q\tens f}
   Q \tens B
   \xra{g}
   C
   \]
where the first map is the associativity of the tensor product. 

If we also have $h : R \otimes C \to D$, then the 
two composites $(h \circ g) \circ f$ and $h \circ (g \circ f)$ 
have parameter spaces $(R \tens Q) \tens P$ and 
	$R \tens (Q \tens P)$, respectively. 
Because the tensor product in a monoidal category
is generally associative only up to isomorphism, 
these 1-cells are only isomorphic in $\Para(\catC)$. 
A similar argument applies to the identity laws, so
$\Para(\catC)$ is a bicategory with associativity and unit isomorphisms given by 
$\catC$'s monoidal structure.

\section{Pseudofunctors, pseudomonads, and monoidal bicategories}
\label{sec:bicat-theory}

Many concepts in category theory have corresponding versions for
bicategories. 
We first summarise the basic definitions of
pseudofunctors, pseudonatural transformations, and modifications
	(\Cref{sec:basic-notions}), then discuss the
bicategorical notions of monad
	(\Cref{sec:pseudomonads-and-kleisli-bicats})
and monoidal structure 
	(\Cref{sec:monoidal-bicategories})
needed for this paper. 
For reasons of space we only give a brief outline and omit the coherence axioms. For a full overview of the basic bicategorical definitions,
see~\cite{Leinster1998}; for the definition of (symmetric) monoidal bicategories, including many beautiful diagrams, see~\cite{Stay2016}.
Gentle introductions to the wider subject of bicategories include~\cite{Benabou1967, Johnson2021}; 
a more theoretical-computer science perspective is available 
	in~\cite{Power1995, Power1998}.

\subsection{Basic notions}
\label{sec:basic-notions}

Morphisms of bicategories are called pseudofunctors. Just as bicategories are categories `up to isomorphism', so pseudofunctors are functors `up to isomorphism'.

\begin{definition}
A \emph{pseudofunctor} $F : \B \to \C$ consists of:
\begin{itemize}
    \item A mapping $F : ob(\B) \to ob(\C)$ on objects;
    \item A functor $F_{A, B} : \B(A, B) \to \C(FA, FB)$ for each $A, B \in \B$;
    \item 
    	A \emph{unitor}
    		$\psi_A : \Id_{FA} \XRA\iso F(\Id_A)$
    	for each $A \in \B$;
    \item 
    	A \emph{compositor}
    		$\phi_{f,g} : F(f) \circ F(g) \XRA\iso F(f \circ g)$
    	for every composable pair of 1-cells $f$ and $g$,
    	natural in $f$ and $g$.
\end{itemize}
This data is subject to three axioms similar to those for strong
monoidal functors~(see~\eg~\cite{Leinster1998}). 
\end{definition}

We generally abuse notation by referring to a pseudofunctor $(F, \phi, \psi)$ simply as $F$; where there is no risk of confusion, we shall employ similar abuses for structure throughout. A pseudofunctor is called \emph{strict} if $\phi$ and $\psi$ are both the identity.

\begin{example}
	\label{ex:pseudofunctor-on-para}	
	Every endofunctor $F$ 
	on a monoidal category $\monoidal{\catC}$ 
	with a strength
		$t_{A,B} : A \tens F(B) \to F(A \tens B)$
	 (see~\eg~\cite{Kock1972})
	 determines a strict endo-pseudofunctor $\para{F}$ on $\Para(\catC)$. 
The action on objects is the same, and on 1-cells 
		$\para{F}(P \tens A \xra{f} B)$
	is the object $P$ together with the composite
		$\big( 
			P \tens FA \xra{t} F(P \tens A) \xra{Ff} FB
		\big)$.
\end{example}

Transformations between pseudofunctors are like natural transformations, except one must say in what sense naturality holds for each 1-cell.

\begin{definition}
  \label{def:psnat}
For pseudofunctors $F, G : \B \to \C$, a \emph{pseudonatural transformation} $\eta : F \To G$ consists of:
\begin{itemize}
\item 
  A 1-cell $\eta_A : FA \to GA$ for every $A \in \B$; 
\item 
  For every $f : A \to B$ in $\B$ an invertible 2-cell 
  \begin{equation}
    \label{eq:pseudonat}
  \begin{tikzcd}
    FA 
    \arrow{r}{Ff}
    \arrow{d}[swap]{\eta_A}
    &
    FB
    \arrow[phantom]{dl}[description]{\twocell{\cellOf{\eta}_f}}
    \arrow{d}{\eta_{B}}
    \\
    GA 
    \arrow{r}[swap]{Gf}
    & 
    GB
  \end{tikzcd}
\end{equation}
  natural in $f$ and satisfying identity and composition laws.
\end{itemize}
\end{definition}

\begin{example}
	\label{ex:pseudonat-on-para}
	Every natural transformation
		$\sigma : F \To F'$
	between strong endofunctors
		$(F, s)$
	and
		$(G, t)$
	which is compatible with the strengths
			(`strong natural transformation': see~\eg~\cite{McDermott2022}) 
determines a pseudonatural transformation 
		$\para\sigma : \para{F} \To \para{G}$
	on $\Para(\catC)$. 
Each component $(\para\sigma)_A$ is just $\para{\sigma_A}$, and for a 1-cell
		$f : P \tens A \to B$
	the 2-cell
		$\cellOf{\para{\sigma}}_f$
	witnessing naturality 
	is the canonical isomorphism
		${\tensu \tens P \xra{\iso} P \tens \tensu}$. 
\end{example}

Because bicategories have a second layer of structure,
there is also a notion of map between 
pseudonatural transformations.

\begin{definition}
	A \emph{modification} 
		$\m : \eta \mto \theta$
	between pseudonatural transformations 
		$F \To G : \B \to \C$ 
	consists of a 2-cell 
		$\m_B : \eta_B \To \theta_B$
	for every $B \in \B$, subject to an axiom expressing compatibility between 
	$\m$ and each $\cellOf{\eta}_f$ and $\cellOf{\theta}_f$.
\end{definition}

For any bicategories $\B$ and $\C$ there exists a bicategory
	$\homBicat{\B}{\C}$
with objects pseudofunctors, 1-cells pseudonatural transformations, and 2-cells modifications.

\subsection{Pseudomonads and Kleisli bicategories}
\label{sec:pseudomonads-and-kleisli-bicats}

The bicategorical correlate of a monad is a \emph{pseudomonad}. 

\begin{definition}[{\cite{Marmolejo1997}}]
A \emph{pseudomonad} on a bicategory $\baseCat$ consists of a 
	pseudofunctor $T : \baseCat \to \baseCat$
equipped with:
	\begin{itemize}
\item 
		\emph{Unit} and \emph{multiplication}
		pseudonatural transformations 
$\eta : \id \To T$ 
			and 
				$\mu : T^2 \Longrightarrow T$,
            where $T^2 = T \circ T$;
\item Invertible modifications $\m, \n, \p$ with
                  components 
\[\begin{tikzcd}[
		execute at end picture={
			\foreach \nom in  {A,...,D}
  				{\coordinate (\nom) at (\nom.center);}
			\fill[\monadmcolour,opacity=\opacity] 
  				(A) -- (B) -- (D) -- (C);
		}
	]
	|[alias=A]|
	{T^3A} 
	& 
	|[alias=B]|
	{T^2A} \\
	|[alias=C]|
	{T^2A}  & 
	|[alias=D]|
	TA
	\arrow["{\mu_{TA}}", from=1-1, to=1-2]
	\arrow[""{name=0, anchor=center, inner sep=0}, "{T\mu_A}"', from=1-1, to=2-1]
	\arrow["{\mu_A}"', from=2-1, to=2-2]
	\arrow[""{name=1, anchor=center, inner sep=0}, "{\mu_A}", from=1-2, to=2-2]
	\arrow["\oncell{\m_A}"{}, Rightarrow, from=0, to=1,shorten=15, yshift=-1mm]
\end{tikzcd}
\qquad
\begin{tikzcd}[
		execute at end picture={
			\foreach \nom in  {1u,1d, 2l, 2r}
  				{\coordinate (\nom) at (\nom.center);}
			\fill[\monadncolour,opacity=\opacity] 
  				(1u) to[bend right = 32] (2l) -- (1d);
  			\fill[\monadpcolour,opacity=\opacity] 
  			  	(1u) to[bend left = 32] (2r) -- (1d);
		}]
	& |[alias=1u]| TA \\
	|[alias=2l]| {T^2 A} & |[alias=1d]| TA & |[alias=2r]| {T^2 A}
	\arrow["{\eta_{TA}}"', from=1-2, to=2-1, bend right = 25]
	\arrow["{\mu_A}"', from=2-1, to=2-2]
	\arrow[""{name=0, anchor=center, inner sep=0}, Rightarrow, no head, from=1-2, to=2-2]
	\arrow["{T\eta_{A}}", from=1-2, to=2-3, bend left = 25]
	\arrow["{\mu_A}", from=2-3, to=2-2]
	\arrow["\oncell{\n_A}", Rightarrow, from=2-1, to=0, shorten = 8]
	\arrow["\oncell{\p_A}"{swap}, Rightarrow, from=2-3, to=0, shorten = 8]
\end{tikzcd}\]

replacing the usual monad laws, and satisfying two
                further coherence axioms.
	\end{itemize}
\end{definition}

A simple example is given by the Writer pseudomonad on $\Cat$, the bicategory with objects small categories, 1-cells functors, and 2-cells natural transformations. The structural isomorphisms $\a, \l$ and $\r$ are all the identity (giving a \emph{2-category}).

\begin{example}\label{ex:writer-on-cat}
Let $(\catC, \otimes, I)$ be a 
monoidal category. The pseudofunctor $(-) \times \catC : \Cat \to \Cat$
has a pseudomonad structure with 1-cell components 
	\begin{equation*}
\begin{aligned}
         & \eta_\catD\  =\  \catD \xra{\iso} \catD \times 1 \xra{\catD \times I} \catD \times \catC \\
	 & \mu_\catD \ = \ 	(\catD \times \catC) \times \catC  \xra{\iso} \catD
                                                     \times (\catC
                                                     \times \catC)
                                                     \xra{\catD \times
                                                     \otimes} \catD
                                                     \times \catC
	\end{aligned}
	\end{equation*}
      and 2-cell components $\m, \n$ and $\p$ given by the associator and unitors for the
      monoidal structure in $\catC$.
\end{example}

\begin{example}
	\label{ex:pseudomonad-on-para}
	Every strong monad $(T, \mu, \eta, t)$ on a monoidal category $\monoidal\catC$ determines a pseudomonad on $\Para(\catC)$: the underlying pseudofunctor is $\para{T}$ and the pseudonatural transformations are 
		$\para\mu$ and $\para\eta$ (recall \Cref{ex:pseudonat-on-para}).
	This remains true if the monoidal structure is replaced by an \emph{action} (as in~\eg~\cite{Pareigis1977}).
\end{example}

\subsection{Monoidal bicategories}
\label{sec:monoidal-bicategories}
\label{sec:para-construction}

A monoidal bicategory is a bicategory equipped with a unit object and a tensor product
which is only weakly associative and unital. 
To motivate the construction, we explain 
how a symmetric monoidal category $\monoidal{\catC}$ induces a
monoidal structure on $\Para(\catC)$, with the same action on
objects.

The idea is that
we can combine the parameters using $\otimes$.  For 1-cells
	$f : P \tens A \to B$ 
and
	$g: P' \tens A' \to B'$, 
we set $f \paratens g$ to be the object
	$P \otimes P'$ 
equipped with
\[
(P \otimes P') \tens (A \otimes A')  
    	\xra{\cong} 
    (P \tens A) \otimes (P' \otimes A')  
    	 \xra{f \otimes g} 
    B \otimes B'
\]
where the first map is defined using the symmetry of $\otimes$. 
On \mbox{2-cells}, we use the tensor product of maps in $\catC$.
This construction does not strictly preserve identities and
composition, but it does preserve them
up to isomorphism. 
Thus, we get a pseudofunctor
$
  \para\tens : \Para(\catC) \times \Para(\catC) \longrightarrow
  \Para(\catC).
 $

We examine the sense in which this tensor is associative and unital,
by lifting the structural isomorphisms from $\catC$.
Every map $f : A \to B$ in $\catC$ determines a 1-cell 
 	$\para{f}$
 in $\Para(\catC)$ given by the object $\tensu$ and the composite 
 	$(\tensu \tens A \xra{\iso} A \xra{f} B)$, 
 where $\iso$ is the unit isomorphism.
If $f$ has an inverse $f^{-1}$,  the composite $\para{f} \circ \para{f^{-1}}$
 has parameter $\tensu \tens \tensu$ and thus cannot be the
 identity. But it is isomorphic to the identity: the pair $(\para{f},
 \para{f^{-1}})$ is known as an \emph{equivalence} 
 (an `isomorphism up to isomorphism').  
Thus, although the tensor  $\tens$ on $\catC$ is associative and
unital up to isomorphism, the tensor $\paratens$ on $\Para{(\catC)}$
is only associative and unital up to equivalence. 
The structural 1-cells are all pseudonatural in a canonical way (\Cref{ex:pseudonat-on-para}).

Following the general pattern of ``bicategorification'', the triangle and pentagon axioms of a monoidal category now only hold up to isomorphism:  one route round the pentagon has three sides and the other has two, so one composite has parameter $\tensu^{\tens 3}$ and the other has parameter $\tensu^{\tens 2}$.
These are canonically isomorphic, so we get families of invertible \mbox{2-cells} witnessing the categorical axioms. All the structure we have defined so far has used the canonical isomorphisms of $\catC$, so these families are actually modifications on $\Para(\catC)$.
Moreover, by the axioms of a monoidal category, these structural modifications satisfy axioms of their own.

In summary, a monoidal bicategory is a bicategory equipped with an object $\tensu$, a pseudofunctor $\paratens$, pseudonatural families of equivalences witnessing the weak associativity and unitality of $\paratens$, and invertible modifications witnessing the axioms of a monoidal category. 
We now make this precise, starting with the definition of equivalences. 
These generalize equivalences of categories.

\begin{definition}
	\label{def:equivalence}
	An \emph{equivalence} between objects $A$ and $B$ in a
        bicategory $\B$ is a pair of 1-cells $f : A \to B$ and
        $\psinv{f} : B \to A$ together with invertible 1-cells
        $f \circ \psinv{f} \To \Id_B$ and $\Id_A \To \psinv{f} \circ f$.
\end{definition}
  
A \emph{pseudonatural equivalence} is a pseudonatural transformation in which each component has the structure of an equivalence.

The definition is now as advertised.  To state it, we introduce some notation for the 2-cell diagrams---known as \emph{pasting diagrams}---that we will use in the rest of the paper. 
  
  \begin{notation}
  \label{diagram-notation}
    To save space and improve readability, 
    \begin{itemize}
      \item 
      	We use juxtaposition for the tensor product, e.g.~$(AB)C$ means
        $(A \tens B) \tens C$;
      \item 
      	We omit the subscripts on the components of pseudonatural
        transformations and modifications, \eg~$\m$ instead of $\m_A$;
      \item 
      	We use a subscript notation for the action of a
        pseudofunctor $T$, \eg~$T_{AT_B}$ means $T(A \tens T(B))$.
      \item 
      	We write $\cong$ for any pseudonaturality 2-cell as in \eqref{eq:pseudonat}, and in equations we omit the arrows showing the directions of 2-cells. These labels can be inferred from the type. 
      \end{itemize}
  \end{notation}

\begin{figure*}[t]
\centering
	\[
\hspace{7mm}
\begin{tikzcd}
	[
execute at end picture={
					\foreach \nom in  {A,B,C, D,E}
		  				{\coordinate (\nom) at (\nom.center);}
					\fill[\monoidalpcolour,opacity=\opacity] 
		  				(A) -- (B) -- (C) -- (D) -- (E);
		}
		]
\alias{A}
	{\left((A  B)  C\right)}  D 
	\arrow{r}{\alpha}
	\arrow[swap]{d}{\alpha D}
	\arrow[	"{\mathfrak{p}\:}"{left},
					from=2-2, 
					to=1-2,  
					Rightarrow, 
					shorten = 2pt
					]
	&
	\alias{B}
	(AB) (CD)
	\arrow{r}{\alpha}
	&
	\alias{C}
	A{\left(B(C D) \right)}
	\\
\alias{E}
	{\left( A (BC) \right)}D
	\arrow[swap]{rr}{\alpha}
	&
	\:
	&
	\alias{D}
	A{\left((BC) D\right)}
	\arrow[swap]{u}{A\alpha}
\end{tikzcd}
\hspace{4mm}
\begin{tikzcd}[
	column sep = 1em,
	execute at end picture={
								\foreach \nom in  {A,B,C}
					  				{\coordinate (\nom) at (\nom.center);}
								\fill[\monoidalmcolour,opacity=\opacity] 
					  				(A) -- (B) -- (C);
	  }
	]
\: &
\alias{C}
{A  B}
&
\: 
\\
\alias{A}
{(A  I) B} 
\arrow{ur}{\rho B}
\arrow[swap]{rr}{\alpha}
&
\:
& 
\alias{B}
{A (I B)} 
\arrow[name=0]{ul}[swap]{A\lambda}
\arrow["{\mathfrak{m}\:}"{}, 
 				shift right=1, 
 				shorten <=6pt, 
 				shorten >=8pt, 
 				Rightarrow, 
 				from=2-2, 
 				to=1-2]
\end{tikzcd}
\hspace{3mm}
\begin{tikzcd}[
	column sep=2em,
	execute at end picture={
			\foreach \nom in  {A,B,C}
  				{\coordinate (\nom) at (\nom.center);}
			\fill[\monoidallcolour,opacity=\opacity] 
  				(A) -- (B) to[bend left = 36] (C);
		}
	]
\alias{A}
{(I A)  B} 
& 
\alias{B}
{I  (A  B)} \\
\alias{C}
{A B}
\arrow["\alpha", from=1-1, to=1-2]
\arrow[""{name=0, anchor=center, inner sep=0}, bend left = 25, "\lambda", from=1-2, to=2-1]
\arrow[""{name=1, anchor=center, inner sep=0}, "{\lambda  B}"', from=1-1, to=2-1]
\arrow["{\mathfrak{l}}"{}, 
	    					shift left=1,
	    					yshift = 1mm,
	    					shorten <=6pt, 
	    					shorten >=6pt, 
	    					Rightarrow, 
	    					from=1, 
	    					to=0]
\end{tikzcd}		
\hspace{2mm}
\begin{tikzcd}[
	column sep=2em,
	execute at end picture={
											\foreach \nom in  {A,B,C}
								  				{\coordinate (\nom) at (\nom.center);}
											\fill[\monoidalrcolour,opacity=\opacity] 
								  				(A) to[bend right=38] (B) -- (C);
				  }
]
\alias{A}
(AB)I 
\arrow{r}{\alpha}
\arrow[bend right =30, name=0]{dr}[swap]{\rho}
&
\alias{C}
A(BI)
\arrow{d}{A\rho}
\\
\:
\arrow["{\mathfrak{r}}",
			Rightarrow, 
			xshift=6mm,
			yshift=1mm,
			shorten <=14pt, 
			shorten >=14pt, 
			from=2-1, 
			to = 1-2]
&
\alias{B}
AB
\end{tikzcd}
\] 	\vspace{-3mm}
	\caption{The structural modifications of a monoidal bicategory}
	\label{fig:monoidal-bicat-modifications}
	\vspace{-2mm}
\end{figure*}

\begin{definition}[{\eg~\cite{Stay2016}}]
\label{def:monoidal-bicategory}
A \emph{monoidal bicategory} is a bicategory $\B$ equipped with a 
pseudofunctor $\otimes : \B \times \B \to \B$ and an object $I \in \B$, 
together with the following data: 
\begin{itemize}
    \item 
    	Pseudonatural equivalences $\alpha, \lambda$ and $\rho$ with 
    	components 
    		$\alpha_{A, B, C} : (A \otimes B) \otimes C \to A \otimes 
    		(B \otimes C)$
    		(the \emph{associator}),
    		$\lambda_A : I \otimes A \to A$,
    	and
    		$\rho_A : A \otimes I \to A$
    		(the \emph{unitors});
    \item 
    	Invertible modifications
    		$\mathfrak{p}, \mathfrak{l}, \mathfrak{m}$ and $\mathfrak{r}$
    	with components shown in \Cref{fig:monoidal-bicat-modifications},
    	subject to coherence axioms.
    \end{itemize}
    A \emph{symmetric} monoidal bicategory is a monoidal bicategory equipped with
    	\newcommand{\braidingR}{R}
    a pseudonatural equivalence $\braid$ with components
      $\braid_{A, B} : A \otimes B \to B \otimes A$, 
    called the \emph{braiding}, and invertible modifications governing the possible shufflings of three objects and expressing the
    symmetry of the braiding, subject to coherence axioms. 
  \end{definition}
  
  For example (see~\eg~\cite{Stay2016} for full details), the cartesian product on the category $\Set$ induces a
  monoidal structure on the bicategory $\Span(\Set)$
introduced in
	\Cref{sec:examples-of-bicats}. 
The pseudofunctor $\otimes$ is defined on objects as 
	$A \otimes A' = A \times A'$, and for spans 
		$A \leftarrow S \rightarrow B$ 
	and 
	$A' \leftarrow S' \rightarrow B'$ 
	we take the component-wise product to obtain 
  	$A \times A' \leftarrow S \times S' \rightarrow B \times B'$. 
  \if 
  The action of $\otimes$ on 2-cells is
  similar, and composition is preserved up to isomorphism because pullbacks commute with products.
To define the structural 1-cells,
  we use that any function $f : A \to B$ gives a span
  $A \xleftarrow{\id} A \xrightarrow{f} B$ from $A$ to $B$, and that if $f$
  is an isomorphism then this span is an equivalence. 
We then  obtain $\alpha, \lambda$ and  $\rho$ by lifting the associativity and unit
  isomorphisms for $\times$ in $\Set$.
  This lifting has a universal property, and the
structural 2-cells are induced in a canonical way. 
  \fi

We also record the outcome of our discussion above; this establishes a conjecture made in~\cite{Capucci2022}. 

\begin{example}
   	If $\monoidal{\catC}$ is a symmetric monoidal category, this lifts to a symmetric monoidal structure on 
   		$\Para(\catC)$.
\end{example}

 \paragraph*{General point.} 
 The  coherence axioms of a monoidal
 bicategory can be difficult to verify directly. 
However,
 in many cases of interest the monoidal structure is induced from a
 more fundamental construction, as in $\Span(\Set)$ above.
This gives a systematic method for constructing (symmetric) monoidal
 bicategories: see \cite{wester2019constructing}.

\subsection{Coherence theorems} 
\label{sec:coherence}

As we have seen, bicategorical structures involve considerable 
data and many equations.
Much of the difficulty, however, is tamed by various \emph{coherence theorems}. 
These generally show that any two parallel 2-cells built out of the
structural data are equal. 
Appropriate coherence theorems apply to bicategories~\cite{MacLane1985} 
pseudofunctors~\cite{Gurski2013}, 
	(symmetric) monoidal bicategories~\cite{Gordon1995,Gurski2013symmetricbicats}
	and pseudomonads~\cite{Lack2000}.

We rely heavily on the coherence of bicategories and pseudofunctors
when writing pasting diagrams of 2-cells: in particular we omit all compositors
and unitors for pseudofunctors, and 
ignore the weakness of 1-cell composition. 
Thus, strictly speaking our diagrams do not type-check, but coherence guarantees the resulting 2-cell is the same no matter how one fills in the structural details. This is standard practice; for precise justification see \eg ~\cite[\S2.2]{SchommerPries2009}.

 \section{Strong pseudomonads}
 \label{sec:strong-pseudomonads}
 We follow the categorical setting by first saying what it means for a pseudo\emph{functor} to be strong, then giving the additional data and axioms to make a pseudo\emph{monad} strong.

\subsection{Strong pseudofunctors}
\label{sec:strong-pseudofunctors}

For the moment we only consider strengths on the left. In all diagrams below we follow our \Cref{diagram-notation}.

\begin{definition}
Let $(\B, \otimes, I)$ be a monoidal bicategory. A \emph{left strength} for
a pseudofunctor $T : \B \to \B$ is a pseudonatural
transformation $t_{A, B} : A \otimes TB \to T(A \otimes B)$, equipped with
invertible modifications $\x$ and $\y$ expressing the
compatibility of $t$ with the left unitor and the associator:
\[
  \begin{tikzcd}[
  	execute at end picture={
  	      	  						\foreach \nom in  {A,B,C}
  	      	  			  				{\coordinate (\nom) at (\nom.center);}
  	      	  						\fill[\strongxcolour,opacity=\opacity] 
  	      	  			  				(A) to[bend right = 27] (C) -- (B);
  	      	  	}
  ]
  	\alias{A}
	{T_{IA}} & \alias{B} {IT_A} \\
	& \alias{C} {T_A}
	\arrow["t"', from=1-2, to=1-1]
	\arrow[""{name=0, anchor=center, inner sep=0}, "{T_\lambda}"', bend right = 20, 
						from=1-1, to=2-2]
	\arrow["\lambda", from=1-2, to=2-2]
	\arrow["\oncell{\x}", Rightarrow, pos=0.6, from=0, to=1-2, shorten = 6]
      \end{tikzcd}
\qquad
\begin{tikzcd}[
      		column sep = 2em,
      	  	execute at end picture={
      	  						\foreach \nom in  {A,B,C, D,E}
      	  			  				{\coordinate (\nom) at (\nom.center);}
      	  						\fill[\strongycolour,opacity=\opacity] 
      	  			  				(A) -- (B) -- (C) -- (D) -- (E);
      	  	}
      ]
     \alias{A}
	{(AB)T_C} && \alias{B} {T_{(AB)C}} \\
	\alias{E} {A(BT_C)} & \alias{D} {AT_{BC}} & \alias{C} {T_{A(BC)}}
	\arrow["t", from=1-1, to=1-3]
	\arrow[""{name=0, anchor=center, inner sep=0}, "\alpha"', from=1-1, to=2-1]
	\arrow["At"', from=2-1, to=2-2]
	\arrow["t"', from=2-2, to=2-3]
	\arrow[""{name=1, anchor=center, inner sep=0}, "{T_\alpha}", from=1-3, to=2-3]
	\arrow[Rightarrow, from=0, to=1, "\y", shorten = 30]
      \end{tikzcd}\]
These modifications must themselves be compatible with the
monoidal structure, as per the two axioms of \Cref{fig:coherence-axioms-for-strong-pseudofunctor}.
\end{definition}

A left strength for a pseudofunctor $T$
can be used to define a parametrised version of the functorial
action: for any map $\Gamma \otimes X \to Y$ we can now define a map
$\Gamma \otimes TX \to TY$. 
This suggests the following
	(recall \Cref{ex:pseudofunctor-on-para} and \Cref{ex:pseudonat-on-para}).

\begin{example}
	\label{ex:pseudofunctor-on-para-strong}
	If $(F, t)$ is a strong functor on a symmetric monoidal category 
		$\monoidal{\catC}$ (see~\eg~\cite{Kock1972,McDermott2022}), 
	then the induced pseudofunctor $\para{F}$ on $\Para(\catC)$ is also strong.
The pseudonatural transformation has components 
		$\para{t}_{A,B} := \para{t_{A,B}}$;
	this has parameter $\tensu$, so 
$\x$ and $\y$  are both of the form
		$\tensu^{\tens i} \xra\iso \tensu^{\tens j}$
	for $i, j \in \Nat$. 
\end{example}

\subsection{Strong pseudomonads}

If a strong pseudofunctor $T : \B \to \B$ is also a pseudomonad, then we must
ask for additional data to relate the strength and the monad
structure, and this data must be compatible with the modifications 
	$\x, \y$ we already have. 
	
\begin{figure*}[!ht]
\centering
\[
\hspace{-6mm}
\begin{tikzcd}[
  		scale cd=1, 
  		column sep=.5em,
  		execute at end picture={
  		      	  						\foreach \nom in  {A,B,C, D}
  		      	  			  				{\coordinate (\nom) at (\nom.center);}
  		      	  						\fill[\monoidalmcolour,opacity=\opacity] 
  		      	  			  				(A) -- (B) -- (C);
  		      	  			  			\fill[\strongxcolour,opacity=\opacity] 
  		      	  			  			  	(A) -- (C) -- (D);
  		 }
  	]
	\alias{A} {A(IT_B)} && \alias{B} {(AI)T_B} \\
	\alias{D} {AT_{IB}} & \alias{C} {AT_B} \\
	{T_{A(IB)}} && {T_{(AI)B}} \\
	& {T_{AB}}
	\arrow["\alpha"', from=1-3, to=1-1]
	\arrow["At"', from=1-1, to=2-1]
	\arrow["t"', from=2-1, to=3-1]
	\arrow["{\rho T_B}"{below, xshift=2mm}, from=1-3, to=2-2]
	\arrow["{AT_\lambda}"', from=2-1, to=2-2]
	\arrow[""{name=0, anchor=center, inner sep=0}, "t", from=2-2, to=4-2]
	\arrow[""{name=1, anchor=center, inner sep=0}, "t", from=1-3, to=3-3]
	\arrow["{T_{\rho B}}", from=3-3, to=4-2]
	\arrow["{T_{A\lambda}}"', from=3-1, to=4-2]
	\arrow[""{name=2, anchor=center, inner sep=0}, 
					"{\small A\lambda}"{above}, from=1-1, to=2-2]
	\arrow["\cong"{pos=0.2}, draw=none, from=3-1, to=0]
	\arrow["\cong"{description}, draw=none, from=0, to=1]
	\arrow["{ \oncell{\montrianglem} }"{description}, draw=none, from=2, to=1-3]
	\arrow["{\oncell{A\x}}"{yshift=-2mm, xshift = 2mm}, draw=none, from=2-1, to=2]
      \end{tikzcd}
\hspace{-2mm}
	=
	\hspace{-2mm}
\begin{tikzcd}[
	     		scale cd= 1, 
	     		column sep=0em,
	     		execute at end picture={
	  						\foreach \nom in  {A,B,C, D, E}
	  			  				{\coordinate (\nom) at (\nom.center);}
	  						\fill[\strongycolour,opacity=\opacity] 
	  			  				(A) -- (B) -- (C) -- (D);
	  			  			\fill[\monoidalmcolour,opacity=\opacity] 
	  			  			  	(C) -- (E) -- (D);
		   }
	]
	\alias{A} {A(IT_B)} && \alias{B} {(AI)T_B} \\
	{AT_{IB}} \\
	\alias{D} {T_{A(IB)}} && \alias{C} {T_{(AI)B}} \\
	& \alias{E} {T_{AB}}
	\arrow["\alpha"', from=1-3, to=1-1]
	\arrow["At"', from=1-1, to=2-1]
	\arrow["t"', from=2-1, to=3-1]
	\arrow[""{name=0, anchor=center, inner sep=0}, "t", from=1-3, to=3-3]
	\arrow[""{name=1, anchor=center, inner sep=0}, "{T_{\rho B}}", from=3-3, to=4-2]
	\arrow[""{name=2, anchor=center, inner sep=0}, "{T_{A\lambda}}"', from=3-1, to=4-2]
	\arrow["{T_\alpha}"{description}, from=3-3, to=3-1]
	\arrow["\oncell{\y}"{description, pos=0.4}, draw=none, from=2-1, to=0]
	\arrow["{ \oncell{T_\montrianglem} }"{description, yshift=2mm}, draw=none, from=2, to=1]
	\end{tikzcd}
\hspace{2mm}
\begin{tikzcd}[
	      		column sep=-1.5em, 
	      		row sep=1.1em, 
	      		scale cd=1,
	      		execute at end picture={
					\foreach \nom in  {A,B,C, D, E, X, Y, Z, P, Q}
		  				{\coordinate (\nom) at (\nom.center);}
					\fill[\strongycolour,opacity=\opacity]  (E) -- (D) -- (X) -- (Y) -- (E);
					\fill[\strongycolour,opacity=\opacity] (Y) -- (Z) -- (P) -- (Q);
		  			\fill[\monoidalpcolour,opacity=\opacity] (A) -- (B) -- (C) -- (D) -- (E) -- (A);
	      		}
	      	]
	  &[-1em] \alias{B} {A((BC)T_D)} &[-0.7em] &[-0.7em] \alias{C} {(A (BC))T_D} \\
		\alias{A} {A(B(CT_D))} &&&&[-1em] \alias{D} {((AB)C)T_D} \\
		\alias{P} {A(BT_{CD})} && \alias{E} {(AB)(CT_D)} \\
		{AT_{B(CD)}} && \alias{Q} {(AB)T_{CD}} \\
		\alias{Z} {T_{A(B(CD))}} &&&& \alias{X} {T_{((AB)C)D}} \\
		&& \alias{Y} {T_{(AB)(CD)}}
		\arrow["{\alpha T_D}"', from=2-5, to=1-4]
		\arrow["\alpha"', from=1-4, to=1-2]
		\arrow[""{name=0, anchor=center, inner sep=0}, "A\alpha"', from=1-2, to=2-1]
		\arrow[""{name=1, anchor=center, inner sep=0}, "\alpha"', from=3-3, to=2-1]
		\arrow[""{name=2, anchor=center, inner sep=0}, "\alpha"', from=2-5, to=3-3]
		\arrow["{A(Bt)}"', from=2-1, to=3-1]
		\arrow["At"', from=3-1, to=4-1]
		\arrow["{T_\alpha}", from=6-3, to=5-1]
		\arrow["{(AB)t}", from=3-3, to=4-3]
		\arrow[""{name=3, anchor=center, inner sep=0}, "t"', from=4-3, to=6-3]
		\arrow["{T_\alpha}", from=5-5, to=6-3]
		\arrow[""{name=4, anchor=center, inner sep=0}, "t", from=2-5, to=5-5]
		\arrow[""{name=5, anchor=center, inner sep=0}, "\alpha"'{swap}, from=4-3, to=3-1]
		\arrow["t"', from=4-1, to=5-1]
		\arrow["\oncell{ \y }"{description, pos=0.4}, draw=none, from=4-1, to=3]
		\arrow["\cong"{description}, draw=none, from=5, to=1]
		\arrow["\oncell{ \y }"{description}, draw=none, from=4-3, to=4]
		\arrow["\oncell{ \pentagonator }", draw=none, from=0, to=2]
	\end{tikzcd}
\hspace{-2mm}
	=
	\hspace{-2mm}
\begin{tikzcd}[
	 			scale cd=1,
	      column sep=-.8em, 
	      row sep=1.1em, 
	      every matrix/.append style={},
	         	execute at end picture={
	         					\foreach \nom in  {A,B,C, D, E, W, X, Y, Z, P}
	         		  				{\coordinate (\nom) at (\nom.center);}
	         					\fill[\strongycolour,opacity=\opacity]  (W) -- (X) -- (Y) -- (Z);
	         					\fill[\strongycolour,opacity=\opacity] (X) -- (B) -- (C) -- (P);
	         		  			\fill[\monoidalpcolour,opacity=\opacity] (A) -- (B) -- (C) -- (D) -- (E) -- (A);
	       	 }
	]
	& \alias{X} {A((BC)T_D)} &[-1.2em]&[-1.2em] \alias{P} {(A (BC))T_D} \\
		\alias{W} { A(B(CT_D))} &&&& {((AB)C)T_D} \\
		{A(BT_{CD})} & \alias{Y} {AT_{(BC)D}} \\
		\alias{Z} {AT_{B(CD)}} & \alias{B} {T_{A((BC)D)}} && \alias{C} {T_{(A(BC))D}} \\
		\alias{A} {T_{A(B(CD))}} &&&& \alias{D} {T_{((AB)C)D}} \\
		&& \alias{E} {T_{(AB)(CD)}}
		\arrow["{\alpha T_D}"', from=2-5, to=1-4]
		\arrow["\alpha"'{}, from=1-4, to=1-2]
		\arrow["A\alpha"', from=1-2, to=2-1]
		\arrow["{A(Bt)}"', from=2-1, to=3-1]
		\arrow["At"', from=3-1, to=4-1]
		\arrow["{T_\alpha}", from=6-3, to=5-1]
		\arrow["{T_\alpha}", from=5-5, to=6-3]
		\arrow[""{name=0, anchor=center, inner sep=0}, "t", from=2-5, to=5-5]
		\arrow["t"', from=4-1, to=5-1]
		\arrow["{T_{A\alpha}}", from=4-2, to=5-1]
		\arrow["{T_\alpha}"', from=4-4, to=4-2, shorten=-0.4em]
		\arrow["{T_{\alpha D}}", from=5-5, to=4-4]
		\arrow[""{name=1, anchor=center, inner sep=0}, "At", from=1-2, to=3-2]
		\arrow["t", from=3-2, to=4-2]
		\arrow["{AT_\alpha}"'{above}, from=3-2, to=4-1]
		\arrow["\iso"{yshift=2mm}, draw=none, shift left=2, from=4-1, to=4-2]
		\arrow[""{name=2, anchor=center, inner sep=0}, "t", from=1-4, to=4-4]
		\arrow["{ \oncell{T_\pentagonator} }", draw=none, 
						shorten <=36pt, from=5-1, to=5-5]
		\arrow["\oncell{ {A\y} }"{description, pos=0.3}, draw=none, from=3-1, to=1]
		\arrow["\cong"{description}, draw=none, from=2, to=0]
		\arrow["\oncell{ \y }"{description}, shift right=4, draw=none, from=1, to=2]
	\end{tikzcd}
\]

 \vspace{-3mm}
\caption{Coherence axioms for a strong pseudofunctor.}
\label{fig:coherence-axioms-for-strong-pseudofunctor}
\end{figure*}

\begin{definition}
\label{def:strong-pseudomonad}
Let $(\B, \otimes,
I)$ be a monoidal bicategory. A \emph{left strength} for a pseudomonad 
	$(T, \eta, \mu)$ 
consists of a left strength
	$(t, \x, \y)$ 
for the underlying pseudofunctor, together with 
invertible modifications
\[
	\begin{tikzcd}[
		execute at end picture={
				\foreach \nom in  {A,B,C, D}
	  				{\coordinate (\nom) at (\nom.center);}
				\fill[\strongwcolour,opacity=\opacity] 
		  				(A) -- (B) -- (C) -- (D);
		}
	]
    \alias{A} {A T^2_B}
    \arrow{rr}{A\mu}
    \arrow[swap]{d}{t} 
    & 
    \: 
    & 
    \alias{B} {A T_B} \\
\alias{D} {T_{AT_B}} 
   \arrow{r}[swap]{T_t}
   & 
   {T^2_{AB}} 
   \arrow{r}[swap]{\mu}  
   \arrow[Rightarrow, shorten = 2]{u}{\w\:} 
   & 
   \alias{C} {T_{AB}}
  	    \arrow["{t}", from=1-3, to=2-3]
 \end{tikzcd}
\hspace{6mm}
\begin{tikzcd}[
  	execute at end picture={
  					\foreach \nom in  {A,B,C, D, X, Y, Z}
  		  				{\coordinate (\nom) at (\nom.center);}
  					\fill[\strongzcolour,opacity=\opacity] 
  			  				(X) to[bend right=35] (Z) -- (Y);		  		
  			}
  	]
  	\alias{X} {A B} &&  \alias{Y} {A T_B} \\
  	    && \alias{Z} {T_{A  B}} 
  	    \arrow["{t}", from=1-3, to=2-3]
  	    \arrow[""{name=0, anchor=center, inner sep=0}, "{\eta}"', bend right, from=1-1, to=2-3]
  	    \arrow["{A \eta}", from=1-1, to=1-3]
  	    \arrow["{\z}", shorten <=8pt, shorten >=8pt, Rightarrow, from=1-3, to=0]
  	\end{tikzcd}
  \]
expressing the compatibility of $t$ with the pseudomonad
structure. This is subject to two axioms expressing compatibility with the monad structure and two axioms expressing compatibility for $\x$ with $\w$ and $\z$, respectively (\Cref{fig:other-coherence-axioms-for-strong-pseudomonad}), and two axioms expressing compatibility for $\y$ with $\w$ and $\z$, respectively (\Cref{fig:y-coherence-axioms-for-strong-pseudomonad}).
\end{definition}

\begin{figure}
		\vspace{-5mm}
\[
	\hspace{-4.5cm}
    \begin{minipage}{0.15\textwidth}
	\[ 	  
\begin{tikzcd}[
	 		 	scalenodes = .9,
		    	column sep = 2em, 
		    	row sep = 1em, 
		    	execute at end picture={
					\foreach \nom in  {A,B,C, D, X, Y, Z}
		  				{\coordinate (\nom) at (\nom.center);}
					\fill[\strongwcolour,opacity=\opacity] 
			  				(A) -- (B) -- (C) -- (D);
					\fill[\monadncolour,opacity=\opacity] 
			  				(A) -- (X) -- (B);	  				
		    	}
		    ]
			\alias{A} {AT^2_B} &  \alias{X} {AT_B} \\
			{T_{AT_B}} &  \alias{B} AT_B \\
			\alias{D} {T^2_{AB}} & \alias{C} {T_{AB}}
			\arrow["t"', from=1-1, to=2-1]
			\arrow["t",from=2-2, to=3-2]
			\arrow[""{name=0, anchor=center, inner sep=0}, "Tt"', from=2-1, to=3-1]
			\arrow["\mu"', from=3-1, to=3-2]
			\arrow["{A \mu}"{below}, from=1-1, to=2-2]
			\arrow["{A \eta}"', from=1-2, to=1-1]
			\arrow[""{name=1, anchor=center, inner sep=0}, Rightarrow, no head, from=1-2, to=2-2]
			\arrow["\oncell{\w}"{description, xshift=2mm}, draw=none, from=0, to=2-2]
			\arrow["\oncell{ {A\n} }"{description, xshift=1mm}, draw=none, from=1-1, to=1]
	\end{tikzcd}
	\]
	\vspace{-2mm}
	\[ \vertequals \]
	\vspace{-4mm}
	\[ 		   
	 \begin{tikzcd}[
	 				scalenodes = .9,
			    	column sep = 1.8em, 
			    	row sep = 1em,       
			      		 execute at end picture={
			      		 			\foreach \nom in  {A,B,C, D, X, Y, Z}
			      		   				{\coordinate (\nom) at (\nom.center);}
			      		 			\fill[\strongzcolour,opacity=\opacity] 
			      		 	  				(A) -- (B) -- (C);
			      		 			\fill[\monadncolour,opacity=\opacity] 
			      		 	  				(X) -- (Y) -- (Z);	  				
			      		   }
			      ]
				\alias{A} {AT^2_B} &  \alias{B} {AT_B} \\
				\alias{C} {T_{AT_B}} & \alias{X} {T_{AB}} \\
				\alias{Y} {T^2_{AB}} & \alias{Z} {T_{AB}}
				\arrow["t"', from=1-1, to=2-1]
				\arrow[""{name=0, anchor=center, inner sep=0}, Rightarrow, no head, from=2-2, to=3-2]
				\arrow["T_t"', from=2-1, to=3-1]
				\arrow["\mu"', from=3-1, to=3-2]
				\arrow["t", from=1-2, to=2-2]
				\arrow["\eta"'{yshift=-1mm}, from=2-2, to=3-1]
				\arrow[""{name=1, anchor=center, inner sep=0}, "\eta"{yshift=1mm}, from=1-2, to=2-1]
				\arrow["{A \eta}"', from=1-2, to=1-1]
				\arrow["{\iso}"{description}, draw=none, from=2-1, to=2-2]
				\arrow["\oncell{ \z}"{description, xshift=-1mm}, draw=none, from=1-1, to=1]
				\arrow["\oncell{ \n }"{description}, draw=none, from=3-1, to=0]
	\end{tikzcd}
	\]
    \end{minipage}
\hspace{0mm}
\begin{minipage}{0.1\textwidth}
\[
      \begin{tikzcd}[
      		scalenodes = .9,
      		row sep = 1em, 
      		column sep=1em,
      		execute at end picture={
      		 			\foreach \nom in  {A,B,C, D, X, Y, Z, P, Q}
      		   				{\coordinate (\nom) at (\nom.center);}
      		 			\fill[\strongwcolour,opacity=\opacity] (A) -- (B) -- (D) -- (C); 
      		 			\fill[\strongwcolour,opacity=\opacity] (B) -- (X) -- (Y) -- (Z);	  				
      		 			\fill[\monadmcolour,opacity=\opacity] (X) -- (Y) -- (P) -- (Q);	  				      		 	  		
      		  }
      	]
	\alias{A} {A T^3_B} & \alias{B} {A T^2_B} \\
	{T_{A T^2_B}} && \alias{Z} {A T_B} \\
	\alias{C} {T^2_{AT_B}} & \alias{D} {T_{A T_B}} \\
	\alias{Q} {T^3_{AB}} & \alias{X} {T^2_{A B}} \\
	& \alias{P} {T^2_{AB}} &  \alias{Y} {T_{AB}}
	\arrow["t"', from=1-1, to=2-1]
	\arrow["Tt"', from=2-1, to=3-1]
	\arrow[""{name=0, anchor=center, inner sep=0}, "{T^2t}"', from=3-1, to=4-1]
	\arrow[""{name=1, anchor=center, inner sep=0}, "T\mu"', from=4-1, to=5-2]
	\arrow["\mu"', from=5-2, to=5-3]
	\arrow[""{name=2, anchor=center, inner sep=0}, "\mu", from=4-2, to=5-3]
	\arrow["\mu"', from=4-1, to=4-2]
	\arrow["{A \mu}", from=1-1, to=1-2]
	\arrow["{A \mu}", from=1-2, to=2-3]
	\arrow[""{name=3, anchor=center, inner sep=0}, "t", from=2-3, to=5-3]
	\arrow[""{name=4, anchor=center, inner sep=0}, "t", from=1-2, to=3-2]
	\arrow["\mu", from=3-1, to=3-2]
	\arrow[""{name=5, anchor=center, inner sep=0}, "Tt", from=3-2, to=4-2]
	\arrow["\oncell{ \w }"{description, xshift=-2mm}, draw=none, from=2-1, to=4]
	\arrow["\cong"{description}, shorten <=13pt, draw=none, shorten >=13pt, Rightarrow, from=0, to=5]
	\arrow["\oncell{ \m }"{description, yshift=-1mm}, draw=none, shorten <=13pt, shorten >=13pt, Rightarrow, from=1, to=2]
	\arrow["\oncell{ \w }"{description, phantom}, draw=none, from=4, to=3]
      \end{tikzcd}
     \hspace{-1mm} = \hspace{-1mm} 
      \begin{tikzcd}[
      	     scalenodes = .95,
      		row sep = 1em, 
      		column sep=1.2em,
      	execute at end picture={
      	      		 			\foreach \nom in  {A,B,C, D, X, Y, Z, P, Q, R, S, M}
      	      		   				{\coordinate (\nom) at (\nom.center);}
      	      		 			\fill[\strongwcolour,opacity=\opacity] (R) -- (S) -- (A) -- (M); 
      	      		 			\fill[\strongwcolour,opacity=\opacity] (Q) -- (P) -- (B) -- (A);	  				
      	      		 			\fill[\monadmcolour,opacity=\opacity] (X) -- (Y) -- (P) -- (Q);	  				      		 	  		
      	      		  }
      	 ]
	\alias{X} {A T^3_B} 
	& \alias{Y}  {A T^2_B} \\
	\alias{R} {T_{A T^2_B}} & \alias{Q} {AT^2_B} & \alias{P} {A T_B} \\
	{T^2_{AT_B}} & \alias{S} {T_{AT_B}} \\
	\alias{M} {T^3_{AB}} \\
	& \alias{A} {T^2_{AB}} & \alias{B} {T_{AB}}
	\arrow["t"', from=1-1, to=2-1]
	\arrow["Tt"', from=2-1, to=3-1]
	\arrow["{T^2t}"', from=3-1, to=4-1]
	\arrow["T\mu"', from=4-1, to=5-2]
	\arrow["\mu"', from=5-2, to=5-3]
	\arrow["{A \mu}", from=1-1, to=1-2]
	\arrow[""{name=0, anchor=center, inner sep=0}, "{A \mu}", from=1-2, to=2-3]
	\arrow[""{name=1, anchor=center, inner sep=0}, "t", from=2-3, to=5-3]
	\arrow[""{name=2, anchor=center, inner sep=0}, 
		"{\scriptscriptstyle AT_\mu}"{below, xshift=-2mm, yshift=1mm}, from=1-1, to=2-2]
	\arrow["A\mu"', from=2-2, to=2-3]
	\arrow["t", from=2-2, to=3-2]
	\arrow[""{name=3, anchor=center, inner sep=0}, "{T_t}", from=3-2, to=5-2]
	\arrow["{\scriptscriptstyle T_{A\mu}}"{below, xshift=-2mm, yshift=1mm}, from=2-1, to=3-2]
	\arrow["\cong"{description}, draw=none, from=2-1, to=2-2]
	\arrow["\oncell{ {A\m}}"{description}, draw=none, from=2, to=0]
	\arrow["{ \oncell{T_{\w}}}"{description, pos=0.4}, draw=none, from=3-1, to=3]
	\arrow["\oncell{ \w }"{description}, draw=none, from=3, to=1]
      \end{tikzcd}
      \]
     \end{minipage}
\]
\[
		\hspace{-4mm}
\begin{tikzcd}[
	            		scalenodes = .9,
	            			column sep = 2em, 
	            			row sep = 1em, 
	            		execute at end picture={
	             			\foreach \nom in  {A,B,C, D, X, Y, Z}
	               				{\coordinate (\nom) at (\nom.center);}
	             			\fill[\strongxcolour,opacity=\opacity]
	             	  				(A) -- (B) -- (C); 
	             			\fill[\strongxcolour,opacity=\opacity]
	             	  				(A) -- (X) -- (C);      	  		
	             	  	}
	            	]
	      &[-1em] \alias{A} {T_{IT_A}} &[-1em] \alias{X} {IT^2_A} \\
	        	\alias{B} {T^2_{IA}} && {IT_A} \\
	        	& \alias{C} {T^2_A} & {T_A}
	        	\arrow["{T_t}"', from=1-2, to=2-1]
	        	\arrow["{T^2_\lambda}"', from=2-1, to=3-2]
	        	\arrow["\mu"', from=3-2, to=3-3]
	        	\arrow["t"', from=1-3, to=1-2]
	        	\arrow["I\mu", from=1-3, to=2-3]
	        	\arrow[""{name=0, anchor=center, inner sep=0}, "\lambda", from=2-3, to=3-3]
	        	\arrow[""{name=1, anchor=center, inner sep=0}, "{T_\lambda}"{description}, from=1-2, to=3-2]
	        	\arrow[""{name=2, anchor=center, inner sep=0}, "\lambda"{description}, from=1-3, to=3-2]
	        	\arrow["\oncell{ {T_{\x}} }"{xshift=-1mm, description}, draw=none, from=2-1, to=1]
	        	\arrow["\oncell{ \x }"{description}, draw=none, from=1-2, to=2]
	        	\arrow["\cong"{description, yshift=-3mm, xshift=-1mm}, draw=none, from=2, to=0]
	        \end{tikzcd}
\hspace{-1mm}= \hspace{-1mm}
\begin{tikzcd}[
	        					scalenodes = .9,
	              				column sep = 2em, 
	              				row sep = 1em, 
	                  		execute at end picture={
	                   			\foreach \nom in  {A,B,C, D, X, Y, Z}
	                     				{\coordinate (\nom) at (\nom.center);}
	                   			\fill[\strongwcolour,opacity=\opacity]
	                   	  				(X) -- (Y) -- (B) -- (Z); 
	                   			\fill[\strongxcolour,opacity=\opacity]
	                   	  				(A) -- (B) -- (C);      	  		
	                   	  	}
	         ]
	        	&[-1em] \alias{X} T_{IT_A} &[-0.5em] \alias{Y} {IT^2_A} \\
	        	\alias{Z} {T^2_{IA}} & \alias{A} {T_{AI}} & \alias{B} {IT_A} \\
	        	& {T^2_A} & \alias{C} {T_A}
	        	\arrow[""{name=0, anchor=center, inner sep=0}, "{T_t}"', from=1-2, to=2-1]
	        	\arrow["{T^2_\lambda}"', from=2-1, to=3-2]
	        	\arrow["\mu"', from=3-2, to=3-3]
	        	\arrow["t"'{yshift=-.5mm}, from=1-3, to=1-2]
	        	\arrow[""{name=1, anchor=center, inner sep=0}, "I\mu", from=1-3, to=2-3]
	        	\arrow["\lambda", from=2-3, to=3-3]
	        	\arrow["\mu"{yshift=-.5mm}, from=2-1, to=2-2]
	        	\arrow["t"{above}, from=2-3, to=2-2]
	        	\arrow[""{name=2, anchor=center, inner sep=0}, 
	        					"{T_\lambda}"'{near end, below, xshift=-1mm, yshift=1mm}, 
	        					from=2-2, to=3-3]
	        	\arrow["\cong"{description}, draw=none, from=3-2, to=2-2]
	        	\arrow["\oncell{ \w }"{yshift=-2mm}, draw=none, from=0, to=1]
	        	\arrow["\oncell{ \x }"{description, xshift=1mm, yshift=.5mm}, 
	        					draw=none, from=2, to=2-3]
	        \end{tikzcd}
\hspace{2mm}
\begin{tikzcd}[
	    		scalenodes = .9,
	    		column sep = 1em, 
	    		row sep = 1.8em, 
	    		execute at end picture={
	     			\foreach \nom in  {A,B,C, D, X, Y, Z}
	       				{\coordinate (\nom) at (\nom.center);}
	     			\fill[\strongzcolour,opacity=\opacity]
	     	  				(A) -- (B) -- (C); 
	     	  	}
	    	]
	    	\alias{A} {IT_A} & \alias{B} IA \\
	    	&  A \\
	    	\alias{C} {T_{IA}} & TA
	    	\arrow["I\eta"', from=1-2, to=1-1]
	    	\arrow["t"', from=1-1, to=3-1]
	    	\arrow["T\lambda"', from=3-1, to=3-2]
	    	\arrow["\lambda", from=1-2, to=2-2]
	    	\arrow[""{name=0, anchor=center, inner sep=0}, "\eta", from=2-2, to=3-2]
	    	\arrow[""{name=1, anchor=center, inner sep=0}, "\eta"{description}, from=1-2, to=3-1]
	    	\arrow["\oncell{ \z }"{description}, draw=none, from=1-1, to=1]
	    	\arrow["\cong"{description, yshift=-3mm}, draw=none, from=1, to=0]
	       \end{tikzcd}
\hspace{-1mm}= \hspace{-1mm}
\begin{tikzcd}[
	     			scalenodes = .9,
	     			column sep = 1em, 
	     			row sep = 1.8em, 
	     		execute at end picture={
	      			\foreach \nom in  {A,B,C, D, X, Y, Z}
	        				{\coordinate (\nom) at (\nom.center);}
	      			\fill[\strongxcolour,opacity=\opacity]
	      	  				(A) -- (B) -- (C); 
	    	  }
	     	]
	     	\alias{A} {IT_A} & IA \\
	    	& A \\
	    	\alias{C} {T_{IA}} & \alias{B} TA
	    	\arrow["I\eta"', from=1-2, to=1-1]
	    	\arrow["t"', from=1-1, to=3-1]
	    	\arrow["T\lambda"', from=3-1, to=3-2]
	    	\arrow["\lambda", from=1-2, to=2-2]
	    	\arrow["\eta", from=2-2, to=3-2]
	    	\arrow[""{name=0, anchor=center, inner sep=0}, "\lambda"{description}, from=1-1, to=3-2]
	    	\arrow["\oncell{ \x }"{description}, draw=none, from=3-1, to=0]
	    	\arrow["\cong"{description}, draw=none, from=0, to=1-2]
	       \end{tikzcd}	     
\]	
\vspace{-3mm}
\caption{Coherence axioms for a strong pseudomonad: compatibility with the pseudomonad structure, and relating $\x$ with $\z$ and~$\w$.}
\label{fig:other-coherence-axioms-for-strong-pseudomonad}
\vspace{-2mm}
\end{figure}

\begin{figure*}
\centering
\vspace{-3.5mm}

\[
\begin{tikzcd}[
			scalenodes = .9,
			row sep=1em,
			column sep = 1em,
			execute at end picture={
	    			\foreach \nom in  {A,B,C, D, X, Y, Z, P, Q, R}
	      				{\coordinate (\nom) at (\nom.center);}
	    			\fill[\strongwcolour,opacity=\opacity] (A) -- (B) -- (C) -- (D); 
	    			\fill[\strongycolour,opacity=\opacity] (A) -- (D) -- (X) -- (Y);      	  		
	    			\fill[\strongycolour,opacity=\opacity] (D) -- (P) -- (Q) -- (R);      	  		
			 }
		]
		\alias{A} {(A B) T^2_C} &&& \alias{B} {(AB)T_C} \\
		{A(BT^2_C)} & \alias{D} {T_{(AB)T_C}} & \alias{P} {T^2_{(AB) C}} & \alias{C} {T_{(AB)C}} \\
		\alias{Y} {AT_{BT_C}} & \alias{X} {T_{A(B T_C)}} \\
		{AT^2_{B C}} &  \alias{R} {T_{AT_{B C}}} & \alias{Q} {T^2_{A (B C)}} & {T_{A (B C)}}
		\arrow["{(A B) \mu}", from=1-1, to=1-4]
		\arrow["t", from=1-4, to=2-4]
		\arrow[""{name=0, anchor=center, inner sep=0}, "{T_\alpha}", from=2-4, to=4-4]
		\arrow["\alpha"', from=1-1, to=2-1]
		\arrow[""{name=1, anchor=center, inner sep=0}, "{A t}"', from=2-1, to=3-1]
		\arrow[""{name=2, anchor=center, inner sep=0}, "{A T_t}"', from=3-1, to=4-1]
		\arrow["t"', from=4-1, to=4-2]
		\arrow["{T_t}"', from=4-2, to=4-3]
		\arrow["\mu"', from=4-3, to=4-4]
		\arrow["t"{}, from=3-1, to=3-2]
		\arrow[""{name=3, anchor=center, inner sep=0}, "{T_{A t}}", from=3-2, to=4-2]
		\arrow["t", from=1-1, to=2-2]
		\arrow[""{name=4, anchor=center, inner sep=0}, "T_\alpha", from=2-2, to=3-2]
		\arrow["Tt"', from=2-2, to=2-3]
		\arrow[""{name=5, anchor=center, inner sep=0}, "\mu"{below}, from=2-3, to=2-4]
		\arrow[""{name=6, anchor=center, inner sep=0}, "{T^2_\alpha}"', from=2-3, to=4-3]
		\arrow["\oncell{ \y }"{description}, draw=none, from=1, to=4]
		\arrow["\cong"{description}, draw=none, from=2, to=3]
		\arrow["\cong"{description}, draw=none, from=6, to=0]
		\arrow["\oncell{ \w }"{description}, draw=none, from=1-1, to=5]
		\arrow["\oncell{ {T_{\y}} }"{description, yshift=-1mm}, draw=none, from=4, to=6]
	      \end{tikzcd}
\hspace{-1mm}
	=
	\hspace{-1mm}
\begin{tikzcd}[
	      		scalenodes = .9,
	      		row sep=1em,
	      		column sep = 1em, 
				execute at end picture={
		    			\foreach \nom in  {A,B,C, D, X, Y, Z, P, Q, R}
		      				{\coordinate (\nom) at (\nom.center);}
		    			\fill[\strongwcolour,opacity=\opacity] (A) -- (B) -- (C) -- (D); 
		    			\fill[\strongwcolour,opacity=\opacity] (C) -- (D) -- (X);      	  		
		    			\fill[\strongycolour,opacity=\opacity] (B) -- (C) -- (X) -- (Y);      	  		
			}
	     ]
	     {(A B) T^2_C} &&& \alias{Y} {(AB)T_C} \\
		\alias{A} {A(BT^2_C)} && \alias{B} {A(BT_C)} & {T_{(AB)C}} \\
		{AT_{BT_C}} && \alias{C} {AT_{BC}} \\
		\alias{D} {AT^2_{B C}} & {T_{AT_{B C}}} & {T^2_{A (B C)}} & \alias{X} {T_{A (B C)}}
		\arrow[""{name=0, anchor=center, inner sep=0}, "{(A B) \mu}", from=1-1, to=1-4]
		\arrow["t", from=1-4, to=2-4]
		\arrow["{T_\alpha}", from=2-4, to=4-4]
		\arrow["\alpha"', from=1-1, to=2-1]
		\arrow["{A t}"', from=2-1, to=3-1]
		\arrow["{A T_t}"', from=3-1, to=4-1]
		\arrow["t"', from=4-1, to=4-2]
		\arrow["{T_t}"', from=4-2, to=4-3]
		\arrow["\mu"', from=4-3, to=4-4]
		\arrow[""{name=1, anchor=center, inner sep=0}, "{A(B\mu)}", from=2-1, to=2-3]
		\arrow["\alpha"', from=1-4, to=2-3]
		\arrow[""{name=2, anchor=center, inner sep=0}, "At"', from=2-3, to=3-3]
		\arrow[""{name=3, anchor=center, inner sep=0}, "t", from=3-3, to=4-4]
		\arrow[""{name=4, anchor=center, inner sep=0}, "A\mu"{yshift=-.5mm}, from=4-1, to=3-3]
		\arrow["\oncell{ \y }"{description}, draw=none, from=3-3, to=2-4]
		\arrow["\cong"{description, xshift=2mm}, draw=none, from=0, to=1]
		\arrow["\oncell{ {A\w} }"{description, yshift=1mm}, draw=none, from=3-1, to=2]
		\arrow["\oncell{ \w }"{description}, draw=none, from=4, to=3]
\end{tikzcd}
\hspace{2mm}
\begin{tikzcd}[
				      	column sep =1em,
				      	row sep = 1em,
				     	execute at end picture={
							\foreach \nom in  {A,B,C, D, X, Y, Z}
					 				{\coordinate (\nom) at (\nom.center);} 
							\fill[\strongzcolour,opacity=\opacity] (A) -- (B) -- (C);
							\fill[\strongzcolour,opacity=\opacity] (B) -- (C) -- (X);      	  		
				  	}
	 ]
    {(AB)T_C} && {(AB)C} \\
					\alias{A} {A (BT_C)} && \alias{B} {A (BC)} \\
					\\
					\alias{C} {A T_{BC}} && \alias{X} {T_{A(BC)}}
					\arrow["{(AB)\eta}"', from=1-3, to=1-1]
					\arrow[""{name=0, anchor=center, inner sep=0}, "\alpha"', from=1-1, to=2-1]
					\arrow[""{name=1, anchor=center, inner sep=0}, "{A t}"', from=2-1, to=4-1]
					\arrow["t"', from=4-1, to=4-3]
					\arrow[""{name=2, anchor=center, inner sep=0}, "\alpha", from=1-3, to=2-3]
					\arrow["{A(B\eta)}"', from=2-3, to=2-1]
					\arrow[""{name=3, anchor=center, inner sep=0}, "\eta", from=2-3, to=4-3]
					\arrow[""{name=4, anchor=center, inner sep=0}, "A\eta"{description}, from=2-3, to=4-1]
					\arrow["\oncell{ {A \z} }"{description, yshift=-.5mm}, shift left=3, draw=none, from=1, to=4]
					\arrow["\oncell{\z}"{description,yshift=1mm}, shift right=4, draw=none, from=4, to=3]
					\arrow["\cong"{description, yshift=1mm}, draw=none, from=0, to=2]
	  \end{tikzcd}
\hspace{-1mm}
			=
		\hspace{-1mm}
\begin{tikzcd}[
		      	column sep =.3em,
		      	row sep = 1em,
		      	execute at end picture={
		      	           			\foreach \nom in  {A,B,C, D, X, Y, Z}
		      	             				{\coordinate (\nom) at (\nom.center);}
		      	           			\fill[\strongycolour,opacity=\opacity] (A) -- (B) -- (X) -- (Y); 
		      	           			\fill[\strongzcolour,opacity=\opacity] (A) -- (B) -- (C);      	  		
		      	           	  	}
		 ]
			\alias{A} {(AB)T_C} && \alias{C} {(AB)C} \\
 			{A (BT_C)} & \alias{B} {T_{(AB)C}} & {A (BC)} \\
			\\
			\alias{Y} {A T_{BC}} && \alias{X} {T_{A(BC)}}
			\arrow["{(AB)\eta}"', from=1-3, to=1-1]
			\arrow["\alpha"', from=1-1, to=2-1]
			\arrow[""{name=0, anchor=center, inner sep=0}, "{A t}"', from=2-1, to=4-1]
			\arrow["t"', from=4-1, to=4-3]
			\arrow["\alpha", from=1-3, to=2-3]
			\arrow["\eta", from=2-3, to=4-3]
			\arrow["t"{description}, from=1-1, to=2-2]
			\arrow[""{name=1, anchor=center, inner sep=0}, "\eta"{description}, from=1-3, to=2-2]
			\arrow[""{name=2, anchor=center, inner sep=0}, "T\alpha"{description}, from=2-2, to=4-3]
			\arrow["\oncell{ \z }"{description, yshift=-1mm}, draw=none, from=1-1, to=1]
			\arrow["\oncell{ \y }"{yshift=-2mm}, draw=none, from=0, to=2]
			\arrow["\cong"{description, yshift=-1mm}, draw=none, from=2, to=1]
		\end{tikzcd}
\]
 \vspace{-3.5mm}
\caption{Coherence axioms for a strong pseudomonad: relating $\y$ with $\z$ and $\w$.}
\label{fig:y-coherence-axioms-for-strong-pseudomonad}
\end{figure*}

Extending \Cref{ex:pseudomonad-on-para} and
\Cref{ex:pseudofunctor-on-para-strong}, we obtain the following.
The definitions of $\w$ and $\z$ are similar to those for $\x$ and $\y$.

\begin{example}
	\label{ex:strong-pseudomonad-on-para}
	A strong monad on a symmetric monoidal category $\monoidal{\catC}$ determines a strong pseudomonad on $\Para(\catC)$.
\end{example}

\subsubsection{Note on related work.}
\label{sec:redundancy}
Strengths for pseudomonads were first defined by
Tanaka~\cite{Tanaka2004thesis,Tanaka2006journal} for applications in
categorical universal algebra. We improve on this definition in
several ways. We make conceptual progress by cleanly separating strong
pseudofunctors from strong pseudomonads. We also show two natural axioms are redundant, and hence that only eight axioms suffice for a coherent definition
(\Cref{lem:redundancy} below). Finally, in \Cref{sec:justify} we
bring a new perspective on pseudostrengths in terms of higher monoidal
actions (\cf~\cite{Gambino2021formaltheory}).

In more recent related work, Slattery~\cite{Slattery2023TAC} defines strong
(relative) 2-monads via 2-multicategories. An investigation in this
direction is important but seems orthogonal to the work presented here.

\begin{restatable}{lemma}{redundancy}
  \label{lem:redundancy}
\begin{enumerate}
  \item Given the axioms of
    \Cref{def:strong-pseudomonad}, 
    the modifications $\x$ and $\y$ are suitably compatible with the
    monoidal modification $\mathfrak{l}$.
  \item Given the axioms of
    \Cref{def:strong-pseudomonad},
    the modifications $\z$ and $\w$ are suitably compatible with the
    monad modification $\p$.
  \end{enumerate}
\end{restatable}

Precisely, the two redundant compatibility axioms are as follows: 
\[
\begin{tikzcd}[
    	scale cd= 1, 
    	column sep=1em,
    	execute at end picture={
    		  						\foreach \nom in  {A,B,C, X, Y, Z}
    		  			  				{\coordinate (\nom) at (\nom.center);}
    		  						\fill[\strongxcolour,opacity=\opacity] 
    		  			  				(X) to[bend left = 12] (Y) -- (Z);
    		  			  			\fill[\monoidallcolour,opacity=\opacity] 
    		  			  			  	(A) -- (B) -- (C);
    	}
    ]
	\alias{A} {I(AT_B)} && \alias{B} {(IA)T_B} \\
	\alias{X} {IT_{AB}} & \alias{C} {AT_B} \\
	\alias{Z} {T_{I(AB)}} && {T_{(IA)B}} \\
	& \alias{Y} {T_{AB}}
	\arrow["\alpha"', from=1-3, to=1-1]
	\arrow["It"', from=1-1, to=2-1]
	\arrow["t"', from=2-1, to=3-1]
	\arrow[""{name=0, anchor=center, inner sep=0}, "t", from=1-3, to=3-3]
	\arrow["{T_{\lambda B}}", from=3-3, to=4-2]
	\arrow["{T_{\lambda}}"', from=3-1, to=4-2]
	\arrow[""{name=1, anchor=center, inner sep=0}, "\lambda"{below}', from=1-1, to=2-2]
	\arrow[""{name=2, anchor=center, inner sep=0}, "t"'{swap}, from=2-2, to=4-2]
	\arrow["{\lambda T_B}"{below, xshift=2mm}, from=1-3, to=2-2]
	\arrow[""{name=3, anchor=center, inner sep=0}, 
						"\lambda"{near start, above}, 					
						curve={height=-6pt}, from=2-1, to=4-2]
	\arrow["{ \oncell{\x} }"{description, pos=-0.2}, draw=none, from=3-1, to=3]
	\arrow["\cong"{description}, draw=none, from=2, to=0]
	\arrow["\cong"{description, yshift =1mm}, draw=none, from=3, to=1]
	\arrow["{ \oncell{\montrianglel}  }"{description}, draw=none, from=1, to=1-3]
      \end{tikzcd}
      =
      \begin{tikzcd}[
      		scale cd = 1, 
      		column sep=1em,
	      	execute at end picture={
	 			\foreach \nom in  {A,B,C, D}
	   				{\coordinate (\nom) at (\nom.center);}
	 			\fill[\strongycolour,opacity=\opacity] 
	 	  				(A) -- (B) -- (C) -- (D);			
 	   }
		]
	\alias{A} {I(AT_B)} && \alias{B} {(IA)T_B} \\
	{IT_{AB}} \\
	\alias{D} {T_{I(AB)}} && \alias{C} {T_{(IA)B}} \\
	& \alias{E} {T_{AB}}
	\arrow["\alpha"', from=1-3, to=1-1]
	\arrow["It"', from=1-1, to=2-1]
	\arrow["t"', from=2-1, to=3-1]
	\arrow[""{name=0, anchor=center, inner sep=0}, "t", from=1-3, to=3-3]
	\arrow[""{name=1, anchor=center, inner sep=0}, "{T_{\lambda B}}", from=3-3, to=4-2]
	\arrow[""{name=2, anchor=center, inner sep=0}, "{T_{\lambda}}"', from=3-1, to=4-2]
	\arrow["{T_\alpha}"{description}, from=3-3, to=3-1]
	\arrow["{ {T_\montrianglel}  }"{description}, draw=none, from=2, to=1]
	\arrow["{ \oncell{\y} }"{description, pos=0.4}, draw=none, from=2-1, to=0]
      \end{tikzcd}
\]
\vspace{0mm}
\[
      \begin{tikzcd}[
    	column sep = 2.5em, 
execute at end picture={
 			\foreach \nom in  {A,B,C, D, X, Y, Z, P, Q}
   				{\coordinate (\nom) at (\nom.center);}
 			\fill[\strongwcolour,opacity=\opacity] 
 	  				(X) -- (Z) -- (P) -- (Q);
 			\fill[\monadpcolour,opacity=\opacity] 
 	  				(X) -- (Y) -- (Z);	  				
      	   }
      ]
	\alias{X} {AT^2_B} &  \alias{Y} {AT_B} \\
	{T_{AT_B}} &  \alias{Z} {AT_B} \\
	\alias{Q} {T^2_{AB}} & \alias{P} {T_{AB}}
	\arrow["t"', from=1-1, to=2-1]
	\arrow["{T_t}"', from=2-1, to=3-1]
	\arrow["\mu"', from=3-1, to=3-2]
	\arrow[Rightarrow, no head, from=1-2, to=2-2]
	\arrow["{A T_\eta}"', from=1-2, to=1-1]
	\arrow[""{name=0, anchor=center, inner sep=0}, "t", from=2-2, to=3-2]
	\arrow[""{name=1, anchor=center, inner sep=0}, 
						"{\scriptstyle A\mu}"'{below}, from=1-1, to=2-2]
	\arrow["\oncell{ {A \p}  }"'{yshift=-2mm, xshift=1mm}, shift left=5, draw=none, from=1, to=1-2]
	\arrow["\oncell{ \w }"{description, pos=0.3, yshift=-1mm}, draw=none, from=2-1, to=0]
      \end{tikzcd}
\hspace{1mm}
      =
      \hspace{1mm}
\begin{tikzcd}[
    	column sep = 2.5em, 
execute at end picture={
 			\foreach \nom in  {A,B,C, D, X, Y, Z, P, Q}
   				{\coordinate (\nom) at (\nom.center);}
 			\fill[\strongzcolour,opacity=\opacity] 
 	  				(A) -- (B) -- (C);
 			\fill[\monadpcolour,opacity=\opacity] 
 	  				(B) -- (C) -- (Z);	  				
      	   }      
      ]
	{AT^2_B} & {AT_B} \\
	\alias{A} {T_{AT_B}} &  \alias{B} {T_{AB}} \\
	\alias{C} {T^2_{AB}} &  \alias{Z} {T_{AB}}
	\arrow["{AT_\eta}"', from=1-2, to=1-1]
	\arrow["T_t"', from=2-1, to=3-1]
	\arrow["\mu"', from=3-1, to=3-2]
	\arrow[""{name=0, anchor=center, inner sep=0}, "t"', from=1-1, to=2-1]
	\arrow[""{name=1, anchor=center, inner sep=0}, "t", from=1-2, to=2-2]
	\arrow["{T_{A \eta}}"'{yshift=-.3mm}, from=2-2, to=2-1]
	\arrow[""{name=2, anchor=center, inner sep=0}, 
						"{\scriptscriptstyle T_\eta}"{description}, from=2-2, to=3-1]
	\arrow[""{name=3, anchor=center, inner sep=0}, Rightarrow, no head, from=2-2, to=3-2]
	\arrow["{\iso}"{description}, draw=none, from=0, to=1, yshift=1mm]
	\arrow["\oncell{ {T_{\z}} }"{yshift=0mm, xshift=-3mm, description}, 
					draw=none, from=2-1, to=2]
	\arrow["\oncell{ \p }"{description, yshift=-2mm}, draw=none, from=2, to=3]
      \end{tikzcd}
\]

\subsection{Basic examples of strong pseudomonads}
\label{sec:examples}

In this section we show that several important classes of pseudomonad
are strong in the way one would expect from the categorical setting. Many of the proofs essentially come down to the relevant coherence theorem.

Recall that if $(M, m, e)$  is a monoid in a monoidal category $\monoidal\catC$ then 
	$(-) \tens M$
becomes a monad with unit and multiplication given via $e$ and $m$ (\cf~\Cref{ex:writer-on-cat}).	
This monad is canonically strong, with strength given by the structural isomorphism
	$A \tens (B \tens M) \xra{\iso} (A \tens B) \tens M $. 
Also note that every monad $T$ is strong with respect to the cocartesian structure $(0, +)$, with strength $[T\inl \circ \eta_A, T\inr] : A + TB \to T(A + B)$.
These facts bicategorify. The bicategorical version of a monoid is called a \emph{pseudomonoid}~\cite{Joyal1993, Day1997}, and every pseudomonoid defines a pseudomonad similarly to \Cref{ex:writer-on-cat}.

\begin{restatable}{lemma}{easyexamplesofstrengths}
  \label{res:easy-examples-of-strengths}
  \quad
\begin{enumerate}
\item \label{c:pseudomonad-from-pseudomonoid}
	For any pseudomonoid $(M, m, e, a, l, r)$ on a monoidal bicategory $\monoidal{\baseCat}$
	the pseudomonad $(-) \tens M$ has a strength given by the pseudo-inverse $\psinv\alpha$ of the associator for $\tens$.
\item 
	\label{c:pseudomonads-strong-wrt-plus}
	Every pseudomonad is canonically strong with respect to the cocartesian monoidal structure 
			$(+, 0)$.
\end{enumerate}
\end{restatable}

A pseudomonoid in $(\Cat, \times, 1)$ is 
exactly
a monoidal category, so 			
	\Cref{res:easy-examples-of-strengths}(\ref{c:pseudomonad-from-pseudomonoid}) 
applies in particular to the Writer pseudomonad
(\Cref{ex:writer-on-cat}).
We can also use this lemma to derive a result about pseudomonads on spans.
For any category $\catC$ with pullbacks there exists a bicategory of spans $\Span(\catC)$ similar to that defined in \Cref{sec:examples-of-bicats} for $\Set$.
For $\catC := \Set$, or more generally any \emph{lextensive} category~\cite{Carboni1993}, 
the bicategory $\Span(\catC)$ has finite biproducts---bicategorical products and coproducts which coincide---by \cite[Theorem~6.2]{Lack2010spans}. 
Moreover, by~\cite[Corollary~A.4]{Hermida2000}, every 
		\emph{cartesian monad}
(monad for which the underlying functor preserves pullbacks, and such that every naturality square for $\mu$ and $\eta$ is a pullback square) lifts to a pseudomonad on $\Span(\catC)$.
So we have the following.

\begin{corollary}
	\label{res:pseudomonad-on-Span}
	Any cartesian monad on a lextensive category $\catC$
		(such as~$\Set$) lifts to a strong pseudomonad on $\Span(\catC)$ 
\end{corollary}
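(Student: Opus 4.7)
The proof should follow almost immediately by combining the cited structural results on $\Span(\catC)$ with part~(\ref{c:pseudomonads-strong-wrt-plus}) of \Cref{res:easy-examples-of-strengths}. The plan is to exhibit a cocartesian monoidal structure on $\Span(\catC)$ and then invoke the lemma.

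First I would unpack the hypotheses: since $\catC$ is lextensive, \cite[Theorem~6.2]{Lack2010spans} gives that $\Span(\catC)$ has finite biproducts, so in particular a cocartesian monoidal structure $\fpBicat{\Span(\catC)}$ with tensor $+$ and unit $0$ inherited from $\catC$. Independently, since $T$ is cartesian on $\catC$, \cite[Corollary~A.4]{Hermida2000} produces a pseudomonad $\para{T}$ on $\Span(\catC)$ extending $T$. These two facts are independent of one another, so no compatibility check is required at this stage.

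Next I would simply apply \Cref{res:easy-examples-of-strengths}(\ref{c:pseudomonads-strong-wrt-plus}) to the pseudomonad $\para{T}$ on the cocartesian monoidal bicategory $(\Span(\catC), +, 0)$. The lemma supplies the strength $t_{A,B} = [\para{T}\inl \circ \eta_A, \para{T}\inr]$ together with the four structural modifications $\x, \y, \w, \z$, with all coherence axioms holding by the coherence of bicategories with finite products. No features of the lift beyond its being a pseudomonad are used.

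The only subtlety I can see is purely expository: one should remark on why the cocartesian structure on $\Span(\catC)$ is the ``natural'' choice here, \emph{viz.}~that it coincides with the cartesian structure (since biproducts coincide with both), so the resulting strong pseudomonad structure is canonical and does not depend on an arbitrary choice of monoidal structure. I do not expect any genuine obstacle: the work has been done by the two cited results and by \Cref{res:easy-examples-of-strengths}(\ref{c:pseudomonads-strong-wrt-plus}).
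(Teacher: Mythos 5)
Your proposal is correct and is essentially the paper's own argument: the corollary is obtained by combining Lack's biproduct theorem and Hermida's lifting result with \Cref{res:easy-examples-of-strengths}(\ref{c:pseudomonads-strong-wrt-plus}) applied to the lifted pseudomonad on the cocartesian structure of $\Span(\catC)$. Your closing remark about the biproduct structure making the choice canonical is a harmless expository addition, not a gap.
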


The next example covers two cases of importance in the semantics of programming languages. The proof follows essentially immediately from the corresponding categorical facts and the particularly strong form of coherence enjoyed by cartesian closed bicategories
	(see~\cite[Principle 1.3]{Fiore2021}). 

\begin{lemma}
	For any cartesian closed bicategory
		$(\baseCat, \times, 1, {\To})$
		(see \eg~\cite{LICS2019})  
	and objects $S, R \in \baseCat$, there exist strong pseudomonads
		$\exp{S}{(S \times -)}$ (the \emph{state} pseudomonad)
	and 
		$\exp{  (\exp{-}{R})  }{R}$ (the \emph{continuation} pseudomonad).
\end{lemma}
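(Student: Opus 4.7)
The plan is to invoke the strong coherence theorem for cartesian closed bicategories \cite[Principle 1.3]{Fiore2021}: any two parallel 2-cells built from products, exponentials, and the associated structural data are automatically equal. This reduces the task to specifying the data (pseudofunctor, unit, multiplication, strength) and taking each structural modification to be the unique coherence isomorphism between the relevant composites; every coherence axiom is then satisfied by fiat.

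For the state pseudomonad, take $T := \exp{S}{(S \times -)}$, defining the pseudofunctorial action by composing the pseudofunctors $S \times (-)$ and $\exp{S}{(-)}$. The unit $\eta_A : A \to \exp{S}{(S \times A)}$ is the transpose of the identity on $S \times A$, and the multiplication $\mu_A$ is the transpose of the iterated evaluation $S \times T^2 A \to S \times T A \to S \times A$ (up to the evident symmetries of the product). The strength $t_{A,B} : A \times T B \to T(A \times B)$ is the transpose of the composite
\[
S \times (A \times \exp{S}{(S \times B)})
\xra{\cong}
A \times (S \times \exp{S}{(S \times B)})
\xra{A \times \mathrm{ev}}
A \times (S \times B)
\xra{\cong}
S \times (A \times B).
\]
For the continuation case, set $T := \exp{(\exp{-}{R})}{R}$; the unit, multiplication and strength are defined by the standard categorical recipes (see e.g.~\cite{Moggi1991}), transplanted to the bicategorical setting using the adjunction $(- \times X) \dashv \exp{X}{(-)}$ and its associated transposes.

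With the data in place, each structural modification ($\m, \n, \p$ for the pseudomonad structure and $\x, \y, \z, \w$ for the strength) is defined as the unique coherent 2-cell between the relevant composites provided by \cite[Principle 1.3]{Fiore2021}. Every axiom of \Cref{def:strong-pseudomonad} (see \Cref{fig:coherence-axioms-for-strong-pseudofunctor,fig:coherence-axioms-for-strong-pseudomonad}) then concerns two parallel pastings built entirely from structural data, so is again an instance of coherence. The main obstacle is therefore purely notational rather than mathematical: one must carefully track the interplay between transpose, product-symmetry, and evaluation in order to verify that the two composites appearing in each axiom really do share a common domain and codomain. Once this bookkeeping is in place, no further combinatorial verification is required.
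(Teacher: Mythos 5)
Your proposal is correct and follows essentially the same route as the paper: the paper's entire justification is that the lemma ``follows essentially immediately from the corresponding categorical facts and the particularly strong form of coherence enjoyed by cartesian closed bicategories (\cite[Principle 1.3]{Fiore2021})'', which is precisely your strategy of transplanting the standard categorical data via transposes and discharging all modifications and axioms by coherence. Your write-up is in fact more explicit about the data than the paper itself.
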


For our final class of examples, recall that every functor $F$ on $\Set$ is canonically strong with respect to the cartesian structure, with 
	$t_{A, B} : A \times FB \to F(A \times B)$
defined by 
	$t_{A, B}(a, w) := F(\lambda b \bind \seq{a, b})(w)$,
and moreover that the same construction makes every monad on $\Set$ strong~\cite[Proposition~3.4]{Moggi1991}. 
A similar fact holds for bicategories; the statement for pseudomonads was first proved by Tanaka~\cite{Tanaka2004thesis}.

\begin{restatable}{proposition}{everymonadstrong}
\label{res:every-pseudomonad-strong}
Every pseudofunctor (resp. pseudomonad) on the 2-category $(\Cat, \times, 1)$ has a canonical choice of strength. 
\end{restatable}

\section{Bistrong, commutative, and monoidal pseudomonads}
\label{sec:symmetry}

Categorically, it is often the case that a monad $T$ supports a strength on both sides, and the two strengths are compatible: $T$ is then called \emph{bistrong} (see~\eg~\cite{McDermott2022}). 
This is the case, for instance, if $T$ has left strength $t$ and the underlying
category is symmetric monoidal, because we can construct a right strength using the symmetry $\beta$:
\begin{equation}
\label{eq:left-strength-to-right-strength}
	T(A) \tens B \xra{\beta} 
	B \tens T(A) \xra{t} 
	T(B \tens A) \xra{T\beta}
	T(A \tens B).
\end{equation}
For a bistrong monad $(T, t, s)$ it makes sense to ask whether the two morphisms below coincide:
\begin{align}
TA \otimes TB \xra{t} T(TA \otimes B) \xra{Ts} T^2(A \otimes B)
  \xra{\mu} T(A \otimes B) \label{eq:1} \\
TA \otimes TB \xra{s} T(A \otimes TB) \xra{Tt} T^2(A \otimes B)   \xra{\mu} T(A \otimes B)\label{eq:2} 
\end{align}
When they do, $T$ is said to be \emph{commutative}
\cite{Kock1970, Kock1972}. Kock showed that, in this case, the map 
	$TA \otimes TB \to T(A \otimes B)$ (defined in either way above) gives $T$ the structure of a
monoidal monad, and conversely that any monoidal monad is in particular
bistrong and commutative. 

We now bicategorify these results. 
We introduce the notion of bistrong pseudomonad in
\Cref{sec:bistr-pseud}. In \Cref{sec:towards-commutative} we discuss
the equivalence of commutative and monoidal pseudomonads, which we connect to existing notions due to Hyland \& Power \cite{Hyland2002}.
Finally, in \Cref{sec:premonoidal-Kleisli-bicats} we show the Kleisli bicategory for a bistrong pseudomonad forms a bicategorical version of a well-known model for effectful call-by-value programs.

\subsection{Bistrong pseudomonads}
\label{sec:bistr-pseud}

A \emph{right strength} for
a pseudomonad consists of a pseudonatural transformation $s_{A, B} :
T(A) \otimes B \to T(A \otimes B)$ equipped with four invertible modifications analogous to $x, y, z, w$ and satisfying corresponding axioms.

Informally, a left strength 
	$t_{A, B} : A \otimes TB \to T(A \otimes B)$ 
and a right strength 
	$s_{A, B} : T(A) \otimes B \to T(A\otimes B)$ 
        are \emph{compatible} if parameters on each side can be
        passed through $T$ in any order. 
 Categorically, one makes this precise by asking that the two obvious maps
         ${(A \otimes TB) \otimes C} \to T(A \otimes (B \otimes C))$
        are equal. 
 For the bicategorical definition, we replace this equation by a coherent isomorphism.
        
\begin{definition}
	\label{def:bistrong-pseudomonad}
  A \emph{bistrong} pseudomonad on a monoidal bicategory 
  	$(\B,\otimes, I)$ is a pseudomonad $T$ equipped with a left strength $t$
  and a right strength $s$, and an invertible modification
\[\begin{tikzcd}[
		column sep = 1.5em,
		row sep = .8em,
		execute at end picture={
			\foreach \nom in  {A,B,C, D, X, Y}
     				{\coordinate (\nom) at (\nom.center);}
   						\fill[\bistrongcolour,opacity=\opacity] (A) -- (B) -- (Y) -- (C) -- (D) -- (X); 
		}
	]
	\: & \alias{A} {(AT_B)C} & \alias{B} {A(T_BC)}  \: &\\
	\alias{X} {T_{AB}C} & \: &  \: & \alias{Y} {AT_{BC}} \\
	\: & \alias{D} {T_{(AB)C}} & \alias{C} {T_{A(BC)}} & \:
	\arrow["\alpha", from=A, to=B]
	\arrow["As", from=B, to=Y]
	\arrow["tC"', from=A, to=X]
	\arrow["s"', from=X, to=D]
	\arrow["{T_\alpha}"', from=D, to=C]
	\arrow[from=Y, to=C, "t"]
	\arrow[Rightarrow, "\b", from=X, to=Y , shorten = 42] 
\end{tikzcd}\]
satisfying the two axioms in \Cref{fig:bistrongaxioms}. 
\end{definition}

\begin{example}[{Extending \Cref{res:easy-examples-of-strengths}}]
	\label{ex:braided-pseudomonoid-pseudomonad}
	If $(\B, \tens, \tensu)$ is a braided monoidal bicategory and $M \in \B$ has the structure of a \emph{braided pseudomonoid} (see~\cite{Day1997}), the pseudomonad
		$(-) \tens M$ 
	is canonically bistrong,
	with $s$ defined using the braiding $\braid$ and $\b$ defined using the pentagonator $\pentagonator$ for $\B$.
The axioms follow by coherence~\cite{Gurski2011,Verdon2017}.
\end{example}

\begin{figure*}[ht]
\vspace{-2mm}
\centering
\[
\begin{minipage}{\textwidth}
\centering
\[
	\hspace{-10mm}
	\begin{tikzcd}[
		row sep=.8em, 
		column sep=1.5em, 
		scalenodes = .95,
		execute at end picture={
		    			\foreach \nom in  {A,B,C, D, X, Y}
		      				{\coordinate (\nom) at (\nom.center);}
		    			\fill[\bistrongcolour,opacity=\opacity] (A) -- (B) -- (C) -- (D); 
		    			\fill[\strongzprimecolour,opacity=\opacity] (X) -- (B) -- (Y);      	  		
		    			\fill[\strongzcolour,opacity=\opacity] (X) -- (Y) -- (C);      	  		
		}
	]
	&& {(AB)C} &&& \alias{X} {A(BC)} \\
	\alias{A} {(AT_B)C} &&& \alias{B} {A(T_BC)} \\
	& {T_{AB}C} &&& \alias{Y} {AT_{BC}} \\
	&& \alias{D} {T_{(AB)C}} &&& \alias{C} {T_{A(BC)}}
	\arrow["\alpha", from=1-3, to=1-6]
	\arrow[""{name=0, anchor=center, inner sep=0}, "{(A\eta)C}"', from=1-3, to=2-1]
	\arrow[""{name=1, anchor=center, inner sep=0}, "tC"', from=2-1, to=3-2]
	\arrow["s"', from=3-2, to=4-3]
	\arrow["T\alpha"', from=4-3, to=4-6]
	\arrow[""{name=2, anchor=center, inner sep=0}, "\eta", from=1-6, to=4-6]
	\arrow["t"{swap}, from=3-5, to=4-6]
	\arrow[""{name=3, anchor=center, inner sep=0}, "A\eta"{description}, from=1-6, to=3-5]
	\arrow[""{name=4, anchor=center, inner sep=0}, "{A(\eta C)}"{description}, from=1-6, to=2-4]
	\arrow["\alpha", from=2-1, to=2-4]
	\arrow["As"', from=2-4, to=3-5]
	\arrow["\oncell{{\z}}"{description}, shorten >=5pt, Rightarrow, draw=none, from=3-5, to=2]
	\arrow["\oncell{{A\z'}}"{yshift=-1mm, description}, draw=none, from=2-4, to=3]
	\arrow["\cong"{description}, draw=none, from=0, to=4]
	\arrow["\oncell{{\b}}"{description, yshift=-1mm}, draw=none, from=1, to=3-5]
	\end{tikzcd}
=
\begin{tikzcd}[
    		row sep=1.9em, 
    		column sep=1.5em, 
    		scalenodes=.95,
		execute at end picture={
		    			\foreach \nom in  {A,B,C, D, X, Y}
		      				{\coordinate (\nom) at (\nom.center);}
		    			\fill[\strongzcolour,opacity=\opacity] (A) -- (B) -- (C);      	  		
		    			\fill[\strongzprimecolour,opacity=\opacity] (A) -- (C) -- (X);      	  		  		
		}		
    ]
	& \alias{A} {(AB)C} &&&&& {A(BC)} \\
	\alias{B} {(AT_B)C} & {} \\
	& \alias{C} {T_{AB}C} && \alias{X} {T_{(AB)C}} &&& {T_{A(BC)}}
	\arrow["\alpha", from=1-2, to=1-7]
	\arrow["{(A\eta)C}"', from=1-2, to=2-1]
	\arrow["tC"', from=2-1, to=3-2]
	\arrow["s"', from=3-2, to=3-4]
	\arrow["T\alpha"', from=3-4, to=3-7]
	\arrow[""{name=0, anchor=center, inner sep=0}, "\eta", from=1-7, to=3-7]
	\arrow[""{name=1, anchor=center, inner sep=0}, "\eta"{}, from=1-2, to=3-4]
	\arrow[""{name=2, anchor=center, inner sep=0}, "{\eta C}"{description}, from=1-2, to=3-2]
	\arrow["\cong"{description}, draw=none, from=1, to=0]
	\arrow["\oncell{{\z'}}"{xshift=-1mm, description}, draw=none, from=3-2, to=1]
	\arrow["\oncell{{\z}C}"{description}, draw=none, from=2-1, to=2]
      \end{tikzcd}
    \]
   \end{minipage}
   \]
\[
    \begin{minipage}{\textwidth} 
    \centering
    \[ 
    \hspace{-7mm}
    \begin{tikzcd}[
    	scalenodes=.95, 
    	column sep=2em,
		row sep = 1em, 
    	execute at end picture={
    			    			\foreach \nom in  {A,B,C, D, E, X, Y, Z, P, Q, R, M}
    			      				{\coordinate (\nom) at (\nom.center);}
    			    			\fill[\bistrongcolour,opacity=\opacity] (A) -- (B) -- (C) -- (D) -- (E);      	  		
    			    			\fill[\bistrongcolour,opacity=\opacity] (E) -- (D) -- (X) -- (Y) -- (Z);      	  	
    			    			\fill[\strongwprimecolour,opacity=\opacity] (B) -- (P) -- (Q) -- (R);      	  	    			 
    			    			\fill[\strongwcolour,opacity=\opacity] (P) -- (Q) -- (M) -- (Y);      	  	    			    	  				   	  					  		
    	}		    	  				
    ]
	\alias{A} {(AT^2_B)C} &[1em] \: & \alias{B} {A(T^2_BC)} && \: \\
	{T_{AT_B}C} &[1em] \: & \alias{C} {AT_{T_BC}} && \alias{R} {A(T_BC)} \\
	\alias{E} {T_{(AT_B)C}} & \alias{D} {T_{A(T_BC)}} & \alias{P} {AT^2_{BC}} & \: \\
	{T_{T_{AB}C}} && \alias{X} {T_{AT_{BC}}} && \alias{Q} {AT_{BC}} \\
	\alias{Z} {T^2_{(AB)C}} && \alias{Y} {T^2_{A(BC)}} \\
	& {T_{(AB)C}} & && \alias{M} {T_{A(BC)}}
	\arrow["\alpha", from=1-1, to=1-3]
	\arrow["tC"', from=1-1, to=2-1]
	\arrow["s"', from=2-1, to=3-1]
	\arrow["{T_\alpha}"', from=3-1, to=3-2]
	\arrow["As", from=1-3, to=2-3]
	\arrow["t"', from=2-3, to=3-2]
	\arrow["{T_{tC}}"', from=3-1, to=4-1]
	\arrow["{T_{s}}"', from=4-1, to=5-1]
	\arrow["{T^2\alpha}", from=5-1, to=5-3]
	\arrow["{T_t}", from=4-3, to=5-3]
	\arrow[""{name=0, anchor=center, inner sep=0}, "\mu"', from=5-1, to=6-2]
	\arrow["T\alpha"{below}, from=6-2, to=6-5]
	\arrow[""{name=1, anchor=center, inner sep=0}, "\mu", from=5-3, to=6-5]
	\arrow["{AT_s}", from=2-3, to=3-3]
	\arrow["t", from=3-3, to=4-3]
	\arrow["A\mu"{below}, from=3-3, to=4-5]
	\arrow["{A(\mu C)}", from=1-3, to=2-5]
	\arrow[""{name=2, anchor=center, inner sep=0}, "As", from=2-5, to=4-5]
	\arrow[""{name=3, anchor=center, inner sep=0}, "t", from=4-5, to=6-5]
	\arrow["{T_{As}}"', from=3-2, to=4-3]
	\arrow["{\small \cong}"{description, yshift=1mm}, draw=none, from=3-2, to=3-3]
	\arrow["\oncell{{T_{\b}}}"{xshift=-2mm, yshift=-1mm, description}, draw=none, from=4-1, to=4-3]
	\arrow["\oncell{{\b}}"{xshift=-1mm, description}, draw=none, from=2-1, to=2-3]
	\arrow["\oncell{{A\w'}}"{yshift=-2mm, xshift=1mm, description}, draw=none, from=2-3, to=2]
	\arrow["\oncell{{\w}}"{description}, draw=none, from=4-3, to=3]
	\arrow["\cong"{description}, draw=none, from=0, to=1]
      \end{tikzcd}
\hspace{1mm}=\hspace{1mm}
\begin{tikzcd}[
		scalenodes=.95, 
		row sep = 1.1em, 
		column sep=1.5em,
    	execute at end picture={
   			\foreach \nom in  {A,B,C, D, E, X, Y, Z, P, Q, R, M, N}
     				{\coordinate (\nom) at (\nom.center);}
   			\fill[\strongwcolour,opacity=\opacity] (A) -- (B) -- (C) -- (D);      	  		
   			\fill[\strongwprimecolour,opacity=\opacity] (X) -- (C) -- (Y) -- (Z) -- (P);      	  	
   			\fill[\bistrongcolour,opacity=\opacity] (B) -- (N) -- (M) -- (Y);      	  	    			  				   	  					  		
    	}		    	  	
	]
	\alias{A} {(AT^2_B)C} && {A(T^2_BC)} \\
	\alias{D} {T_{AT_B}C} && \alias{B} {(AT_B)C} &&[1em] \alias{N} {A(T_BC)} \\
	 {T_{(AT_B)C}} & \alias{X} {T^2_{AB}C} \\
	\alias{P} {T_{T_{AB}C}} && \alias{C} {T_{AB}C} && {AT_{BC}} \\
	\alias{Z} {T^2_{(AB)C}} \\
	&&  \alias{Y} {T_{(AB)C}} && \alias{M} {T_{A(BC)}}
	\arrow["\alpha", from=1-1, to=1-3]
	\arrow["s"', from=2-1, to=3-1]
	\arrow["{T_{tC}}"', from=3-1, to=4-1]
	\arrow[""{name=0, anchor=center, inner sep=0}, "{T_{s}}"', from=4-1, to=5-1]
	\arrow["\mu"', from=5-1, to=6-3]
	\arrow["T\alpha"{swap}, from=6-3, to=6-5]
	\arrow[""{name=1, anchor=center, inner sep=0}, "{A(\mu C)}", from=1-3, to=2-5]
	\arrow["As", from=2-5, to=4-5]
	\arrow["t", from=4-5, to=6-5]
	\arrow[""{name=2, anchor=center, inner sep=0}, "tC"', from=1-1, to=2-1]
	\arrow[""{name=3, anchor=center, inner sep=0}, "{(A\mu)C}"{description}, from=1-1, to=2-3]
	\arrow["\alpha", from=2-3, to=2-5]
	\arrow["{\scriptscriptstyle T_tC}"{right}, from=2-1, to=3-2]
	\arrow["s", from=3-2, to=4-1]
	\arrow[""{name=4, anchor=center, inner sep=0}, "tC"', from=2-3, to=4-3]
	\arrow["s"', from=4-3, to=6-3]
	\arrow[from=3-2, to=4-3]
	\arrow["{\small \cong}"{description, yshift=1mm, xshift=-2mm}, draw=none, from=3-1, to=3-2]
	\arrow["\oncell{ \b }"{description}, draw=none, from=4-3, to=4-5]
	\arrow["\oncell{{\w'}}"{description, yshift=-2mm, xshift=1mm}, draw=none, from=0, to=4-3]
	\arrow["\oncell{{\w C}}"{description, yshift=2mm}, draw=none, from=2, to=4]
	\arrow["\cong"{description}, draw=none, from=3, to=1]
	\end{tikzcd}
      \]
    \end{minipage}
    \]
    \caption{Coherence axioms for a bistrong pseudomonad.}
    \label{fig:bistrongaxioms}
\end{figure*}

\Cref{def:bistrong-pseudomonad} is sufficient to recover the categorical situation: 
if $(\B, \otimes, I)$ is symmetric monoidal and $(T, t)$ is a left-strong
pseudomonad, then the composite pseudonatural transformation
with components as in~\eqref{eq:left-strength-to-right-strength}
can always be given the structure of a right strength for $T$. 

\begin{proposition}
	\label{res:left-strong-on-symmetric-is-bistrong}
  Every left-strong pseudomonad on a symmetric monoidal bicategory
  is bistrong in a canonical way. 
\end{proposition}

\begin{corollary}[{\rm Extending~\Cref{ex:strong-pseudomonad-on-para}}]
	\label{ex:strong-monad-to-bistrong-on-Para}
	If $(T, t)$ is a strong monad on a symmetric monoidal category $(\Ccat, \tens, \tensu)$, the induced pseudomonad on $\Para(\Ccat)$ is canonically bistrong.
\end{corollary}

\subsection{Commutative and monoidal pseudomonads}
\label{sec:towards-commutative}

We now define commutativity for bistrong pseudomonads. 
Following the usual pattern for bicategorification, the
definition is in
terms of an invertible 2-cell between the morphisms defined in
\eqref{eq:1} and \eqref{eq:2}.
Our definition is a straightforward adaptation of Hyland \& Power's \cite[Definition~5]{Hyland2002} to the weaker setting of bistrong pseudomonads on a monoidal bicategory.

\begin{definition}
\label{def:commutative-pseudomonad}
  A commutative pseudomonad on a monoidal bicategory $(\B, \otimes,
  I)$ is a bistrong pseudomonad $(T, \mu, \eta, t, s)$
  equipped with an invertible modification
\[\begin{tikzcd}[row sep = 1.5em, 
		execute at end picture={
			 			\foreach \nom in  {A,B,C, D, X, Y, Z, P, Q}
			   				{\coordinate (\nom) at (\nom.center);}
			 			\fill[\commutecolour,opacity=\opacity] 
			 	  				(X) -- (Y) -- (Z) -- (P);			
			      	   }
	]
	\alias{X} {T_A T_B} & {T_{A T_B}} & \alias{Y}  {T^2_{AB}} \\
	\alias{P} {T_{T_A B}} & {T^2_{AB}} & \alias{Z} {T_{AB}}
	\arrow["t"', from=1-1, to=2-1]
	\arrow["s", from=1-1, to=1-2]
	\arrow[""{name=0, anchor=center, inner sep=0}, "Tt", from=1-2, to=1-3]
	\arrow[""{name=1, anchor=center, inner sep=0}, "Ts"', from=2-1, to=2-2]
	\arrow["\mu"', from=2-2, to=2-3]
	\arrow["\mu", from=1-3, to=2-3]
	\arrow["\commute", shorten <=22pt, shorten >=22pt, Rightarrow, from=0, to=1]
      \end{tikzcd}\]
    subject to coherence axioms as in~\cite[Definition~5]{Hyland2002}. \end{definition}

\begin{example}[{Extending \Cref{ex:braided-pseudomonoid-pseudomonad}}]
	If $(\B, \tens, \tensu)$ is a symmetric monoidal bicategory and $M \in \B$ has the structure of a \emph{symmetric pseudomonoid} (see~\cite{Day1997}), the pseudomonad
		$(-) \tens M$ 
	is canonically commutative,
	with $\commute$ defined using the braiding on $M$ and symmetric structure on $\B$;
	the axioms follow by coherence~\cite{Gurski2013symmetricbicats,Verdon2017}.
\end{example}

\begin{example}[{Extending~\Cref{ex:strong-monad-to-bistrong-on-Para}}]
	If $(T, t)$ is a commutative monad on a symmetric monoidal category $(\Ccat, \tens, \tensu)$, the induced pseudomonad on $\Para(\Ccat)$ is canonically commutative.
\end{example}

With the axioms of \Cref{def:commutative-pseudomonad} we can verify those of a monoidal pseudomonad, and conversely, so Kock's correspondence result (\cite[Theorem~2.3]{Kock1972})
holds at this level.
We begin by defining monoidal pseudomonads. For the definition of monoidal pseudofunctors, transformations, modifications, see~\cite{SchommerPries2009, Cheng2011monbicat}. 

\begin{definition}
  \label{def:monoidalpseudomonad}
A monoidal pseudomonad on a monoidal bicategory $(\B, \otimes, I)$ is
a pseudomonad $(T, \mu, \eta)$ with additional structure:
\begin{itemize}
\item A 1-cell $\iota : I \to T I$, 
	and pseudonatural transformation 
		$\chi : T A \otimes TB \to T(A \otimes B)$
	with three (omitted) invertible modifications making 
$T$ a monoidal pseudofunctor;
\item invertible 2-cells making $\eta$ a monoidal pseudonatural transformation:
  \[
  	\hspace{2mm}
      \begin{tikzcd}[column sep=3em]
	I & {T_I}
	\arrow[""{name=0, anchor=center, inner sep=0}, "\eta", curve={height=-12pt}, from=1-1, to=1-2]
	\arrow[""{name=1, anchor=center, inner sep=0}, "\iota"', curve={height=12pt}, from=1-1, to=1-2]
	\arrow["\:\etaunit", shorten <=5pt, shorten >=5pt, Rightarrow, from=0, to=1]
	\end{tikzcd}
	 \hspace{5mm}
	\begin{tikzcd}[column sep = 2em]
		AB && {T_A T_B} \\
		&& {T_{AB}}
		\arrow["{\eta \eta}", from=1-1, to=1-3]
		\arrow["\chi", from=1-3, to=2-3]
		\arrow[""{name=0, anchor=center, inner sep=0}, curve={height=6pt}, "\eta"', from=1-1, to=2-3]
		\arrow["\etaprod"', shorten <=7pt, shorten >=7pt, Rightarrow, from=1-3, to=0]
	      \end{tikzcd}
    \]
  \item invertible 2-cells making $\mu$ a monoidal pseudonatural transformation:
    \[
    	\hspace{6mm}
      \begin{tikzcd}
    	I & {T_I} & {T^2_I} \\
    	&& {T_I}
    	\arrow["\iota", from=1-1, to=1-2]
    	\arrow["{T_\iota}", from=1-2, to=1-3]
    	\arrow["\mu", from=1-3, to=2-3]
    	\arrow[""{name=0, anchor=center, inner sep=0}, "\iota"', curve={height=6pt}, from=1-1, to=2-3]
    	\arrow["\muunit", shorten <=7pt, shorten >=7pt, Rightarrow, from=1-3, to=0]
    \end{tikzcd}
    \hspace{4mm}
    \begin{tikzcd}
	{T^2_AT^2_B} && {T_AT_B} \\
	{T_{T_AT_B}} & {T^2_{AB}} & {T_{AB}}
	\arrow[""{name=0, anchor=center, inner sep=0}, "\mu\mu", from=1-1, to=1-3]
	\arrow["\chi", from=1-3, to=2-3]
	\arrow["\chi"', from=1-1, to=2-1]
	\arrow[""{name=1, anchor=center, inner sep=0}, "{T_\chi}"', from=2-1, to=2-2]
	\arrow["\mu"', from=2-2, to=2-3]
	\arrow["\muprod", xshift=3mm, yshift = 1mm, shorten <=10pt, shorten >=10pt, Rightarrow, from=1, to=0]
    \end{tikzcd}
      \]
\end{itemize}
The pseudomonad modifications $(\m, \n, \p)$ must then satisfy the
axioms of monoidal modifications and the two pseudomonad laws.
\end{definition}

As is the case for monads, commutative and monoidal structures for pseudomonads are equivalent. We shall make this precise in \Cref{sec:justify}, where we discuss the mathematical context for our definitions; for now we observe a simpler corollary.

\begin{proposition}
\label{res:monoidal-iff-commutative}
For any pseudomonad $T$:
every monoidal structure on $T$ canonically induces a commutative structure on $T$, and every commutative structure on $T$ canonically induces a monoidal structure on $T$.
\end{proposition}

Note that when constructing monoidal structure there is a choice between two structures, since our commutativity is only
pseudo, but these are isomorphic via the modification $\commute$.
  
\subsection{Premonoidal Kleisli bicategories}
\label{sec:premonoidal-Kleisli-bicats}

A generalisation of Moggi's framework, which does not require a monad explicitly in the syntax, is given by
	\emph{Freyd categories}~\cite{Power1999kappa,Levy2003}.
This includes Moggi's approach: the functor
	$\eta \circ (-) : \catC \to \catC_T$,
which describes the interaction between pure programs (interpreted in $\catC$) and effectful ones (interpreted in $\catC_T$), forms a Freyd category.
In this section we study the Kleisli bicategories associated to the structures discussed above, and show they form 
	\emph{Freyd bicategories}~\cite{ACT2023}. 
Thus, the categorical interpretation of call-by-value programs lifts to the bicategorical setting as expected.

\paragraph{Kleisli bicategories.}
If $T$ is a pseudomonad on a bicategory $\baseCat$, the 
	\emph{Kleisli bicategory} $\baseCat_T$ (\eg~\cite{Cheng2003})
has the same objects as $\baseCat$ and hom-categories
$\baseCat_T(A, B) := \baseCat(A, TB)$. 
The identity on $A$ is the
1-cell $\eta_A \in \baseCat(A, TA)$ and the composition of 
	$f \in \baseCat(A, TB)$ 
and 
	$g \in \baseCat(B, TC)$ 
is given by
\[
	A \xra{f} TB \xra{Tg} T^2 C \xra{\mu} TC.
\]
The structural 2-cells $\a, \l, \r$ in $\B_T$ are constructed using
the 2-dimensional structure of the pseudomonad $T$.

\paragraph{Premonoidal structure.} If $\B$ is equipped with a monoidal structure
$(\otimes, I)$, then some of this structure is inherited by $\B_T$
when $T$ is strong. More precisely, if $T$ has a left strength $t$,
then for any object $A \in \B$ the mapping
\begin{equation} \label{eq:left-bowties-defined}
B \overset{f}{\longrightarrow} TB' \quad \longmapsto \quad 
	A \otimes B
		\xra{A \tens f}
		A \tens TB'
		\overset{t}{\longrightarrow}
		T(A \tens B')
\end{equation}
can be extended to a pseudofunctor $\B_T \to \B_T$ denoted 
	$A \ltie -$. 
Similarly, if $T$ is right-strong, then for every object
$A$ we have a pseudofunctor $- \rtie A : \B_T \to \B_T$.  

\begin{restatable}{proposition}{freydstructurefromstrong}
\label{res:freyd-structure}
For a bistrong pseudomonad $(T, s, t)$ on a monoidal bicategory $(\B,
\otimes, I)$ the families of pseudofunctors
$(-\rtie A)$ and $(A \ltie -)$ assemble into a \emph{premonoidal
  structure} on $\B_T$.  Together with the canonical pseudofunctor $\B \to \B_T$, which regards pure morphisms
as effectful ones, they determine a \emph{Freyd bicategory}. 
\end{restatable}

\paragraph{Monoidal Kleisli bicategories.}
When the pseudomonad $T$ is commutative, the premonoidal
structure on $\B_T$ canonically extends to a (pseudo) monoidal structure. The
only missing ingredient is the isomorphism $\phi$ making $\tens$ a pseudofunctor of two arguments. One constructs this using $\commute$, yielding the interchange law below:
\begin{equation*}
	\label{eq:monoidal-interchange-law}
\begin{tikzcd}[column sep = 5em, row sep = 3em]
  	{A \tens B} & {A' \tens B} \\
  	{A \tens B'} & {A' \tens B'}
  	\arrow["{f \ltie B}", from=1-1, to=1-2]
  	\arrow["{A' \rtie g}", from=1-2, to=2-2]
  	\arrow["{A \rtie g}"', from=1-1, to=2-1]
  	\arrow["{f \ltie B'}"', from=2-1, to=2-2]
  	\arrow[""{name=0, anchor=center, inner sep=0}, "{f \tens g}"{description}, from=1-1, to=2-2]
  	\arrow["\phi"', yshift = 0mm, xshift = 1mm, shorten >=10pt, Rightarrow, from=1-2, to=0]
  	\arrow["{\phi^{-1}}", xshift = -1mm, yshift = 0mm, shorten <=10pt, Rightarrow, from=0, to=2-1]
  \end{tikzcd}
\end{equation*}
This gives an isomorphism in~(\ref{eq:weak-interchange}). 
Next we will consider a generalised setting in which $\phi$ is not invertible.

\section{Concurrent pseudomonads}
\label{sec:concurrency}

Concurrent pseudomonads illustrate the expressive power of
2-dimensional category theory. Their definition is unequivocally
2-categorical because, for the first time in this paper, we make use of
non-invertible 2-cells (and so it would not be sufficient to
work with a category `up to isomorphism', as is commonly done).

\subsection{Definition and strength}
\label{sec:definition-of-concurrent-pseudomonads}

\begin{definition}
\label{def:concurrent-pseudomonad}
A \emph{concurrent pseudomonad} on a monoidal bicategory $(\B,
\otimes, I)$ consists of the same data as a monoidal pseudomonad (\Cref{def:monoidalpseudomonad}), with axioms modified as follows:
\begin{itemize}
\item The modification $\muprod$
is no longer required to be invertible;
  \item The composite 2-cells
\[
    \begin{tikzcd}
	{AT^2_B} & {T_AT^2_B} & {T^2_AT^2_B} && {T_AT_B} \\
	&& {T_{T_AT_B}} & {T^2_{AB}} & {T_{AB}}
	\arrow[""{name=0, anchor=center, inner sep=0}, "\mu\mu", from=1-3, to=1-5]
	\arrow["\chi", from=1-5, to=2-5]
	\arrow["\chi"', from=1-3, to=2-3]
	\arrow[""{name=1, anchor=center, inner sep=0}, "{{T_\chi}}"', from=2-3, to=2-4]
	\arrow["\mu"', from=2-4, to=2-5]
	\arrow["{\eta T^2_B}", from=1-1, to=1-2]
	\arrow["{\eta T^2_B}", from=1-2, to=1-3]
	\arrow["\muprod", xshift=3mm,  yshift = 1mm, shorten <=10pt, shorten >=10pt, Rightarrow, from=1, to=0]
      \end{tikzcd} \]
\[ \begin{tikzcd}
	{T^2_AB} & {T^2_AT_B} & {T^2_AT^2_B} && {T_AT_B} \\
	&& {T_{T_AT_B}} & {T^2_{AB}} & {T_{AB}}
	\arrow[""{name=0, anchor=center, inner sep=0}, "\mu\mu", from=1-3, to=1-5]
	\arrow["\chi", from=1-5, to=2-5]
	\arrow["\chi"', from=1-3, to=2-3]
	\arrow[""{name=1, anchor=center, inner sep=0}, "{{T_\chi}}"', from=2-3, to=2-4]
	\arrow["\mu"', from=2-4, to=2-5]
	\arrow["{T^2_A \eta}", from=1-1, to=1-2]
	\arrow["{ T^2_A\eta}", from=1-2, to=1-3]
	\arrow["\muprod", xshift=3mm, yshift = 1mm, shorten <=10pt, shorten >=10pt, Rightarrow, from=1, to=0]
      \end{tikzcd}
\]
    are now required to be invertible.
\end{itemize}
The coherence axioms are the same as for a monoidal pseudomonad.
\end{definition}

There are likely many examples of this structure. We give two simple
examples now, and a more involved example based on game semantics in
the second part of this section. 

\newcommand{\Poset}{\mathbf{Poset}}
\newcommand{\N}{\mathbf{N}}
\begin{example}
  Let $\Poset$ be the 2-category of posets and monotone functions,
  with 2-cells given by the pointwise order on functions. 
  The decreasing natural numbers $\N = (\Nat, \geq)$ form a monoid in
  $\Poset$ under addition, which induces a (strict) 2-monad \[ \N \times - : \Poset \to \Poset.\]
  This 2-monad has a concurrent structure: the monotone function $\max : \N \times \N \to \N$,
   induces a natural transformation $(\N \times A) \times (\N
  \times B) \to \N \times (A \times B)$, with the 2-cell $\muprod$
  representing the fact that
  \[
  \max(n + m, k + l) \leq \max(n, k) + \max(m, l).
\]
Note that we recover an equality if either $n = m = 0$ or $k = l =
0$, giving invertible composite 2-cells as required by
\Cref{def:concurrent-pseudomonad}. 
 \end{example}

\begin{example}
\label{ex:seans-example}
For any non-empty set $\Sigma$ the set of finite strings $\Sigma^*$ is
a monoid in $(\Set, \times, 1)$ and so also in the monoidal category
of sets and relations $(\Rel, \times, 1)$. Now, $\Rel$ is a
(degenerate) bicategory with the 2-cells given by the inclusion of
relations, and the induced writer pseudomonad $(-) \times \Sigma^*$ is
strong but not commutative. It has a concurrent structure with $\chi$
defined by 
\begin{align*}	
	(A \times \Sigma^*) \times (B &\times \Sigma^*) 
		\to \mathcal{P}((A \times B) \times \Sigma^*) \\
(a, u, b, v) &\mapsto \big\{ (a, b, w) \st w \text{ is an interleaving of } u \text{ and } v \big\}
\end{align*}
and $\muprod$ given by the inclusion, which is in general strict.
\end{example}

We verify the following essential result directly.
\begin{proposition}
\label{res:concurrent-implies-bistrong}
	Every concurrent pseudomonad has a canonical bistrong structure.
      \end{proposition}
      
     Note that the invertibility conditions of
      \Cref{def:concurrent-pseudomonad} are unavoidable, to get
      \emph{pseudo} left and right strengths, rather than lax ones. 

The next result shows that \Cref{def:concurrent-pseudomonad} does indeed capture the weak interchange law~(\ref{eq:weak-interchange}).
Lax normal functors are defined like pseudofunctors, without the constraint
 that the compositor 2-cell $\phi$ is invertible (\eg~\cite{Gambino2022}).

\begin{proposition}
	\label{res:concurrent-to-lax-monoidal}
	For any concurrent pseudomonad $T$ on a monoidal bicategory 
		$(\B, \tens, \tensu)$, 
	the families of pseudofunctors $(-\ltie A)$ and $(A \rtie -)$
        in the premonoidal structure of $\B_T$ assemble into 
	a \emph{lax normal functor} $\otimes$ of two arguments.
      \end{proposition}

\subsection{Illustration in concurrent game semantics}
\label{sec:games}

In this section we illustrate concurrent pseudomonads with the
continuation pseudomonad from concurrent games~\cite{cg1}. (Game
semantics plays no role in this paper outside this section.)
Our model is ``truly concurrent'', in the sense that programs are
represented as partially ordered sets of computational events, rather
than as sets of possible traces. This makes the concurrent structure
of our pseudomonad clear. The model is a simplified version of~\cite{cg1}.

\subsubsection{Event structures.}

\newcommand{\Donf}{\mathcal{D}}
\newcommand{\Games}{\mathscr{G}}
A \emph{(deterministic) event structure} is a partially ordered set
of events related by a partial order modelling causal
dependency. Formally it is a partial order $(E, \leq_E)$ 
such that every $e \in E$ depends on finitely many events, \ie~the set 
	$\{e' \mid e' \leq_E e \}$
is finite. Thus a finite, down-closed subset of $E$ represents a possible (partial) execution
of the concurrent process modelled by $E$.

A \emph{map of
  event structures} $(E, \leq_E) \to (D, \leq_D)$ is an injective
function $f : E \to D$ such that if $x \subseteq E$ is down-closed,
then the image $f \, x$ is also down-closed.
The map $f$ can be understood as a simulation of $E$ in $D$, or
in terms of possible execution traces. For example the map in
\Cref{fig:introweak} (in which the arrows are a Hasse diagram for $\leq$)
is valid because every possible execution of the domain is also
an execution in the codomain.

Event structures support a parallel composition operator  $E \otimes
D$ (sometimes $E \parallel D$), defined as the disjoint union of partial
orders. 

\subsubsection{Games and strategies}
In what follows we use somewhat informal language to focus on
illustrating the concepts. A \emph{game} is an event structure $A$ equipped with a polarity function 
	$A \to \{ {\color{blue}{+}}, {\color{red}-} \}$
assigning ``moves'' to the program ($\color{blue}{+}$) and the environment ($\color{red}{-}$). 
The game $A^\perp$ swaps the moves of the program and the environment in $A$: it has the same events, with polarity reversed. 
A strategy over the game $A$ is an event structure $S$ with a projection map 
	$p : S \to A$ 
satisfying a lifting condition which plays no role in this section \cite{cg1}.

There is a bicategory of concurrent games $\Games$ as follows:
\begin{itemize}
\item objects are \emph{negative games}: games whose minimal events are
  all negative (``the environment always acts first'').
\item 1-cells from $A$ to $B$ are negative strategies over the game 
	$A^\perp \otimes B$. Intuitively, these encode a program's moves as a function of the environment's behaviour.
\item 2-cells from a strategy 
	$p : S \to A^\perp \otimes B$ to a strategy 
	$p' : S' \to A^\perp \otimes B$ are maps of event
	structures $f : S \to S'$ which commute with the projections.
\end{itemize}

Strategies are composed using a pullback construction in the category of
event structures and maps. (This is only determined up to isomorphism,
and therefore is only weakly associative.) The identity on a game $A$ is the
\emph{copycat} strategy on $A^\perp \otimes A$, in which every 
environment move is copied by the program.

\subsubsection{A double-negation concurrent pseudomonad.} 
 We can turn a negative game $A$ into a
positive game $\lnot A$ by appending a single minimal positive
move. Similarly we can append a negative move at the beginning of a
positive game $A'$ to get a negative game $\lnot A'$. The induced
operation $\lnot \lnot$ is a pseudomonad on $\Games$, as shown
below. (In each diagram, moves of the strategy are positioned underneath the game to which they project. 
	For each strategy we only display the initial portion of appended
	$\lnot$-moves; the rest follows a copycat strategy.)
\[  
\begin{minipage}{3cm}
\scalebox{.95}{
	\begin{tikzpicture}
    \node (a) at (0.4, 3) {$A$};
    \node (d) at (1.1, 3.1) {$\overset{\eta}{\longrightarrow}$};
    \node (g) at (2, 3) {$\lnot \lnot A$};    
    \node[negnode] (1) at (2, 2.5) {};
    \node[posnode] (2) at (2, 1.6) {};
\draw [bend right=0, strat-causality] (1) to (2);
\end{tikzpicture}
  }
\end{minipage}  
  \hspace{1cm}
\begin{minipage}{3cm}
\scalebox{.95}{
  \begin{tikzpicture}
\node (t) at (4.4, 3) {$\lnot \lnot \lnot \lnot A$};
    \node (s) at (5.4, 3.1) {$\overset{\mu}{\longrightarrow}$};
    \node (sf) at (6.3, 3) {$\lnot \lnot A$};    
    \node[negnode] (3) at (6.3, 2.5) {};
    \node[posnode] (4) at (4.4, 2.2) {};
    \node[negnode] (5) at (4.4, 1.4) {};
    \node[posnode] (6) at (4.4, 0.6) {};
    \node[negnode] (7) at (4.4, -0.2) {};
    \node[posnode] (8) at (6.3, -0.5) {};
\draw [bend right=5, strat-causality] (3) to (4);
    \draw [bend right=0, strat-causality] (4) to (5);
    \draw [bend right=0, strat-causality] (5) to (6);
    \draw [bend right=0, strat-causality] (6) to (7);
    \draw [bend left=5, strat-causality] (7) to (8);
\end{tikzpicture}
 }
\end{minipage}    
\]
The effect of the pseudomonad
$\lnot\lnot$ is to track and make explicit the sequential order of function calls or
argument calls, like in continuation-passing style. 
This pseudomonad has a strength $t$. It also has a transformation
	$\chi$ 
showing that we can represent calls being made in parallel, using the
true concurrency of event structures: 
\[
\hspace{-8mm}
\begin{minipage}{3cm}
	\centering
\scalebox{.95}{
  \begin{tikzpicture}
    \node (t) at (4.4, 3) {$A \otimes \lnot \lnot B$};
    \node (s) at (5.4, 3.1) {$\overset{t}{\longrightarrow}$};
    \node (sf) at (6.5, 3) {$\lnot \lnot (A \otimes B)$};    
    \node[negnode] (3) at (6.1, 2.5) {};
    \node[posnode] (4) at (4.6, 2.2) {};
    \node[negnode] (5) at (4.6, 1.4) {};
    \node[posnode] (6) at (6.1, 1.1) {};

    \draw [bend right=5, strat-causality] (3) to (4);
    \draw [bend right=0, strat-causality] (4) to (5);
    \draw [bend left=5, strat-causality] (5) to (6);
  \end{tikzpicture}
 }
  \vspace{.2mm}
\end{minipage}
\hspace{12mm}
\begin{minipage}{3cm}
\scalebox{.95}{
  \begin{tikzpicture}
    \node (t) at (4.4, 3) {$\lnot \lnot A \otimes \lnot \lnot B$};
    \node (s) at (5.7, 3.1) {$\overset{\chi}{\longrightarrow}$};
    \node (sf) at (6.8, 3) {$\lnot \lnot (A \otimes B)$};    
    \node[negnode] (3) at (6.4, 2.5) {};
    \node[posnode] (4) at (4.8, 2.2) {};
    \node[negnode] (5) at (4.8, 1.4) {};
    \node[posnode] (6) at (3.8, 2.2) {};
    \node[negnode] (7) at (3.8, 1.4) {};
    \node[posnode] (8) at (6.4, 1.1) {};

    \draw [bend right=5, strat-causality] (3) to (4);
    \draw [bend right=0, strat-causality] (4) to (5);
    \draw [bend right=13, strat-causality] (3) to (6);
    \draw [bend left=0, strat-causality] (6) to (7);
    \draw [bend left=10, strat-causality] (5) to (8);
    \draw [out=-35, in=-150, strat-causality] (7) to (8);
  \end{tikzpicture}
  }
  \end{minipage}
  \]

This structure does not make $\lnot\lnot$ commutative, only concurrent. Indeed, one can calculate that the 2-cell 
$
	\muprod \ : \ \mu \circ \lnot\lnot\chi \circ \chi \ \To \ \chi \circ (\mu \tens \mu)
$
is the following \emph{non-invertible} map of strategies:
\[
	\begin{minipage}{4cm}
		\centering
			$\lnot^{(4)} A \tens \lnot^{(4)} B 
				\to \lnot\lnot(A \tens B)$
	\end{minipage}
	\hspace{.8cm}
	\begin{minipage}{4cm}
		\centering
			$\lnot^{(4)} A \tens \lnot^{(4)} B 
				\to \lnot\lnot(A \tens B)$
                              \end{minipage}
\]
\vspace{-2mm}
\[
\hspace{.4cm}
\begin{minipage}{2.5cm}
\centering
\scalebox{.85}{
\begin{tikzpicture}
\node[negnode] (3) at (5.4, 2.5) {};
	    \node[posnode] (4) at (4.4, 2.2) {};
	    \node[negnode] (5) at (4.4, 1.4) {};
	    \node[posnode] (6) at (4.4, 0.6) {};
	    \node[posnode] (4') at (3.4, 2.2) {};
	    \node[negnode] (5') at (3.4, 1.4) {};
	    \node[posnode] (6') at (3.4, 0.6) {};
	    \node[negnode] (7') at (3.4, -0.2) {};	    
	    \node[negnode] (7) at (4.4, -0.2) {};
	    \node[posnode] (8) at (5.4, -0.5) {};
\draw [bend right=5, strat-causality] (3) to (4);
	    \draw [bend right=0, strat-causality] (4) to (5);
	    \draw [bend right=0, strat-causality] (5) to (6);
	    \draw [bend right=0, strat-causality] (6) to (7);
	    \draw [bend left=5, strat-causality] (7) to (8);
\draw [bend right=9, strat-causality] (3) to (4');
	    \draw [bend right=0, strat-causality] (4') to (5');
	    \draw [bend right=0, strat-causality] (5') to (6');
	    \draw [bend right=0, strat-causality] (6') to (7');
	    \draw [bend right=20, strat-causality] (7') to (8);	    
\draw [bend right=0, strat-causality] (5) to (6');
	    \draw [bend right=0, strat-causality] (5') to (6);	    
\end{tikzpicture}
}
\end{minipage}
  \hspace{.5cm}
 \begin{minipage}{1cm}
 \centering
 	 \vspace{1.4cm}
         \begin{tikzcd}[column sep=3em]
           {} \arrow[Rightarrow, thick]{r}{} & {}
           \end{tikzcd}
  \vspace{9mm}
  \end{minipage}
  \hspace{1cm}
\begin{minipage}{4cm}  
\scalebox{.85}{
\begin{tikzpicture}
\node[negnode] (3) at (5.4, 2.5) {};
	    \node[posnode] (4) at (4.4, 2.2) {};
	    \node[negnode] (5) at (4.4, 1.4) {};
	    \node[posnode] (6) at (4.4, 0.6) {};
	    \node[posnode] (4') at (3.4, 2.2) {};
	    \node[negnode] (5') at (3.4, 1.4) {};
	    \node[posnode] (6') at (3.4, 0.6) {};
	    \node[negnode] (7') at (3.4, -0.2) {};	    
	    \node[negnode] (7) at (4.4, -0.2) {};
	    \node[posnode] (8) at (5.4, -0.5) {};
\draw [bend right=5, strat-causality] (3) to (4);
	    \draw [bend right=0, strat-causality] (4) to (5);
	    \draw [bend right=0, strat-causality] (5) to (6);
	    \draw [bend right=0, strat-causality] (6) to (7);
	    \draw [bend left=5, strat-causality] (7) to (8);
\draw [bend right=9, strat-causality] (3) to (4');
	    \draw [bend right=0, strat-causality] (4') to (5');
	    \draw [bend right=0, strat-causality] (5') to (6');
	    \draw [bend right=0, strat-causality] (6') to (7');
	    \draw [bend right=20, strat-causality] (7') to (8);	    
\end{tikzpicture}
}
\end{minipage}
\vspace{2mm}
\]
This makes plain the constraints of a midway synchronization point as
in the left-hand side of \eqref{eq:weak-interchange}, and generalizes
the basic example of \Cref{fig:introweak} to a polarized setting.

In summary, game semantics gives a very concrete illustration of a
concurrent pseudomonad, in which concurrency is modelled by the true
concurrency of event structures.

\section{Formal aspects of strong and monoidal pseudomonads}
\label{sec:justify}
A central challenge in developing higher-categorical definitions is
to identify suitable axioms on 2-cells to ensure coherence. 

In this technical section we justify the definitions in this paper in two ways. First, we lift a correspondence between strengths and certain \emph{actions} from the categorical setting (see~\eg~\cite{McDermott2022}) to the bicategorical one. 
This is important from a semantic perspective, but also yields a form of coherence result.
Second, we show our definitions arise naturally from higher-categorical considerations. 
This is a standard approach to verifying the correctness of a definition: \cf~\eg~\cite{Marmolejo2004,Gambino2021formaltheory}.

\subsection{ Strengths as actions }
\label{sec:actions}
Moggi's \emph{monadic metalanguage}~\cite{Moggi1991} extends the simply-typed $\uplambda$-calculus with explicit monadic types. 
It is modelled by a strong monad on a cartesian (more generally, monoidal) category.
His \emph{computational $\uplambda$-calculus}, on the other hand, has the same types as the simply-typed $\uplambda$-calculus. 
It is modelled by a Freyd category, which can equivalently be defined as an action extending the monoidal structure (see~\cite[B.3]{LevyBook}).
We can see these capture the same notion of program, because giving a left strength for a monad $T$ on 
	$(\Ccat, \otimes, I)$  
is equivalent to giving a left action of 
	$(\Ccat, \otimes, I)$ 
on the Kleisli category $\Ccat_T$ which extends the monoidal structure
(\eg~\cite[Proposition~4.3]{McDermott2022}).

This correspondence also holds bicategorically. For the definition of bicategorical actions, we use~\cite[Definition~19]{ACT2023}.
We first observe that every strong pseudomonad induces an action. 

\begin{restatable}{proposition}{strengthtoleftaction}
	\label{res:action-from-strength}
  Every strong pseudomonad $(T, t)$ on $(\B, \otimes, I)$ induces an action of
  $\B$ on the Kleisli bicategory $\B_T$, where the pseudofunctor 
  	$\act : \B \times \B_T \to \B_T$ 
  is given on objects by 
  	$A \act B = A \otimes B$, 
  and on morphisms as 
  \[
     f \act g \ \ := \ \  
     	\big(
   			A \otimes B 
\xra{f \tens g} A' \otimes TB' 
     		\overset{t}{\longrightarrow} 
     		T(A' \otimes B') 
     	\big)
   \]
   for $f : A \to A'$ and $g : B \to TB'$, with the same  action on 2-cells.
 \end{restatable}

 The action $\act : \B \times \B_T \to \B_T$ of \Cref{res:action-from-strength} extends the canonical action 
 	$\otimes : \B \times \B \to \B$ 
 given by the monoidal structure. Indeed, we have a pseudonatural transformation 
\begin{equation}
   \label{eq:extendingaction}
\begin{tikzcd}
  	{\B \times \B_T} & \B_T \\
  	{\B \times \B} & \B
  	\arrow["\act", from=1-1, to=1-2]
  	\arrow["K"', from=2-2, to=1-2]
  	\arrow["{\B \times K}", from=2-1, to=1-1]
  	\arrow["\tens"', from=2-1, to=2-2]
  	\arrow["\theta", shift right, shorten <=5pt, shorten >=5pt, Rightarrow, from=2-1, to=1-2]
  \end{tikzcd}
\end{equation}
where $K : \B \to \B_T$ is the identity-on-objects pseudofunctor sending 
	$f : A \to A'$ 
to 
	$\eta_{A'}\circ f : A \to TA'$.
Moreover, the two actions $\act$ and $\otimes$ agree on objects, and
the  1-cell
components $\theta_{A, B}$ of the transformation are all the
identity. Such a transformation is known as an \emph{icon}
\cite{Lack2008Icons}. The 2-cell components of $\theta$ are
nontrivial:
for each $f : A \to A'$ and $g : B \to B'$ we have an isomorphism
 \[
\theta_{f, g} : f \act K(g) \overset{\cong}\Longrightarrow K(f
    \otimes g)
  \]
derived from
  the modification $\z$, satisfying the
  coherence laws.

We now prove an equivalence between left
strengths and left actions.
Our correspondence theorem uses the following two categories for a
pseudomonad $T$ on $(\B, \otimes, I)$:
\begin{itemize}
  \item $\mathbf{LeftStr}(T)$, the category whose objects are left
    strengths for $T$, and whose morphisms from $t$ to $t'$ are
    modifications which commute with all the strength data;
  \item $\LeftAct(T)$, the category whose objects are extensions of the canonical action of $\B$ on itself, in the sense they are a \emph{0-strict morphism of actions} as defined in~\cite{ACT2023}, and whose morphisms from
    $(\act, \theta)$ to $(\act', \theta')$ are icons $\act \To \act'$
    which commute with $\theta$ and $\theta'$.     
\end{itemize}

\begin{restatable}{theorem}{leftactionsandleftstrengthsequivalent}
	\label{res:left-actions-and-left-strengths-equivalent}
  For any pseudomonad $T$ on a monoidal bicategory 
  	$(\B, \otimes, I)$, the categories $\mathbf{LeftStr}(T)$ and $\LeftAct(T)$ are equivalent. 
\end{restatable}

This theorem gives a slick way to prove \Cref{res:freyd-structure}, because constructing an action is easier than constructing the strength.
Moreover, \Cref{sec:premonoidal-Kleisli-bicats} suggests the following extension. 

\begin{theorem}
	\label{res:commutative-monads-and-monoidal-structures-equivalent}
	For any pseudomonad $T$ on a monoidal bicategory $(\B, \tens, \tensu)$,
	there is an equivalence of categories between: 
	\begin{enumerate}
	\item 
		Monoidal structures on $\B_T$ extending that on $\B$ analogously to the extension of actions in \Cref{res:left-actions-and-left-strengths-equivalent}; and
	\item 
		Commutative structures on $T$.
	\end{enumerate}
	In each case morphisms are defined analogously to \Cref{res:left-actions-and-left-strengths-equivalent}.
\end{theorem}

On the other hand, every monoidal structure on $T$ determines a monoidal structure on $B_T$ with the same action on objects as in $\B$ and the action on 1-cells given by 
	\[ 
	f \ast g :=
		A \tens B 
		\xra{f \tens g}
		T(A) \tens T(B)
		\xra{\chi_{A, B}}
		T(A \tens B)
	\]
Comparing with the situation for actions, the next result is then as expected. For closely-related results proven using sophisticated strictification techniques, see~\cite{Miranda2024eilenbergmoore,Miranda2023kleisliobjects,Miranda2024kleislistructures}.

\begin{theorem}
	\label{res:monoidal-monads-and-monoidal-structures-equivalent}
	For any pseudomonad $T$ on a monoidal bicategory $(\B, \tens, \tensu)$,
	there is an equivalence of categories between: 
	\begin{enumerate}
	\item 
		Monoidal structures on $\B_T$ extending that on $\B$ analogously to the extension of actions in \Cref{res:left-actions-and-left-strengths-equivalent}; and
	\item 
		Monoidal structures on $T$.
	\end{enumerate}
	In each case morphisms are defined analogously to \Cref{res:left-actions-and-left-strengths-equivalent}.
\end{theorem}

Putting together the preceding two theorems, we obtain the promised equivalence between monoidal and commutative structures.

\begin{theorem}
	\label{res:monoidal-monads-and-commutative-monads-equivalent}
	For any pseudomonad $T$ on a monoidal bicategory $(\B, \tens, \tensu)$,
	there is an equivalence of categories between monoidal structures on $T$ and commutative structures on $T$, where in each case morphisms are defined analogously to \Cref{res:left-actions-and-left-strengths-equivalent}.
\end{theorem}

\subsubsection{Coherence.}
Because they are degenerate tricategories, coherence for monoidal bicategories is a subtle matter. While it is true that in certain freely-generated monoidal bicategories all diagrams of structural 2-cells commute~\cite[Corollary 10.6]{Gurski2013}, one must take care about the basic data one is using. Indeed, in any monoidal bicategory $(\B, \tens, I)$ with non-equal endo-1-cells $a, b : I \to I$ there exist diagrams of structural 2-cells involving $a$ and $b$ which the coherence theorem does not require to commute (see \cite[\S 10.3]{Gurski2013}).

Accordingly, because the identity pseudomonad has canonical strong, monoidal, commutative, and concurrent structures, one cannot hope for every diagram involving such structures and the underlying monoidal bicategory $(\B, \tens, I)$ to commute for any choice of $(\B, \tens, I)$.

Nonetheless, \Cref{res:left-actions-and-left-strengths-equivalent,res:commutative-monads-and-monoidal-structures-equivalent,res:monoidal-monads-and-monoidal-structures-equivalent} may be seen as showing strong, commutative, and monoidal pseudomonads are as coherent as one would expect. Roughly, the argument for strong pseudomonads is as follows.
By 
	\Cref{res:left-actions-and-left-strengths-equivalent} 
every strong pseudomonad is isomorphic to one induced by an action.
But such actions are equivalently `trihomomorphisms' between degenerate tricategories (see~\cite[\S4]{ACT2023}); accordingly, Gurski's coherence theorem~\cite[Corollary 10.15]{Gurski2013} applies. It follows that this coherence applies likewise in the induced strong pseudomonad, and hence in the starting strong pseudomonad.  
Similar remarks hold for the monoidal and commutative cases.

\subsection{Strengths as internal pseudomonads}

We now place our definitions in a wider mathematical context.
We shall show the axioms for strong and monoidal pseudomonads (and hence also for concurrent pseudomonads) arise from standard higher-categorical definitions. 
It follows that our choice of coherence axioms is canonical.

We first recall the 1-dimensional situation. The axioms for strong monads and monoidal monads both arise from the definition of a
	\emph{monad internal to a 2-category $\C$}.
This is defined by taking the categorical definition and replacing the underlying functor $T$ by a 1-cell and the natural transformations $\mu$ and $\eta$ by 2-cells (see~\eg~\cite{Street1972formaltheory}). 
Taking $\C := \Cat$ recovers plain monads. 
Taking the 2-category $\MonCat$ of monoidal categories, lax monoidal functors, and monoidal natural transformations recovers monoidal monads.
For a monoidal category $(\Vcat, \tens, \tensu)$, taking the 2-category $\Vact{\Vcat}$ of $\Vcat$-actions, equivariant functors, and equivariant transformations (as defined in~\eg~\cite{McCrudden2000coalg}) recovers strong monads. 

Just as one can define monads in any 2-category, so one can define pseudomonads in any weak 3-category (known as a \emph{tricategory}~\cite{Gordon1995}): see~\eg~\cite{Lack2000}. 
Our definition of monoidal pseudomonads---and hence concurrent pseudomonads---was carefully chosen to guarantee the following.

\begin{theorem}
	A monoidal pseudomonad such that $\iota$ and $\chi$ are equipped with the structure of an adjoint equivalence is exactly a pseudomonad internal to the tricategory $\MonBicat$ of monoidal bicategories~\cite{Cheng2011monbicat}.
\end{theorem}

To justify strong pseudomonads we need to work a little harder, because we cannot rely on a pre-existing tricategory of actions. 
However, for any monoidal bicategory $(\V, \tens, \tensu)$ we can define a tricategory $\Vact{\V}$ by small adjustments to the definition of $\MonBicat$. 
We sketch the definitions.

The objects of $\Vact{\V}$ are left $\V$-actions. The 1-cells 
	$(\act, \alphatri, \lambdatri) \to (\actstar, \alphastar, \lambdastar)$ are 		
	\emph{equivariant morphisms}, 
which consist of a pseudofunctor $F : \B \to \C$ between the bicategories acted on, a pseudonatural transformation $\actmaptrans$ with components
		$\actmaptrans_{X, B} : X \actstar FB \to F(X \act B)$, 
and invertible modifications $\actmapassoccell$ and $\actmapunitcell$ with components as shown below, subject to an associativity law and two unit laws:
\[\begin{tikzcd}[scalenodes = .9, column sep = 1em]
		{(X \tens Y) \actstar FA} & {X \actstar (Y \actstar FA)} & {X \actstar F(Y \acttri A)} \\
		{F{\big( (X \tens Y) \acttri A \big)}} && {F{\big( X \acttri (Y \acttri A) \big)}}
		\arrow["\actmaptrans", from=1-3, to=2-3]
		\arrow["{X \actstar \actmaptrans}", from=1-2, to=1-3]
		\arrow["\alphastar", from=1-1, to=1-2]
		\arrow["\actmaptrans"', from=1-1, to=2-1]
		\arrow[""{name=0, anchor=center, inner sep=0}, "F\alphatri"', from=2-1, to=2-3]
		\arrow["\:\actmapassoccell", shorten >=3pt, Rightarrow, from=1-2, to=0]
	\end{tikzcd}
\hspace{1mm}
\begin{tikzcd}[scalenodes = .9, column sep = 1em]
		{\tensu \actstar FA} \\
		{F(\tensu \acttri A)} & FA
		\arrow["\actmaptrans"', from=1-1, to=2-1]
		\arrow["F\lambdatri"', from=2-1, to=2-2]
		\arrow[""{name=0, anchor=center, inner sep=0}, "\lambdastar", curve={height=-12pt}, from=1-1, to=2-2]
		\arrow["\actmapunitcell", shorten >=5pt, Rightarrow, from=2-1, to=0]
	\end{tikzcd}
\]

The 2-cells
		$(F, \actmaptrans, \actmapassoccell, \actmapunitcell)
				\to
				(F', \actmaptrans', \actmapassoccell', \actmapunitcell')$
are \emph{equivariant transformations}, 
consisting of a pseudonatural transformation 
		$\acttranstrans : F \To F'$
and an invertible modification  $\acttransmodif$ with components 
		$\acttransmodif : 
			\actmaptrans' \circ (X \actstar \acttranstrans) 
			\To 
			\acttranstrans \circ \actmaptrans$ 
subject to an associativity law and unit law.
The 3-cells 
	$(\acttranstrans, \acttransmodif) \to (\acttranstrans', \acttransmodif')$
are \emph{equivariant modifications}, which are modifications 
	$\actthreecell : \acttranstrans \to \acttranstrans'$
subject to a law relating $\acttransmodif$ and $\acttransmodif'$.

Because the data and axioms is similar to that for $\MonBicat$, it is relatively easy to show $\Vact{\V}$ forms a tricategory (\cf~\cite{Cheng2011monbicat}). 

In the 1-dimensional setting a $\Vcat$-action on a $\Ccat$ is equivalently a strong monoidal functor 
	$\Vcat \to [\Ccat, \Ccat]$ 
into the strict monoidal category of endofunctors on $\Ccat$. 
We verify our definition of $\Vact{\V}$ with a bicategorical version of this result.
To state the proposition, we restrict to equivariant data that strictly preserves the base bicategories: write $\mathbf{LAct}(\B)$ for the bicategory with objects $\V$-actions on $\B$, 1-cells equivariant morphisms with underlying pseudofunctor $\id_{\B}$, and 2-cells equivariant transformations of the form $(\id, \Gamma)$.

\begin{proposition}
	For any monoidal bicategory $(\V, \tens, \tensu)$ and bicategory $\B$,
	the currying biequivalence of \cite[\S1.34]{Street1980}
	lifts to a biequivalence between 
		$\mathbf{LAct}(\B)
			\simeq \MonBicat{\big(\V, \homBicat{\B}{\B}\big)}$.
\end{proposition}

We can now see that strong pseudomonads have a canonical status:
\begin{theorem}
A strong pseudomonad on $\V$ is equivalently a pseudomonad on the canonical action of $\V$ on itself in $\Vact{\V}$.
\end{theorem}

\section{Conclusion}

In this paper we have laid a method for modelling effectful programs
in 2-dimensional categories,
using bicategorical versions of strong and commutative monads 
	(\Cref{sec:strong-pseudomonads,sec:symmetry}). The extra
        structure available in this setting can be used to capture
phenomena that are otherwise invisible (\Cref{sec:concurrency}). 
In doing so, we have brought together observations in concurrency theory (\cf~\cite{Hoare2011,mellies2020concurrent})
with new kinds of models motivated by entirely different concerns (\eg~\cite{cg1,FioreSpecies,Cruttwell2022}).
Our definitions arise as expected from purely category-theoretic
concerns (\Cref{sec:justify}).

Moggi's framework paved the way for understanding effectful
programming from various new perspectives (\eg~\cite{Plotkin2001,Hyland2007,Mellies2012,Katsumata2014}).
We see this paper as the beginning of a fruitful line of
future work, mirroring these developments.
Syntactically, it would be natural to develop the internal languages of the various pseudomonad structures presented here (\cf~\cite{LICS2019}).
Semantically, the development of \Cref{sec:concurrency} suggests
making explicit the 2-dimensional structure implicit in long-standing
models such as those detailed in \Cref{sec:semant-2-dimens}.

The precise structure of the Kleisli bicategory
of a concurrent pseudomonad remains to be understood
(\cf~\cite{Gambino2022}), as are the connections between ``graded
pseudomonads'', strong pseudomonads, and pseudo-distributive laws
(\eg~\cite{Walker2021}).
In another direction, strong monads are
induced by strong adjunctions
\cite{levy2005adjunction,Mellies2012,fiore-munch-curien}, which should
also be generalized to a 2-dimensional setting, to provide a finer
setting for studying concurrency.  An important (and simpler) special
case is the class of dialogue categories
\cite{DBLP:journals/corr/abs-0705-0462,mellies2012game}, which
includes examples based on games.

\paragraph{Acknowledgements.} We have benefited from
        presenting this work at various workshops, and discussing it
        with a number of people. In particular, Flavien Breuvart and
        Dylan McDermott pointed us to the idea of 2-dimensional monads
        for concurrency, and we are grateful for discussions with
        Nathanael Arkor, Cristina Matache, Guy {McCusker}, Tarmo Uustalu, Sean Moss
        (who suggested \Cref{ex:seans-example}), and Sam Staton.
We are grateful to Adrian Miranda for discussions about coherence and strictification for the proof of \Cref{res:monoidal-monads-and-monoidal-structures-equivalent}.
        
        We acknoweldge support from a Royal Society
        University Research Fellowship, ERC grand BLaSt, a Paris Region
        Fellowship co-funded by the European Union (Marie
        Sklodowska-Curie grant agreement 945298), and the Air Force Office of Scientific Research (award number
        FA9550-21-1-0038).

\bibliographystyle{ACM-Reference-Format}
\bibliography{expanded}

\appendix
 
\clearpage

\section{Two redundant axioms for left-strong pseudomonads}
\label{sec:missing-axioms}

The two equations below are derivable from the 8 presented in the main body.
\vspace{0mm}
\[
\begin{tikzcd}[
    	scale cd= 1, 
    	column sep=1em,
    	execute at end picture={
    		  						\foreach \nom in  {A,B,C, X, Y, Z}
    		  			  				{\coordinate (\nom) at (\nom.center);}
    		  						\fill[\strongxcolour,opacity=\opacity] 
    		  			  				(X) to[bend left = 12] (Y) -- (Z);
    		  			  			\fill[\monoidallcolour,opacity=\opacity] 
    		  			  			  	(A) -- (B) -- (C);
    	}
    ]
	\alias{A} {I(AT_B)} && \alias{B} {(IA)T_B} \\
	\alias{X} {IT_{AB}} & \alias{C} {AT_B} \\
	\alias{Z} {T_{I(AB)}} && {T_{(IA)B}} \\
	& \alias{Y} {T_{AB}}
	\arrow["\alpha"', from=1-3, to=1-1]
	\arrow["It"', from=1-1, to=2-1]
	\arrow["t"', from=2-1, to=3-1]
	\arrow[""{name=0, anchor=center, inner sep=0}, "t", from=1-3, to=3-3]
	\arrow["{T_{\lambda B}}", from=3-3, to=4-2]
	\arrow["{T_{\lambda}}"', from=3-1, to=4-2]
	\arrow[""{name=1, anchor=center, inner sep=0}, "\lambda"{below}', from=1-1, to=2-2]
	\arrow[""{name=2, anchor=center, inner sep=0}, "t"'{swap}, from=2-2, to=4-2]
	\arrow["{\lambda T_B}"{below, xshift=2mm}, from=1-3, to=2-2]
	\arrow[""{name=3, anchor=center, inner sep=0}, 
						"\lambda"{near start, above}, 					
						curve={height=-6pt}, from=2-1, to=4-2]
	\arrow["{ \oncell{\x} }"{description, pos=-0.2}, draw=none, from=3-1, to=3]
	\arrow["\cong"{description}, draw=none, from=2, to=0]
	\arrow["\cong"{description, yshift =1mm}, draw=none, from=3, to=1]
	\arrow["{ \oncell{\montrianglel}  }"{description}, draw=none, from=1, to=1-3]
      \end{tikzcd}
      =
      \begin{tikzcd}[
      		scale cd = 1, 
      		column sep=1em,
	      	execute at end picture={
	 			\foreach \nom in  {A,B,C, D}
	   				{\coordinate (\nom) at (\nom.center);}
	 			\fill[\strongycolour,opacity=\opacity] 
	 	  				(A) -- (B) -- (C) -- (D);			
 	   }
		]
	\alias{A} {I(AT_B)} && \alias{B} {(IA)T_B} \\
	{IT_{AB}} \\
	\alias{D} {T_{I(AB)}} && \alias{C} {T_{(IA)B}} \\
	& \alias{E} {T_{AB}}
	\arrow["\alpha"', from=1-3, to=1-1]
	\arrow["It"', from=1-1, to=2-1]
	\arrow["t"', from=2-1, to=3-1]
	\arrow[""{name=0, anchor=center, inner sep=0}, "t", from=1-3, to=3-3]
	\arrow[""{name=1, anchor=center, inner sep=0}, "{T_{\lambda B}}", from=3-3, to=4-2]
	\arrow[""{name=2, anchor=center, inner sep=0}, "{T_{\lambda}}"', from=3-1, to=4-2]
	\arrow["{T_\alpha}"{description}, from=3-3, to=3-1]
	\arrow["{ {T_\montrianglel}  }"{description}, draw=none, from=2, to=1]
	\arrow["{ \oncell{\y} }"{description, pos=0.4}, draw=none, from=2-1, to=0]
      \end{tikzcd}
\]
\vspace{-2mm}
\[
      \begin{tikzcd}[
    	column sep = 2.5em, 
execute at end picture={
 			\foreach \nom in  {A,B,C, D, X, Y, Z, P, Q}
   				{\coordinate (\nom) at (\nom.center);}
 			\fill[\strongwcolour,opacity=\opacity] 
 	  				(X) -- (Z) -- (P) -- (Q);
 			\fill[\monadpcolour,opacity=\opacity] 
 	  				(X) -- (Y) -- (Z);	  				
      	   }
      ]
	\alias{X} {AT^2_B} &  \alias{Y} {AT_B} \\
	{T_{AT_B}} &  \alias{Z} {AT_B} \\
	\alias{Q} {T^2_{AB}} & \alias{P} {T_{AB}}
	\arrow["t"', from=1-1, to=2-1]
	\arrow["{T_t}"', from=2-1, to=3-1]
	\arrow["\mu"', from=3-1, to=3-2]
	\arrow[Rightarrow, no head, from=1-2, to=2-2]
	\arrow["{A T_\eta}"', from=1-2, to=1-1]
	\arrow[""{name=0, anchor=center, inner sep=0}, "t", from=2-2, to=3-2]
	\arrow[""{name=1, anchor=center, inner sep=0}, 
						"{\scriptstyle A\mu}"'{below}, from=1-1, to=2-2]
	\arrow["\oncell{ {A \p}  }"'{yshift=-2mm, xshift=1mm}, shift left=5, draw=none, from=1, to=1-2]
	\arrow["\oncell{ \w }"{description, pos=0.3, yshift=-1mm}, draw=none, from=2-1, to=0]
      \end{tikzcd}
\hspace{1mm}
      =
      \hspace{1mm}
\begin{tikzcd}[
    	column sep = 2.5em, 
execute at end picture={
 			\foreach \nom in  {A,B,C, D, X, Y, Z, P, Q}
   				{\coordinate (\nom) at (\nom.center);}
 			\fill[\strongzcolour,opacity=\opacity] 
 	  				(A) -- (B) -- (C);
 			\fill[\monadpcolour,opacity=\opacity] 
 	  				(B) -- (C) -- (Z);	  				
      	   }      
      ]
	{AT^2_B} & {AT_B} \\
	\alias{A} {T_{AT_B}} &  \alias{B} {T_{AB}} \\
	\alias{C} {T^2_{AB}} &  \alias{Z} {T_{AB}}
	\arrow["{AT_\eta}"', from=1-2, to=1-1]
	\arrow["T_t"', from=2-1, to=3-1]
	\arrow["\mu"', from=3-1, to=3-2]
	\arrow[""{name=0, anchor=center, inner sep=0}, "t"', from=1-1, to=2-1]
	\arrow[""{name=1, anchor=center, inner sep=0}, "t", from=1-2, to=2-2]
	\arrow["{T_{A \eta}}"'{yshift=-.3mm}, from=2-2, to=2-1]
	\arrow[""{name=2, anchor=center, inner sep=0}, 
						"{\scriptscriptstyle T_\eta}"{description}, from=2-2, to=3-1]
	\arrow[""{name=3, anchor=center, inner sep=0}, Rightarrow, no head, from=2-2, to=3-2]
	\arrow["{\iso}"{description}, draw=none, from=0, to=1, yshift=1mm]
	\arrow["\oncell{ {T_{\z}} }"{yshift=0mm, xshift=-3mm, description}, 
					draw=none, from=2-1, to=2]
	\arrow["\oncell{ \p }"{description, yshift=-2mm}, draw=none, from=2, to=3]
      \end{tikzcd}
\]

\section{The data for right-strong pseudomonads}
\label{sec:right-strong-data}

The data of a right-strong pseudomonad is shown below. The 8 axioms are essentially those of a left-strong pseudomonad, with the action of parameters on the left replaced by the the corresponding action on the right so that, for example, $I T_A$ is replaced by $T_A I$ and $\lambda$ is replaced by $\rho$.
\newcommand{\tweakyspace}{  \hspace{6mm}  }
\[
\begin{tikzcd}[
		execute at end picture={
		 			\foreach \nom in  {A,B,C, D, X, Y, Z, P, Q}
		   				{\coordinate (\nom) at (\nom.center);}
		 			\fill[\strongxcolour,opacity=\opacity] 
		 	  				(X) to[curve={height=6pt}] (Y) -- (Z);			
		      	   }
	]
		\alias{X} {T_A I } & \alias{Z} {T_{AI}} \\
		& \alias{Y} {T_A}
		\arrow["s", from=1-1, to=1-2]
		\arrow["{T_\rho}", from=1-2, to=2-2]
		\arrow[""{name=0, anchor=center, inner sep=0}, "\rho"', curve={height=6pt}, from=1-1, to=2-2]
		\arrow["{\x'}"{description}, shorten >=3pt, Rightarrow, no body, from=1-2, to=0]
	\end{tikzcd}
\tweakyspace
\begin{tikzcd}[		
		execute at end picture={
			 			\foreach \nom in  {A,B,C, D, X, Y, Z, P, Q}
			   				{\coordinate (\nom) at (\nom.center);}
			 			\fill[\strongycolour,opacity=\opacity] 
			 	  				(X) -- (Y) -- (Z) -- (P);			
			      	   }
		]
		\alias{X} {(T_A B) C} & {T_A(BC)} & \alias{Y} {T_{A(BC)}} \\
		\alias{P} {T_{AB} C} && \alias{Z} {T_{(AB)C}}
		\arrow["\alpha", from=1-1, to=1-2]
		\arrow["s", from=1-2, to=1-3]
		\arrow["{s C}"', from=1-1, to=2-1]
		\arrow[""{name=0, anchor=center, inner sep=0}, "s"', from=2-1, to=2-3]
		\arrow["{T_\alpha}"', from=2-3, to=1-3]
		\arrow["{\y'}"', shorten >=3pt, Rightarrow, from=1-2, to=0]
	\end{tikzcd}
\]
\vspace{-4mm}
\[
\begin{tikzcd}[
			execute at end picture={
			 			\foreach \nom in  {A,B,C, D, X, Y, Z, P, Q}
			   				{\coordinate (\nom) at (\nom.center);}
			 			\fill[\strongzcolour,opacity=\opacity] 
			 	  				(X) to[curve={height=6pt}] (Y) -- (Z);			
			      	   }
		]
		\alias{X} AB & \alias{Z} {T_A B} \\
		& \alias{Y} {T_{AB}}
		\arrow["{\eta B}", from=1-1, to=1-2]
		\arrow["s", from=1-2, to=2-2]
		\arrow[""{name=0, anchor=center, inner sep=0}, "\eta"', curve={height=6pt}, from=1-1, to=2-2]
		\arrow["{\z'}"', shorten >=3pt, Rightarrow, from=1-2, to=0]
	\end{tikzcd}
\tweakyspace
\begin{tikzcd}[		
			execute at end picture={
				 			\foreach \nom in  {A,B,C, D, X, Y, Z, P, Q}
				   				{\coordinate (\nom) at (\nom.center);}
				 			\fill[\strongwcolour,opacity=\opacity] 
				 	  				(X) -- (Y) -- (Z) -- (P);			
				      	   }
		]
		\alias{X} {T^2_A B} & {T_{T_A B}} & \alias{Y} {T^2_{AB}} \\
		\alias{P} {T_{A} B} && \alias{Z} {T_{AB}}
		\arrow["s", from=1-1, to=1-2]
		\arrow["{T_s}", from=1-2, to=1-3]
		\arrow["\mu", from=1-3, to=2-3]
		\arrow["{\mu B}"', from=1-1, to=2-1]
		\arrow[""{name=0, anchor=center, inner sep=0}, "s"', from=2-1, to=2-3]
		\arrow["{\w'}"', shorten >=3pt, Rightarrow, from=1-2, to=0]
	\end{tikzcd}
\]

 \section{The tricategory of \texorpdfstring{$\V$-actions}{actions}}
 \label{sec:tricategory-of-V-actions}
 We give the coherence axioms for the tricategory $\Vact{\V}$. Throughout this section we fix a monoidal bicategory $(\V, \tens, \tensu)$ with left $\V$-actions
		$(\acttri, \alphatri, \lambdatri) : \V \times \B \to \B$
and
		$(\actstar, \alphastar, \lambdastar) : \V \times \C \to \C$ 
defined as in~\cite{ACT2023}.
Composition and identities are defined using the definition of composition of pseudonatural transformations and modifications.  
Similarly,  the structural transformations and structural modifications are all defined by endowing the corresponding structure in $\Bicat$ with equivariant structure in the obvious way.
In each case, most of the work lies in showing the coherence conditions still hold; the various axioms hold because they hold in $\Bicat$.

\begin{definition}
	An 
		\emph{equivariant morphism}
		$(\acttri, \alphatri, \lambdatri) \to (\actstar, \alphastar, \lambdastar)$
	consists of:
		\begin{enumerate}[wide]
		\item 
			A pseudofunctor $F : \B \to \C$;
		\item 	
			A pseudonatural transformation $\actmaptrans_{X, B} : X \actstar FB \to F(X \act B)$;
		\item 
			Invertible modifications $\actmapassoccell$ and $\actmapunitcell$ as shown below, subject to the associativity law and two unit laws  in~\Cref{fig:V-act-1-cells}:
\[\begin{tikzcd}[column sep = 1em, scalenodes = .9]
				{(X \tens Y) \actstar FA} & {X \actstar (Y \actstar FA)} & {X \actstar F(Y \acttri A)} \\
				{F{\big( (X \tens Y) \acttri A \big)}} && {F{\big( X \acttri (Y \acttri A) \big)}}
				\arrow["\actmaptrans", from=1-3, to=2-3]
				\arrow["{X \actstar \actmaptrans}", from=1-2, to=1-3]
				\arrow["\alphastar", from=1-1, to=1-2]
				\arrow["\actmaptrans"', from=1-1, to=2-1]
				\arrow[""{name=0, anchor=center, inner sep=0}, "F\alphatri"', from=2-1, to=2-3]
				\arrow["\:\actmapassoccell", shorten >=3pt, Rightarrow, from=1-2, to=0]
			\end{tikzcd}
\hspace{1mm}
\begin{tikzcd}[column sep = 1em, scalenodes = .9]
				{\tensu \actstar FA} \\
				{F(\tensu \acttri A)} & FA
				\arrow["\actmaptrans"', from=1-1, to=2-1]
				\arrow["F\lambdatri"', from=2-1, to=2-2]
				\arrow[""{name=0, anchor=center, inner sep=0}, "\lambdastar", curve={height=-12pt}, from=1-1, to=2-2]
				\arrow["\actmapunitcell", shorten >=5pt, Rightarrow, from=2-1, to=0]
			\end{tikzcd}\]
		\end{enumerate}
\end{definition}

\begin{figure*}
	\centering
\begin{minipage}{\textwidth}
\[
		(1) \hspace{3mm}
\begin{tikzcd}[scalenodes = .9, column sep = 2.5em, row sep = 1.2em]
			& {\big( (XY)Z \big) F_A} \\
			{(XY)(Z F_A)} & {\big( X(YZ) \big) F_A} & {F_{((XY)Z)A}} \\
			{X \big( Y (Z F_A) \big)} & {X \big( (YZ) F_A \big)} & {F_{(X(YZ))A}} \\
			{X (Y F_{ZA})} & {X F_{(YZ)A}} \\
			{X F_{Y(ZA)}} && {F_{X( (YZ) A) }} \\
			& {F_{X(Y(ZA))}}
			\arrow[""{name=0, anchor=center, inner sep=0}, "\alphastar"', from=1-2, to=2-1]
			\arrow[""{name=1, anchor=center, inner sep=0}, "{\alpha \actstar F_A}"', from=1-2, to=2-2]
			\arrow["\alphastar"', from=2-2, to=3-2]
			\arrow[""{name=2, anchor=center, inner sep=0}, "{X \actstar \alphastar}"', from=3-2, to=3-1]
			\arrow["\alphastar"', from=2-1, to=3-1]
			\arrow["{X \actstar \actmaptrans}"', from=3-2, to=4-2]
			\arrow[""{name=3, anchor=center, inner sep=0}, "{X \actstar F_{\alphatri}}", from=4-2, to=5-1]
			\arrow["\actmaptrans"', from=4-2, to=5-3]
			\arrow["\actmaptrans"', from=5-1, to=6-2]
			\arrow["{F_{X \acttri \alphatri}}", from=5-3, to=6-2]
			\arrow["\actmaptrans", from=1-2, to=2-3]
			\arrow[""{name=4, anchor=center, inner sep=0}, "{F_{\alpha \acttri A}}", from=2-3, to=3-3]
			\arrow[""{name=5, anchor=center, inner sep=0}, "{F_{\alphatri}}"{description}, from=3-3, to=5-3]
			\arrow["\actmaptrans"{description}, from=2-2, to=3-3]
			\arrow["{X \actstar(Y \actstar \actmaptrans)}"', from=3-1, to=4-1]
			\arrow["{X \actstar \actmaptrans}"', from=4-1, to=5-1]
			\arrow["\cong"{description}, draw=none, from=4-2, to=6-2]
			\arrow["\pentstar"{description}, draw=none, from=0, to=2]
			\arrow["\iso"{description}, draw=none, from=1, to=4]
			\arrow["\actmapassoccell"{description}, draw=none, from=3-2, to=5]
			\arrow["{X \actstar \actmapassoccell}"{description}, draw=none, from=2, to=3]
		\end{tikzcd}
=
\begin{tikzcd}[scalenodes = .9, column sep = 1em, row sep = 1.2em]
			& {(XY)(Z F_A)} & {\big( (XY)Z \big) F_A} \\
			&&& {F_{((XY)Z)A}} \\
			{X \big( Y (Z F_A) \big)} & {(XY) F_{ZA}} & {F_{(XY)(ZA)}} & {F_{(X(YZ))A}} \\
			\\
			{X (Y F_{ZA})} &&& {F_{X( (YZ) A) }} \\
			& {X F_{Y(ZA)}} & {F_{X(Y(ZA))}}
			\arrow["\alphastar"', from=1-3, to=1-2]
			\arrow[""{name=0, anchor=center, inner sep=0}, "\alphastar"', from=1-2, to=3-1]
			\arrow["\actmaptrans"', from=6-2, to=6-3]
			\arrow[""{name=1, anchor=center, inner sep=0}, "{F_{X \acttri \alphatri}}", from=5-4, to=6-3]
			\arrow["\actmaptrans", from=1-3, to=2-4]
			\arrow["{F_{\alpha \acttri A}}", from=2-4, to=3-4]
			\arrow["{F_{\alphatri}}", from=3-4, to=5-4]
			\arrow["{X \actstar(Y \actstar \actmaptrans)}"', from=3-1, to=5-1]
			\arrow["{X \actstar \actmaptrans}"', from=5-1, to=6-2]
			\arrow["{(XY) \actstar \actmaptrans}"{description}, from=1-2, to=3-2]
			\arrow[""{name=2, anchor=center, inner sep=0}, "\alphastar"{description}, from=3-2, to=5-1]
			\arrow[""{name=3, anchor=center, inner sep=0}, "\actmaptrans"{description}, from=3-2, to=3-3]
			\arrow[""{name=4, anchor=center, inner sep=0}, "{F_{\alphatri}}"{description}, from=2-4, to=3-3]
			\arrow[from=3-3, to=6-3]
			\arrow["\actmapassoccell"{description}, draw=none, from=1-3, to=3-3]
			\arrow["{F_{\penttri}}"{description}, draw=none, from=4, to=1]
			\arrow["\iso"{description}, draw=none, from=0, to=2]
			\arrow["\actmapassoccell"{description}, draw=none, from=3, to=6-2]
		\end{tikzcd}
		\]
		\vspace{2mm}		
		\end{minipage}
\begin{minipage}{\textwidth}
		The following two diagrams are equal to the canonical structural isomorphisms:
		\vspace{2mm}
		\end{minipage}
\begin{minipage}{\textwidth}
		\[(2) \hspace{3mm}
\begin{tikzcd}[scalenodes = .9, row sep = 1.5em]
			& {(X I) F_A} & {X(IF_A)} \\
			{X F_A} && {X F_{IA}} & {X F_A} \\
			& {F_{(XI)A}} & {F_{X(IA)}} \\
			{F_{XA}} &&& {F_{XA}}
			\arrow[""{name=0, anchor=center, inner sep=0}, "\alphastar", from=1-2, to=1-3]
			\arrow["{X \actstar \actmaptrans}"', from=1-3, to=2-3]
			\arrow["\actmaptrans"'{}, from=2-3, to=3-3]
			\arrow[""{name=1, anchor=center, inner sep=0}, "\actmaptrans"{description}, from=1-2, to=3-2]
			\arrow[""{name=2, anchor=center, inner sep=0}, "{F_{\alphatri}}"', from=3-2, to=3-3]
			\arrow[""{name=3, anchor=center, inner sep=0}, "{X \acttri F_{\lambdatri}}"', from=2-3, to=2-4]
			\arrow[""{name=4, anchor=center, inner sep=0}, "{X \actstar \lambdastar}", curve={height=-6pt}, from=1-3, to=2-4]
			\arrow["\actmaptrans", from=2-4, to=4-4]
			\arrow[""{name=5, anchor=center, inner sep=0}, "{F_{X \acttri \lambdatri}}", from=3-3, to=4-4]
			\arrow["{\lambda \actstar F_A}"', from=1-2, to=2-1]
			\arrow[""{name=6, anchor=center, inner sep=0}, "\actmaptrans"', from=2-1, to=4-1]
			\arrow["{F_{\lambda \acttri A}}"', from=3-2, to=4-1]
			\arrow[""{name=7, anchor=center, inner sep=0}, "\Id"', from=4-1, to=4-4]
			\arrow["\actmapassoccell"{description}, draw=none, from=0, to=2]
			\arrow["\iso"{description}, draw=none, from=3, to=5]
			\arrow["{X \actstar \actmapunitcell}"{description}, draw=none, from=2-3, to=4]
			\arrow["\iso"{description}, draw=none, from=1, to=6]
			\arrow["{F_{\mtri}}"{description, pos=0.6}, draw=none, from=2, to=7]
\arrow[
                "{\Id}",
                rounded corners,
                to path=
                {
                -- ([yshift=2cm]\tikztostart.center)
                -- ([yshift=2cm]\tikztotarget.center)
                \tikztonodes
                -- (\tikztotarget.north)
                },
                from=2-1, to=2-4
            ]			
			\arrow[""{name=0, anchor=center, inner sep=0}, draw=none, yshift=4mm, "\mstar", from=1-2, to=1-3]                        
		\end{tikzcd}
\hspace{15mm}
(3) \hspace{3mm}
\begin{tikzcd}[scalenodes = .9, row sep = 1.5em]
			{X F_A} & {I(X F_A)} & {(IX)F_A} \\
			& {I F_{XA}} \\
			& {F_{I(XA)}} \\
			{F_{XA}} && {F_{(IX)A}}
			\arrow[""{name=0, anchor=center, inner sep=0}, "\alphastar"', from=1-3, to=1-2]
			\arrow["{I \actstar \actmaptrans}", from=1-2, to=2-2]
			\arrow[""{name=1, anchor=center, inner sep=0}, "\actmaptrans", from=2-2, to=3-2]
			\arrow["\actmaptrans", from=1-3, to=4-3]
			\arrow[""{name=2, anchor=center, inner sep=0}, "{F_{\alphatri}}"{description}, from=4-3, to=3-2]
			\arrow["{F_{\lambdatri}}"{description}, from=3-2, to=4-1]
			\arrow[""{name=3, anchor=center, inner sep=0}, "\lambdastar"', from=1-2, to=1-1]
			\arrow["\actmaptrans"', from=1-1, to=4-1]
			\arrow[""{name=4, anchor=center, inner sep=0}, "\lambdastar"', curve={height=6pt}, from=2-2, to=4-1]
			\arrow[""{name=5, anchor=center, inner sep=0}, "{F_{\lambda \acttri A}}", from=4-3, to=4-1]
			\arrow["{F_{\ltri}}"{description}, draw=none, from=3-2, to=5]
			\arrow["\iso"{description}, draw=none, from=3, to=4]
			\arrow["\actmapassoccell"{description}, draw=none, from=0, to=2]
			\arrow["\actmapunitcell"{description}, shift right, curve={height=-6pt}, draw=none, from=4-1, to=1]
\arrow[
                "{\Id}"',
                rounded corners,
                to path=
                {
                -- ([yshift=.8cm]\tikztostart.center)
                -- ([yshift=.8cm]\tikztotarget.center)
                \tikztonodes
                -- (\tikztotarget.north)
                },
                from=1-3, to=1-1
            ]			
			\arrow[""{name=0, anchor=center, inner sep=0}, draw=none, yshift=.6cm, "\lstar", from=1-3, to=1-1]     			
		\end{tikzcd}
		\]
		\end{minipage}
		\caption{Axioms for equivariant morphisms.}
		\label{fig:V-act-1-cells}
		\noindent\makebox[\linewidth]{\rule{\textwidth}{0.4pt}}
\end{figure*}

\begin{definition}
	An 
		\emph{equivariant 2-cell}
		$(F, \actmaptrans, \actmapassoccell, \actmapunitcell)
				\to
				(F', \actmaptrans', \actmapassoccell', \actmapunitcell')$
	between action morphisms of type 
		$(\act, \alphatri, \lambdatri) \to (\actstar, \alphastar, \lambdastar)$
	consists of:
\begin{enumerate}[wide]
	\item 
		A pseudonatural transformation 
			$\acttranstrans : F \To F'$;
	\item 
		An invertible modification $\acttransmodif$ with components as shown:
\[
		\begin{tikzcd}
			{X F_A} & {X F'_A} \\
			{F_{XA}} & {F'_{XA}}
			\arrow["\actmaptrans"', from=1-1, to=2-1]
			\arrow["{X \actstar \acttranstrans}", from=1-1, to=1-2]
			\arrow["{\actmaptrans'}", from=1-2, to=2-2]
			\arrow["\acttranstrans"', from=2-1, to=2-2]
			\arrow["\acttransmodif"{description}, Rightarrow, from=1-2, to=2-1]
		\end{tikzcd}	
		\]
		subject to the following unit and associativity laws:
	\end{enumerate}
\[
	(1) \hspace{2mm}
\begin{tikzcd}
		{I F_A} & {I F'_A} \\
		{F_{IA}} & {F'_{IA}} \\
		{F_A} & {F'_A}
		\arrow[""{name=0, anchor=center, inner sep=0}, "{I \actstar \acttranstrans}", from=1-1, to=1-2]
		\arrow["\actmaptrans"', from=1-1, to=2-1]
		\arrow[""{name=1, anchor=center, inner sep=0}, "{F_{\lambdatri}}"', from=2-1, to=3-1]
		\arrow["\acttranstrans"', from=3-1, to=3-2]
		\arrow[""{name=2, anchor=center, inner sep=0}, "{F'_{\lambdatri}}"{description}, from=2-2, to=3-2]
		\arrow["{\actmaptrans'}"{description}, from=1-2, to=2-2]
		\arrow[""{name=3, anchor=center, inner sep=0}, "\acttranstrans"{description}, from=2-1, to=2-2]
		\arrow[""{name=4, anchor=center, inner sep=0}, "\lambdastar", shift left, curve={height=-24pt}, from=1-2, to=3-2]
		\arrow["\acttransmodif"{description}, draw=none, from=0, to=3]
		\arrow["\iso"{description}, draw=none, from=1, to=2]
		\arrow["{\actmapunitcell'}"{description}, draw=none, from=2-2, to=4]
	\end{tikzcd}
=
\begin{tikzcd}
		{I F_A} & {I F'_A} \\
		{F_{IA}} \\
		{F_A} & {F'_A}
		\arrow["{I \actstar \acttranstrans}", from=1-1, to=1-2]
		\arrow["\actmaptrans"', from=1-1, to=2-1]
		\arrow["{F_{\lambdatri}}"', from=2-1, to=3-1]
		\arrow["\acttranstrans"', from=3-1, to=3-2]
		\arrow[""{name=0, anchor=center, inner sep=0}, "\lambdastar", shift left, curve={height=-12pt}, from=1-2, to=3-2]
		\arrow[""{name=1, anchor=center, inner sep=0}, "\lambdastar", curve={height=-24pt}, from=1-1, to=3-1]
		\arrow["\actmapunitcell"{description}, draw=none, from=2-1, to=1]
		\arrow["\iso"{description}, draw=none, from=1, to=0]
	\end{tikzcd}
	\]	
\[
	(2)
\begin{tikzcd}
		{(XY)F_A} & {X(YF_A)} & {XF_{YA}} \\
		{(XY)F'_A} & {X(YF'_A)} & {XF'_{YA}} & {F_{X(YA)}} \\
		{F'_{(XY)A}} && {F'_{X(YA)}}
		\arrow[""{name=0, anchor=center, inner sep=0}, "{F'_{\alphatri}}"', from=3-1, to=3-3]
		\arrow["\actmaptrans", from=1-3, to=2-4]
		\arrow["\acttranstrans", from=2-4, to=3-3]
		\arrow["\alphastar", from=1-1, to=1-2]
		\arrow[""{name=1, anchor=center, inner sep=0}, "{X \actstar \actmaptrans}", from=1-2, to=1-3]
		\arrow[""{name=2, anchor=center, inner sep=0}, "{(X \tens Y) \actstar \acttranstrans}"', from=1-1, to=2-1]
		\arrow[""{name=3, anchor=center, inner sep=0}, "{X \actstar (Y \actstar \acttranstrans)}"{description}, from=1-2, to=2-2]
		\arrow["\alphastar"', from=2-1, to=2-2]
		\arrow["{\actmaptrans'}"', from=2-1, to=3-1]
		\arrow[""{name=4, anchor=center, inner sep=0}, "{X \actstar \actmaptrans'}"', from=2-2, to=2-3]
		\arrow["{X \actstar \acttranstrans}"{description}, from=1-3, to=2-3]
		\arrow["{\actmaptrans'}"{description}, from=2-3, to=3-3]
		\arrow["\acttransmodif"{description}, draw=none, from=2-4, to=2-3]
		\arrow["{X \actstar \acttransmodif}"{description}, draw=none, from=1, to=4]
		\arrow["{\actmapassoccell'}"{description}, draw=none, from=2-2, to=0]
		\arrow["\iso"{description}, draw=none, from=2, to=3]
	\end{tikzcd}
	\]
\vspace{-2mm}
\[
		\vertequals
	\]
\vspace{-6mm}
\[
	\begin{tikzcd}
		{(XY)F_A} & {X(YF_A)} & {XF_{YA}} \\
		{(XY)F'_A} & {F_{(XY)A}} && {F_{X(YA)}} \\
		{F'_{(XY)A}} && {F'_{X(YA)}}
		\arrow["{F'_{\alphatri}}"', from=3-1, to=3-3]
		\arrow[""{name=0, anchor=center, inner sep=0}, "\actmaptrans", from=1-3, to=2-4]
		\arrow[""{name=1, anchor=center, inner sep=0}, "\acttranstrans", from=2-4, to=3-3]
		\arrow["\alphastar", from=1-1, to=1-2]
		\arrow["{X \actstar \actmaptrans}", from=1-2, to=1-3]
		\arrow["{(X \tens Y) \actstar \acttranstrans}"', from=1-1, to=2-1]
		\arrow["{\actmaptrans'}"', from=2-1, to=3-1]
		\arrow[""{name=2, anchor=center, inner sep=0}, "\actmaptrans", from=1-1, to=2-2]
		\arrow[""{name=3, anchor=center, inner sep=0}, "\acttranstrans", from=2-2, to=3-1]
		\arrow["{F_{\alphatri}}"{description}, from=2-2, to=2-4]
		\arrow["\iso"{description}, draw=none, from=3, to=1]
		\arrow["\actmapassoccell"{description}, draw=none, from=2, to=0]
		\arrow["\acttransmodif"{description}, draw=none, from=2, to=3]
	\end{tikzcd}
	\]
\end{definition}

\begin{definition}
	Let 
		$ (F, \actmaptrans, \actmapassoccell, \actmapunitcell) 
					\to (F', \actmaptrans', \actmapassoccell', \actmapunitcell')
					: (\acttri, \alphatri, \lambdatri) 
					\to (\actstar, \alphastar, \lambdastar)$
	be equivariant 1-cells related by equivariant 2-cells
		$(\acttranstrans, \acttransmodif),
				(\acttranstrans', \acttransmodif') 
			: (F, \actmaptrans, \actmapassoccell, \actmapunitcell) 
						\to (F', \actmaptrans', \actmapassoccell', \actmapunitcell')$.
	An 
		\emph{equivariant 3-cell}
	$(\acttranstrans, \acttransmodif)
		\to (\acttranstrans', \acttransmodif')$
	consists of a modification 
		$\actthreecell : \acttranstrans \to \acttranstrans'$
	such that
\[ 
\begin{tikzcd}[column sep=5em]
		{X F_A} & {X F'_A} \\
		{F_{XA}} & {F'_{XA}}
		\arrow[""{name=0, anchor=center, inner sep=0}, "{X \actstar \acttranstrans}", curve={height=-12pt}, from=1-1, to=1-2]
		\arrow["\actmaptrans"', from=1-1, to=2-1]
		\arrow["{\actmaptrans'}", from=1-2, to=2-2]
		\arrow[""{name=1, anchor=center, inner sep=0}, "{\acttranstrans'}"', curve={height=18pt}, from=2-1, to=2-2]
		\arrow[""{name=2, anchor=center, inner sep=0}, "{X \actstar \acttranstrans'}"', curve={height=12pt}, from=1-1, to=1-2]
		\arrow["{X \actstar \actthreecell}", shorten <=3pt, shorten >=3pt, Rightarrow, from=0, to=2]
		\arrow["{\acttransmodif'}"{description}, draw=none, from=1, to=2]
	\end{tikzcd}
=
\begin{tikzcd}[column sep = 5em]
		{X F_A} & {X F'_A} \\
		{F_{XA}} & {F'_{XA}}
		\arrow[""{name=0, anchor=center, inner sep=0}, "{X \actstar \acttranstrans}", curve={height=-12pt}, from=1-1, to=1-2]
		\arrow["\actmaptrans"', from=1-1, to=2-1]
		\arrow["{\actmaptrans'}", from=1-2, to=2-2]
		\arrow[""{name=1, anchor=center, inner sep=0}, "{\acttranstrans'}"', curve={height=18pt}, from=2-1, to=2-2]
		\arrow[""{name=2, anchor=center, inner sep=0}, curve={height=-18pt}, from=2-1, to=2-2]
		\arrow["\actthreecell", shorten <=5pt, shorten >=5pt, Rightarrow, from=2, to=1]
		\arrow["\acttransmodif"{description}, draw=none, from=0, to=2]
	\end{tikzcd}
	\]
\end{definition}

 \section{Coherence axioms for commutative pseudomonads}
 \label{sec:hyland-power-axioms}

 We collect the axioms for the commutativity modification $\commute$ of a commutative pseudomonad. These are obtained from Hyland \& Power's axioms~\cite{Hyland2002} by explicitly adding in the structural isomorphisms of the monoidal bicategory and the bistrong pseudomonad. 
The axioms are shown in \Cref{fig:hyland-power-axioms-one,fig:hyland-power-axioms-two}.
We have numbered them so they match \cite[Definition~5]{Hyland2002}. 

\newcommand{\axiomscalenodes}{.95}

\begin{figure*}

\[
	(1) \hspace{3mm}
\begin{tikzcd}[scalenodes = \axiomscalenodes, column sep = 1em, row sep = 2em,
		execute at end picture={
			 			\foreach \nom in  {A3, A4, B1, B2, B5, C2, C3, C4, D3, D5}
			   				{\coordinate (\nom) at (\nom.center);}  												 	  				
			 			\fill[\commutecolour,opacity=\opacity] 
			 	  				(B2) -- (A3) -- (A4) -- (B5) -- (C4) -- (C3);											 	  								 	  				
			 			\fill[\bistrongcolour,opacity=\opacity] 
			 	  				(C2) -- (C4) -- (D5) -- (D3);				 	  				
				}		
	]
		&& \alias{A3} {T_{(AB)T_C}} &  \alias{A4}{T^2_{(AB)C}} && {T_{(AB)C}} \\
		\alias{B1} {(AT_B)T_C} & \alias{B2} {T_{AB}T_C} &&& \alias{B5} {T_{(AB)C}} \\
		&\alias{C2}  {T_{(AT_B)C}} & \alias{C3} {T_{T_{AB}}C} &  \alias{C4} {T^2_{(AB)C}} && {T_{A(BC)}} \\
		&& \alias{D3} {T_{A(T_BC})} & {T_{AT_{BC}}} & \alias{D5} {T^2_{A(BC)}}
		\arrow[""{name=0, anchor=center, inner sep=0}, "{tT_C}", from=2-1, to=2-2]
		\arrow["s", from=2-2, to=1-3]
		\arrow[""{name=1, anchor=center, inner sep=0}, "{T_t}", from=1-3, to=1-4]
		\arrow["{T_\alpha}", from=2-5, to=3-6]
		\arrow["t"{description}, from=2-2, to=3-3]
		\arrow[""{name=2, anchor=center, inner sep=0}, "{T_s}"', from=3-3, to=3-4]
		\arrow[""{name=3, anchor=center, inner sep=0}, "\mu"{description}, from=3-4, to=2-5]
		\arrow["\mu", from=1-4, to=2-5]
		\arrow[""{name=4, anchor=center, inner sep=0}, "{T^2_\alpha}", from=3-4, to=4-5]
		\arrow[""{name=5, anchor=center, inner sep=0}, "\mu"{description}, from=4-5, to=3-6]
		\arrow["t"', from=2-1, to=3-2]
		\arrow[""{name=6, anchor=center, inner sep=0}, "{T_{tC}}"', from=3-2, to=3-3]
		\arrow[""{name=7, anchor=center, inner sep=0}, "{T_\alpha}"', from=3-2, to=4-3]
		\arrow["{T_{As}}"', from=4-3, to=4-4]
		\arrow["{T_t}"', from=4-4, to=4-5]
		\arrow["{T^2_{\alpha}}", from=1-4, to=1-6]
		\arrow["\mu", from=1-6, to=3-6]
		\arrow["\cong"{description}, draw=none, from=1-6, to=2-5]
		\arrow["\commute", draw=none, from=1, to=2]
		\arrow["\cong"{description}, draw=none, from=0, to=6]
		\arrow["{T_{\bistrong}}"{description}, draw=none, from=7, to=4]
		\arrow["\cong"{description}, draw=none, from=3, to=5]
	\end{tikzcd}
=
\begin{tikzcd}[scalenodes = .9, column sep = 1em, row sep = 1.5em,
		execute at end picture={
			 			\foreach \nom in  {A1, A2, A3, A4, B1, B2, B3, B4, B5, C2, C3, C4, D1, D2, D5, D6, E1, E2, E3, E4, E5, F3, F4}
			   				{\coordinate (\nom) at (\nom.center);}  												 	  				
			 			\fill[\strongycolour,opacity=\opacity] 
			 	  				(A3) -- (A4) -- (B5) -- (B3);											 	  								 	  				
			 			\fill[\strongwcolour,opacity=\opacity] 
			 	  				(C4) -- (B4) -- (B5) -- (D6) -- (D5);											 	  				 	  				
			 			\fill[\commutecolour,opacity=\opacity] 
			 	  				(D2) -- (C3) -- (C4) -- (D5) -- (E4) -- (E3);										 	  				
			 			\fill[\bistrongcolour,opacity=\opacity] 
			 	  				(D1) -- (B2) -- (A3) -- (C3) -- (D2);		
			 			\fill[\strongycolour,opacity=\opacity] 
			 	  				(D1) -- (D2) -- (E3) -- (F3);	
			 			\fill[\strongwcolour,opacity=\opacity] 
			 	  				(D5) -- (D6) -- (E5) -- (F4) -- (E4);	 	  							 	  				
				}			
	]
		\alias{A1} & \alias{A2} & \alias{A3} {T_{(AB)T_C}} & \alias{A4} {T^2_{(AB)C}} \\
		\alias{B1} & \alias{B2} {T_{AB}T_C} &  \alias{B3} {T_{A(BT_C)}} & \alias{B4} {T_{AT_{BC}}} & \alias{B5} {T_{(AB)C}} \\
		&& \alias{C3} {AT_{BT_C}} & \alias{C4} {AT^2_{BC}} \\
		\alias{D1} {(AT_B)T_C} & \alias{D2} {A(T_BT_C)} &&& \alias{D5} {AT_{BC}} & \alias{D6} {T_{A(BC)}} \\
		\alias{E1} & \alias{E2} {T_{(AT_B)C}} & \alias{E3} {AT_{T_B C}} & \alias{E4} {AT^2_{BC}} && \alias{E5} {T^2_{A(BC)}} \\
		&& \alias{F3} {T_{A(T_BC})} & \alias{F4} {T_{AT_{BC}}}
		\arrow["{tT_C}", from=4-1, to=2-2]
		\arrow["s", from=2-2, to=1-3]
		\arrow[""{name=0, anchor=center, inner sep=0}, "{T_t}", from=1-3, to=1-4]
		\arrow[""{name=1, anchor=center, inner sep=0}, "\mu"', from=5-6, to=4-6]
		\arrow[""{name=2, anchor=center, inner sep=0}, "t"', from=4-1, to=5-2]
		\arrow["{T_\alpha}"', from=5-2, to=6-3]
		\arrow["{T_{As}}"', from=6-3, to=6-4]
		\arrow["{T_t}"', from=6-4, to=5-6]
		\arrow["{T^2_{\alpha}}", from=1-4, to=2-5]
		\arrow[""{name=3, anchor=center, inner sep=0}, "\mu", from=2-5, to=4-6]
		\arrow[""{name=4, anchor=center, inner sep=0}, "t"', from=5-4, to=6-4]
		\arrow["A\mu"', from=5-4, to=4-5]
		\arrow["t", from=4-5, to=4-6]
		\arrow["\alpha", from=4-1, to=4-2]
		\arrow[""{name=5, anchor=center, inner sep=0}, "As", from=4-2, to=3-3]
		\arrow["{T_{\alpha}}"', from=1-3, to=2-3]
		\arrow[""{name=6, anchor=center, inner sep=0}, "t", from=3-3, to=2-3]
		\arrow["{T_{At}}"', from=2-3, to=2-4]
		\arrow[""{name=7, anchor=center, inner sep=0}, "{T_t}"', from=2-4, to=2-5]
		\arrow[""{name=8, anchor=center, inner sep=0}, "{AT_{t}}"', from=3-3, to=3-4]
		\arrow[""{name=9, anchor=center, inner sep=0}, "t"', from=3-4, to=2-4]
		\arrow["A\mu", from=3-4, to=4-5]
		\arrow["At"', from=4-2, to=5-3]
		\arrow[""{name=10, anchor=center, inner sep=0}, "{A{T_s}}"', from=5-3, to=5-4]
		\arrow[""{name=11, anchor=center, inner sep=0}, "t"', from=5-3, to=6-3]
		\arrow["{T_{\y}}"{description}, draw=none, from=0, to=7]
		\arrow["\iso"{marking, allow upside down}, draw=none, from=6, to=9]
		\arrow["\w"{description}, draw=none, from=3-4, to=3]
		\arrow["\w"{description}, draw=none, from=5-4, to=1]
		\arrow["A\commute"{description}, draw=none, from=8, to=10]
		\arrow["\iso"{description}, draw=none, from=11, to=4]
		\arrow["\y"{description}, draw=none, from=2, to=5-3]
		\arrow["\bistrong"{description}, draw=none, from=2-2, to=5]
	\end{tikzcd}
\]

\vspace{-1mm}

\[
	(2) \hspace{3mm}
\begin{tikzcd}[scalenodes = \axiomscalenodes, column sep = 1em, row sep = 2em,
		execute at end picture={
			 			\foreach \nom in  {A4, A5, B1, B3, B3, B4, B6, C2, C3, C4, C5, D1, D3, D5, E3, E4}
			   				{\coordinate (\nom) at (\nom.center);}  												 	  				
			 			\fill[\strongycolour,opacity=\opacity] 
			 	  				(C2) -- (C5) -- (D5) -- (D3);						   				
			 			\fill[\strongycolour,opacity=\opacity] 
			 	  				(B1) -- (B3) -- (C4) -- (C2);											 	  								 	  				
			 			\fill[\commutecolour,opacity=\opacity] 
			 	  				(B3) -- (A4) -- (A5) -- (B6) -- (C5) -- (C4);										 	  									 	  				
				}				
	]
		&& {T_{A(BT_C)}} & \alias{A4} {T_{AT_{BC}}} & \alias{A5} {T^2_{A(BC)}} \\
		\alias{B1} {(T_AB)T_C} & {T_A(BT_C)} & \alias{B3} {T_AT_{BC}} &&& \alias{B6} {T_{A(BC)}} \\
		& \alias{C2} {T_{(T_AB)C}} && \alias{C4} {T_{T_A (BC)}} & \alias{C5} {T^2_{A(BC)}} & {T_{(AB)C}} \\
		&& \alias{D3} {T_{T_{AB}C}} && \alias{D5} {T^2_{(AB)C}}
		\arrow["\alpha", from=2-1, to=2-2]
		\arrow["{T_A t}"', from=2-2, to=2-3]
		\arrow[""{name=0, anchor=center, inner sep=0}, "s"{description}, from=2-3, to=1-4]
		\arrow[""{name=1, anchor=center, inner sep=0}, "{T_t}", from=1-4, to=1-5]
		\arrow["\mu", from=1-5, to=2-6]
		\arrow[""{name=2, anchor=center, inner sep=0}, "t"{description}, from=2-3, to=3-4]
		\arrow[""{name=3, anchor=center, inner sep=0}, "{T_s}"', from=3-4, to=3-5]
		\arrow[""{name=4, anchor=center, inner sep=0}, "\mu"{description}, from=3-5, to=2-6]
		\arrow[""{name=5, anchor=center, inner sep=0}, "t"', from=2-1, to=3-2]
		\arrow["{T_{\alpha}}"', from=3-2, to=3-4]
		\arrow["{T_{sC}}"', from=3-2, to=4-3]
		\arrow[""{name=6, anchor=center, inner sep=0}, "{T_s}"', from=4-3, to=4-5]
		\arrow["{T^2_{\alpha}}", from=4-5, to=3-5]
		\arrow[""{name=7, anchor=center, inner sep=0}, "\mu"', from=4-5, to=3-6]
		\arrow["{T_\alpha}"', from=3-6, to=2-6]
		\arrow[""{name=8, anchor=center, inner sep=0}, "s", from=2-2, to=1-3]
		\arrow["{T_{At}}", from=1-3, to=1-4]
		\arrow["\iso"{description}, draw=none, from=8, to=0]
		\arrow["\iso"{description}, draw=none, from=4, to=7]
		\arrow["\y"{description}, draw=none, from=5, to=2]
		\arrow["{T_{\y'}}"{description}, draw=none, from=3-4, to=6]
		\arrow["\commute"{description}, draw=none, from=1, to=3]
	\end{tikzcd}
=
\begin{tikzcd}[scalenodes = .9, column sep = 1em, row sep = 2.5em,
		execute at end picture={
			 			\foreach \nom in  {A2, A5, B1, B2, B5, C3, C5}
			   				{\coordinate (\nom) at (\nom.center);}  												 	  				
			 			\fill[\strongycolour,opacity=\opacity] 
			 	  				(B1) -- (A2) -- (A5) -- (B5);		
			 			\fill[\commutecolour,opacity=\opacity] 
			 	  				(B2) -- (B5) -- (C5) -- (C3);					 	  								   												 	  				
				}					
	]
		& \alias{A2} {T_A(BT_C)} & {T_{A(BT_C)}} & {T_{AT_{BC}}} & \alias{A5} {T^2_{A(BC)}} \\
		\alias{B1} {(T_AB)T_C} & \alias{B2} {T_{AB}T_C} & {T_{(AB)T_C}} & {T^2_{(AB)C}} & \alias{B5} {T^2_{A(BC)}} & {T_{A(BC)}} \\
		& {T_{(T_AB)C}} & \alias{C3} {T_{T_{AB}C}} & {T^2_{(AB)C}} & \alias{C5} {T_{(AB)C}}
		\arrow["\alpha", from=2-1, to=1-2]
		\arrow["{T_t}", from=1-4, to=1-5]
		\arrow["\mu", from=1-5, to=2-6]
		\arrow[""{name=0, anchor=center, inner sep=0}, "t"', from=2-1, to=3-2]
		\arrow["{T_{sC}}"', from=3-2, to=3-3]
		\arrow["{T_s}"', from=3-3, to=3-4]
		\arrow["\mu"', from=3-4, to=3-5]
		\arrow["{T_\alpha}"', from=3-5, to=2-6]
		\arrow["s", from=1-2, to=1-3]
		\arrow["{T_{At}}", from=1-3, to=1-4]
		\arrow["{sT_{C}}"', from=2-1, to=2-2]
		\arrow[""{name=1, anchor=center, inner sep=0}, "s"', from=2-2, to=2-3]
		\arrow["{T_{\alpha}}"', from=2-3, to=1-3]
		\arrow["{T_t}"', from=2-3, to=2-4]
		\arrow["{T^2_\alpha}"', from=2-4, to=2-5]
		\arrow["{T^2_{\alpha}}"{description}, from=2-5, to=1-5]
		\arrow[""{name=2, anchor=center, inner sep=0}, "\mu"{description}, from=2-5, to=3-5]
		\arrow["\iso"{description}, draw=none, from=2-5, to=2-6]
		\arrow["{T_{\y}}"{description}, draw=none, from=1-4, to=2-4]
		\arrow[""{name=3, anchor=center, inner sep=0}, "t"{description}, from=2-2, to=3-3]
		\arrow["{\y'}"{description}, draw=none, from=1-2, to=1]
		\arrow["\iso"{description}, draw=none, from=0, to=3]
		\arrow["\commute"{description}, draw=none, from=3, to=2]
	\end{tikzcd}
\]

\vspace{-1mm}

\[
	(3)
\begin{tikzcd}[scalenodes = \axiomscalenodes, column sep = 1em, row sep = 2em,
		execute at end picture={
			 			\foreach \nom in  {A1, A3, A4, A5, B1, B2, B3, B6, C1, C2, C3, C4, C5, D4, D5}
			   				{\coordinate (\nom) at (\nom.center);}  												 	  				
			 			\fill[\strongycolour,opacity=\opacity] 
			 	  				(B1) -- (A1) -- (A3) -- (B2);						   				
			 			\fill[\commutecolour,opacity=\opacity] 
			 	  				(B3) -- (A4) -- (A5) -- (B6) -- (C5) -- (C4);											 	  								 	  				
			 			\fill[\bistrongcolour,opacity=\opacity] 
			 	  				(B1) -- (B3) -- (C4) -- (C2);										 	  									 	  				
			 			\fill[\strongycolour,opacity=\opacity] 
			 	  				(C3) -- (C5) -- (D5) -- (D4);										 	  									 	  							 	  				
				}					
	]
		\alias{A1} {T_{AT_B}C} & {T_{(AT_B)C}} & \alias{A3} {T_{A(T_BC)}} & \alias{A4} {T_{AT_{BC}}} & \alias{A5} {T^2_{A(BC)}} \\
		\alias{B1}	{(T_AT_B)C} & \alias{B2} {T_A(T_BC)} & \alias{B3} {T_AT_{BC}} &&& \alias{B6} {T_{A(BC)}} \\
		\alias{C1} & \alias{C2} {T_{T_A B} C} & \alias{C3} {T_{(T_A B)C}} & \alias{C4} {T_{T_A(BC)}} & \alias{C5} {T^2_{A(BC)}} & \alias{C6} {T_{(AB)C}} \\
		&&  {T^2_{AB}C} & \alias{D4} {T_{T_{AB}C}} & \alias{D5} {T^2_{(AB)C}}
		\arrow[""{name=0, anchor=center, inner sep=0}, "\alpha"', from=2-1, to=2-2]
		\arrow["{T_A s}"', from=2-2, to=2-3]
		\arrow[""{name=1, anchor=center, inner sep=0}, "s"{description}, from=2-3, to=1-4]
		\arrow[""{name=2, anchor=center, inner sep=0}, "{T_t}", from=1-4, to=1-5]
		\arrow["\mu", from=1-5, to=2-6]
		\arrow[""{name=3, anchor=center, inner sep=0}, "{T_s}"', from=3-4, to=3-5]
		\arrow[""{name=4, anchor=center, inner sep=0}, "\mu"{description}, from=3-5, to=2-6]
		\arrow["t"', from=2-3, to=3-4]
		\arrow["tC"', from=2-1, to=3-2]
		\arrow["{T_\alpha}"', from=3-3, to=3-4]
		\arrow[""{name=5, anchor=center, inner sep=0}, "s"', from=3-2, to=3-3]
		\arrow["{T_{sC}}"{description}, from=3-3, to=4-4]
		\arrow[""{name=6, anchor=center, inner sep=0}, "{T_s}"', from=4-4, to=4-5]
		\arrow["{T^2_{\alpha}}"', from=4-5, to=3-5]
		\arrow[""{name=7, anchor=center, inner sep=0}, "\mu"', from=4-5, to=3-6]
		\arrow["{T_{\alpha}}"', from=3-6, to=2-6]
		\arrow[""{name=8, anchor=center, inner sep=0}, "s", from=2-2, to=1-3]
		\arrow["{T_{As}}", from=1-3, to=1-4]
		\arrow["\b"{description}, draw=none, from=2-2, to=3-3]
		\arrow["sC", from=2-1, to=1-1]
		\arrow["s", from=1-1, to=1-2]
		\arrow["{T_{\alpha}}", from=1-2, to=1-3]
		\arrow["{T_s C}"', from=3-2, to=4-3]
		\arrow[""{name=9, anchor=center, inner sep=0}, "s"', from=4-3, to=4-4]
		\arrow["{T_{\y'}}"{description}, draw=none, from=3-4, to=6]
		\arrow["\iso"{description}, draw=none, from=4, to=7]
		\arrow["\commute"{description}, draw=none, from=2, to=3]
		\arrow["\iso"{description}, draw=none, from=8, to=1]
		\arrow["{\y'}"{description}, draw=none, from=1-2, to=0]
		\arrow["\iso"{description}, draw=none, from=5, to=9]
	\end{tikzcd}
=
\begin{tikzcd}[scalenodes = \axiomscalenodes, column sep = 1em, row sep = 1.8em,
		execute at end picture={
			 			\foreach \nom in  {A2, A4, B5, C4, C3, C2, D6, E5, E3, D3, D6}
			   				{\coordinate (\nom) at (\nom.center);}  												 	  			
						\fill[\bistrongcolour,opacity=\opacity] 
			 	  				(A2) -- (A4) -- (B5) -- (C4) -- (C3);										 	  									 	  				
			 			\fill[\strongwcolour,opacity=\opacity] 
								(C2) -- (C4) -- (D6) -- (E5) -- (E3) -- (D3);										 	  									 	  							 	  				
				}			
	]
		&  \alias{A2}{T_{(AT_B)C}} & {T_{A(T_B C)}} & \alias{A4} {T_{A T_{BC}}} \\
		{T_{AT_B}C} &&&& \alias{B5} {T^2_{A(BC)}} \\
		{(T_AT_B)C} & \alias{C2} {T^2_{AB}C} & \alias{C3} {T_{T_{AB}C}} & \alias{C4}  {T^2_{(AB)C}} && {T_{A(BC)}} \\
		& {T_{T_A B} C} & \alias{D3} {T_{AB}C} &&& \alias{D6} {T_{(AB)C}} \\
		&& \alias{E3} {T^2_{AB}C} & {T_{T_{AB}C}} &  \alias{E5} {T^2_{(AB)C}}
		\arrow[""{name=0, anchor=center, inner sep=0}, "\mu", from=2-5, to=3-6]
		\arrow["tC"', from=3-1, to=4-2]
		\arrow["{T_s}"', from=5-4, to=5-5]
		\arrow[""{name=1, anchor=center, inner sep=0}, "\mu"', from=5-5, to=4-6]
		\arrow["{T_{\alpha}}"', from=4-6, to=3-6]
		\arrow["sC", from=3-1, to=2-1]
		\arrow[""{name=2, anchor=center, inner sep=0}, "s", from=2-1, to=1-2]
		\arrow["{T_s C}"', from=4-2, to=5-3]
		\arrow["s"', from=5-3, to=5-4]
		\arrow["{T_t C}", from=2-1, to=3-2]
		\arrow[""{name=3, anchor=center, inner sep=0}, "{\mu C}", from=3-2, to=4-3]
		\arrow[""{name=4, anchor=center, inner sep=0}, "{\mu C}"', from=5-3, to=4-3]
		\arrow["{\commute C}"{description}, draw=none, from=3-2, to=4-2]
		\arrow["s", from=4-3, to=4-6]
		\arrow[""{name=5, anchor=center, inner sep=0}, "s", from=3-2, to=3-3]
		\arrow["{T_s}", from=3-3, to=3-4]
		\arrow[""{name=6, anchor=center, inner sep=0}, "\mu", from=3-4, to=4-6]
		\arrow["{T^2_\alpha}"{description}, from=3-4, to=2-5]
		\arrow["{T_{tC}}"{description}, from=1-2, to=3-3]
		\arrow["{T_\alpha}", from=1-2, to=1-3]
		\arrow["{T_{As}}", from=1-3, to=1-4]
		\arrow["{T_t}", from=1-4, to=2-5]
		\arrow["{T_{\b}}"{description}, draw=none, from=1-3, to=3-4]
		\arrow["{\w'}"{description}, draw=none, from=4, to=1]
		\arrow["{\w'}"{description}, draw=none, from=3, to=6]
		\arrow["\iso"{description}, draw=none, from=2, to=5]
		\arrow["\iso"{description}, draw=none, from=0, to=6]
	\end{tikzcd}
\]

\[
	\hspace{5mm}
	(4) \hspace{1mm}
\begin{tikzcd}[scalenodes = \axiomscalenodes, column sep = 1.5em, row sep = 2em,
				execute at end picture={
			 			\foreach \nom in  {A2, A3, B1, B2, C2, C3}
			   				{\coordinate (\nom) at (\nom.center);}  												 	  				
			 			\fill[\strongzcolour,opacity=\opacity] 
			 	  				(B1) to[curve={height=-6pt}] (A2) -- (B2);											 	  								 	  				
			 			\fill[\commutecolour,opacity=\opacity] 
			 	  				(A2) -- (C2) -- (C3) -- (A3);				 	  				
				}
	]
		& \alias{A2} {T_{AT_B}} & \alias{A3} {T^2_{AB}} \\
		\alias{B1} {AT_B} & \alias{B2} {T_A T_B} & {T_{AB}} \\
		{T_{AB}} & \alias{C2} {T_{T_A B}} & \alias{C3} {T^2_{AB}}
		\arrow["{\eta T_B}"', from=2-1, to=2-2]
		\arrow["s", from=2-2, to=1-2]
		\arrow[""{name=0, anchor=center, inner sep=0}, "{T_t}", from=1-2, to=1-3]
		\arrow["\mu", from=1-3, to=2-3]
		\arrow[""{name=1, anchor=center, inner sep=0}, "t", from=2-2, to=3-2]
		\arrow[""{name=2, anchor=center, inner sep=0}, "{T_s}"', from=3-2, to=3-3]
		\arrow["\mu"', from=3-3, to=2-3]
		\arrow[""{name=3, anchor=center, inner sep=0}, "t"', from=2-1, to=3-1]
		\arrow["{T_{\eta B}}"', from=3-1, to=3-2]
		\arrow[""{name=4, anchor=center, inner sep=0}, "\eta", curve={height=-6pt}, from=2-1, to=1-2]
		\arrow["\iso"{description}, draw=none, from=3, to=1]
		\arrow["{\z'}"{description}, draw=none, from=4, to=2-2]
		\arrow["\commute"{description}, draw=none, from=0, to=2]
	\end{tikzcd}
=
\begin{tikzcd}[scalenodes = 1, column sep = 1.5em, row sep = 2em,
			execute at end picture={
		 			\foreach \nom in  {A1, A2, A3, B1, B2, B3, C1, C2, C3, D1, D2}
		   				{\coordinate (\nom) at (\nom.center);}  												 	  				
		 			\fill[\monadmcolour,opacity=\opacity] 
		 	  				(C1) -- (A2) -- (B3) -- (C3);											 	  								 	  				
		 			\fill[\strongzcolour,opacity=\opacity] 
		 	  				(C1) -- (C3) -- (D2);				 	  				
			}
	]
		{T_{AT_B}} & \alias{A2} {T^2_{AB}} \\
		{AT_B} &&\alias{B3}  {T_{AB}} \\
		\alias{C1} {T_{AB}} && \alias{C3} {T^2_{AB}} \\
		& \alias{D2} {T_{T_A B}}
		\arrow["{T_t}", from=1-1, to=1-2]
		\arrow[""{name=0, anchor=center, inner sep=0}, "\mu", from=1-2, to=2-3]
		\arrow["{T_s}"', from=4-2, to=3-3]
		\arrow["\mu"', from=3-3, to=2-3]
		\arrow["t"', from=2-1, to=3-1]
		\arrow["{T_{\eta B}}"', from=3-1, to=4-2]
		\arrow["\eta", from=2-1, to=1-1]
		\arrow[""{name=1, anchor=center, inner sep=0}, "\eta"{description}, from=3-1, to=1-2]
		\arrow[""{name=2, anchor=center, inner sep=0}, "\Id"{description, pos=0.7}, from=3-1, to=2-3]
		\arrow[""{name=3, anchor=center, inner sep=0}, "{T_{\eta}}"{description}, from=3-1, to=3-3]
		\arrow["{T_{\z'}}"{description}, draw=none, from=3, to=4-2]
		\arrow["\p"{description}, draw=none, from=2, to=3-3]
		\arrow["\n"{description}, draw=none, from=0, to=3-1]
		\arrow["\iso"{description}, draw=none, from=1-1, to=1]
	\end{tikzcd}
\hspace{10mm}
(5) \hspace{1mm}
\begin{tikzcd}[scalenodes = 1, column sep = 1.5em, 
			execute at end picture={
		 			\foreach \nom in  {A1, A2, A3, B1, B2, B3, C1, C2, C3, D1, D2}
		   				{\coordinate (\nom) at (\nom.center);}  												 	  				
		 			\fill[\commutecolour, opacity=\opacity] 
		 	  				(A2) -- (A3) -- (C3) -- (C2);											 	  								 	  				
		 			\fill[\strongzcolour,opacity=\opacity] 
		 	  				(B1) to[curve={height=12pt}] (C2) -- (B2);				 	  				
			}	
	]
		{T_{AB}} & \alias{A2} {T_{AT_B}} & \alias{A3} {T^2_{AB}} \\
		\alias{B1} {T_A B} & \alias{B2} {T_A T_B} & {T_{AB}} \\
		& \alias{C2} {T_{T_A B}} & \alias{C3}  {T^2_{AB}}
		\arrow[""{name=0, anchor=center, inner sep=0}, "{T_A \eta}"', from=2-1, to=2-2]
		\arrow["s", from=2-2, to=1-2]
		\arrow[""{name=1, anchor=center, inner sep=0}, "{T_t}", from=1-2, to=1-3]
		\arrow["\mu", from=1-3, to=2-3]
		\arrow["t", from=2-2, to=3-2]
		\arrow[""{name=2, anchor=center, inner sep=0}, "{T_s}"', from=3-2, to=3-3]
		\arrow["\mu"', from=3-3, to=2-3]
		\arrow["s", from=2-1, to=1-1]
		\arrow[""{name=3, anchor=center, inner sep=0}, "{T_{A \eta}}", from=1-1, to=1-2]
		\arrow[""{name=4, anchor=center, inner sep=0}, "\eta"', curve={height=12pt}, from=2-1, to=3-2]
		\arrow["\commute"{description}, draw=none, from=1, to=2]
		\arrow["\iso"{description}, draw=none, from=3, to=0]
		\arrow["\z"{description}, yshift=-1mm, draw=none, from=2-2, to=4]
	\end{tikzcd}
=
\begin{tikzcd}[scalenodes = 1, column sep = 1.5em,
			execute at end picture={
		 			\foreach \nom in  {A1, A2, A3, B1, B2, B3, C1, C2, C3, C4, D1, D2, D3}
		   				{\coordinate (\nom) at (\nom.center);}  												 	  				
		 			\fill[\strongzcolour,opacity=\opacity] 
		 	  				(B1) -- (A2) -- (B3);											 	  								 	  				
		 			\fill[\monadmcolour,opacity=\opacity] 
		 	  				(B1) -- (B3) -- (C4) -- (D3);				 	  				
			}		
		]
		& \alias{A2} {T_{AT_B}} \\
		\alias{B1} {T_{AB}} && \alias{B3} {T^2_{AB}} \\
		{T_A B} &&& \alias{C4} {T_{AB}} \\
		& \alias{D2} {T_{T_A B}} & \alias{D3} {T^2_{AB}}
		\arrow["{T_t}", from=1-2, to=2-3]
		\arrow["\mu", from=2-3, to=3-4]
		\arrow["{T_s}"', from=4-2, to=4-3]
		\arrow[""{name=0, anchor=center, inner sep=0}, "\mu"', from=4-3, to=3-4]
		\arrow["s", from=3-1, to=2-1]
		\arrow["{T_{A \eta}}", from=2-1, to=1-2]
		\arrow[""{name=1, anchor=center, inner sep=0}, "\eta"', from=3-1, to=4-2]
		\arrow[""{name=2, anchor=center, inner sep=0}, "{T_{\eta}}"{description}, from=2-1, to=2-3]
		\arrow[""{name=3, anchor=center, inner sep=0}, "\Id"{description, pos=0.7}, from=2-1, to=3-4]
		\arrow[""{name=4, anchor=center, inner sep=0}, "\eta"{description}, from=2-1, to=4-3]
		\arrow["{T_{\z}}"{description}, draw=none, from=1-2, to=2]
		\arrow["\p"{description}, draw=none, from=2-3, to=3]
		\arrow["\iso"{description}, draw=none, from=4, to=1]
		\arrow["\n"{description}, draw=none, from=2-1, to=0]
	\end{tikzcd}
\]

\vspace{-2mm}
\caption{Axioms (1)--(5) for a commutative pseudomonad.}
\label{fig:hyland-power-axioms-one}
\end{figure*}

\begin{figure*}
\begin{minipage}{\textwidth}
\[
	(6)
\begin{tikzcd}[row sep = 2em,
			execute at end picture={
				 			\foreach \nom in  {A,B,C, D, X, Y, Z, P, Q, L, M, N, R, S, T, E}
				   				{\coordinate (\nom) at (\nom.center);}
				 			\fill[\strongwcolour,opacity=\opacity] 
				 	  				(X) -- (Y) -- (Z) -- (P);			
				 			\fill[\strongwcolour,opacity=\opacity] 
				 	  				(A) -- (B) -- (C) -- (D);					
				 			\fill[\commutecolour,opacity=\opacity] 
				 	  				(L) -- (M) -- (B) -- (N);										 	  				
				 			\fill[\monadmcolour,opacity=\opacity] 
				 	  				(R) -- (S) -- (M) -- (T);					
				 			\fill[\monadmcolour,opacity=\opacity] 
				 	  				(B) -- (M) -- (E) -- (C);									 	  								 	  				
				      	   }
		]
		& \alias{Y} {T_{T_A T_B}} & \alias{Z} {T^2_{A T_B}} & \alias{R} {T^3_{AB}} & \alias{S} {T^2_{AB}} \\
		\alias{X} {T^2_A T_B} & \alias{L} {T_AT_{B}} & \alias{P} {T_{AT_{B}}} & \alias{T} {T^2_{AB}} & \alias{M} {T_{AB}} \\
		& \alias{A} {T_{T^2_A B}} & \alias{N} {T_{T_A B}} & \alias{B} {T^2_{AB}} & \alias{E} {T^2_{AB}} \\
		&& \alias{D} {T^2_{T_A B}} & \alias{C} {T^3_{AB}}
		\arrow[""{name=0, anchor=center, inner sep=0}, "{\mu T_B}", from=2-1, to=2-2]
		\arrow["s", from=2-2, to=2-3]
		\arrow[""{name=1, anchor=center, inner sep=0}, "{T_t}", from=2-3, to=2-4]
		\arrow[""{name=2, anchor=center, inner sep=0}, "\mu", from=2-4, to=2-5]
		\arrow[""{name=3, anchor=center, inner sep=0}, "{T_s}"', from=3-3, to=3-4]
		\arrow[""{name=4, anchor=center, inner sep=0}, "\mu"{description}, from=3-4, to=2-5]
		\arrow["t"{description}, from=2-2, to=3-3]
		\arrow["t"', from=2-1, to=3-2]
		\arrow[""{name=5, anchor=center, inner sep=0}, "{T_{\mu B}}"', from=3-2, to=3-3]
		\arrow["{T_s}"', from=3-2, to=4-3]
		\arrow[""{name=6, anchor=center, inner sep=0}, "{T^2_s}"', from=4-3, to=4-4]
		\arrow["{T_\mu}"', from=4-4, to=3-4]
		\arrow[""{name=7, anchor=center, inner sep=0}, "\mu"', from=4-4, to=3-5]
		\arrow["\mu"', from=3-5, to=2-5]
		\arrow[""{name=8, anchor=center, inner sep=0}, "s", from=2-1, to=1-2]
		\arrow["{T_s}", from=1-2, to=1-3]
		\arrow[""{name=9, anchor=center, inner sep=0}, "\mu"{description}, from=1-3, to=2-3]
		\arrow["{T^2_t}", from=1-3, to=1-4]
		\arrow[""{name=10, anchor=center, inner sep=0}, "\mu", from=1-4, to=2-4]
		\arrow[""{name=11, anchor=center, inner sep=0}, "{T_\mu}", from=1-4, to=1-5]
		\arrow["\mu", from=1-5, to=2-5]
		\arrow["\commute"{description}, draw=none, from=1, to=3]
		\arrow["\iso"{description}, draw=none, from=0, to=5]
		\arrow["{T_{\w'}}"{description}, draw=none, from=3-3, to=6]
		\arrow["\m"{description}, draw=none, from=4, to=7]
		\arrow["{\w'}"{description}, draw=none, from=8, to=9]
		\arrow["\iso"{description}, draw=none, from=9, to=10]
		\arrow["\m"{description}, draw=none, from=11, to=2]
	\end{tikzcd}
\hspace{2mm}
	=
	\hspace{2mm}
\begin{tikzcd}[scalenodes = 1, column sep = 3em,	row sep = 2em,		
		execute at end picture={
	 			\foreach \nom in  {A, A', B, B', B'', C, C', C'', D, D'}
	   				{\coordinate (\nom) at (\nom.center);}
	 			\fill[\commutecolour,opacity=\opacity] 
	 	  				(A) -- (C) -- (D) -- (B);			
	 			\fill[\commutecolour,opacity=\opacity] 
	 	  				(A) -- (A') -- (B'') -- (B) -- (A);										 	  				
	 			\fill[\monadmcolour,opacity=\opacity] 
	 	  				(B'') -- (C'') -- (D') -- (B');											 	  								 	  				
		}
		]
		\alias{A} {T_{T_A T_B}} & {T^2_{A T_B}} && \alias{A'} {T^3_{AB}} \\
		{T^2_A T_B} & \alias{B} {T^2_{T_A B}} & \alias{B'} {T^3_{AB}} && \alias{B''} {T^2_{AB}} \\
		\alias{C} {T_{T^2_A B}} & \alias{C'} {T_{T_A B}} &&& \alias{C''} {T_{AB}} \\
		& \alias{D} {T^2_{T_A B}} & {T^3_{AB}} & \alias{D'} {T^2_{AB}}
		\arrow["t"', from=2-1, to=3-1]
		\arrow["{T_s}"', from=3-1, to=4-2]
		\arrow[""{name=0, anchor=center, inner sep=0}, "{T^2_s}"', from=4-2, to=4-3]
		\arrow["\mu"', from=4-3, to=4-4]
		\arrow[""{name=1, anchor=center, inner sep=0}, "\mu"', from=4-4, to=3-5]
		\arrow["s", from=2-1, to=1-1]
		\arrow["{T_s}", from=1-1, to=1-2]
		\arrow["{T^2_t}", from=1-2, to=1-4]
		\arrow[""{name=2, anchor=center, inner sep=0}, "{T_\mu}", from=1-4, to=2-5]
		\arrow["\mu", from=2-5, to=3-5]
		\arrow[""{name=3, anchor=center, inner sep=0}, "{T_t}"{description}, from=1-1, to=2-2]
		\arrow["\mu"', from=4-2, to=3-2]
		\arrow[""{name=4, anchor=center, inner sep=0}, "{T_s}", from=3-2, to=4-4]
		\arrow[""{name=5, anchor=center, inner sep=0}, "\mu"', from=2-2, to=3-2]
		\arrow["{T^2_s}"', from=2-2, to=2-3]
		\arrow[""{name=6, anchor=center, inner sep=0}, "{T_\mu}"', from=2-3, to=2-5]
		\arrow[""{name=7, anchor=center, inner sep=0}, "\mu"{description}, from=2-3, to=4-4]
		\arrow["\commute"{description}, draw=none, from=2-1, to=3-2]
		\arrow["\iso"{description}, draw=none, from=5, to=7]
		\arrow["\iso"{description}, draw=none, from=4, to=0]
		\arrow["{T_{\commute}}"{description}, draw=none, from=3, to=2]
		\arrow["\m"{description}, draw=none, from=6, to=1]
	\end{tikzcd}
\]

\[
	(7)
\begin{tikzcd}[scalenodes = 1, row sep = 2em,
		execute at end picture={
	 			\foreach \nom in  {A, A', B, B', B'', B''', C, C', C'', C''',  D, D', D'', D''', X}
	   				{\coordinate (\nom) at (\nom.center);}
	 			\fill[\strongwcolour,opacity=\opacity] 
	 	  				(A) -- (A') -- (B'') -- (B);	
	 			\fill[\strongwcolour,opacity=\opacity] 
	 	  				(C) -- (C') -- (D') -- (D);				 	  						
	 			\fill[\monadmcolour,opacity=\opacity] 
	 	  				(A') -- (B''') -- (C''') -- (B'');		
	 			\fill[\monadmcolour,opacity=\opacity] 
	 	  				(C'') -- (C''') -- (D''') -- (D'');			 	  												 	  				
	 			\fill[\commutecolour,opacity=\opacity] 
	 	  				(B') -- (B'') -- (C''') -- (X);											 	  								 	  				
		}
		]
		& \alias{A} {T^2_{A T_B}} && \alias{A'} {T^3_{AB}} \\
		\alias{B} {T_{A T^2_B}} & \alias{B'} {T_{AT_{B}}} && \alias{B''} {T^2_{AB}} & \alias{B'''} {T^2_{AB}} \\
		\alias{C} {T_A T^2_B} & \alias{X} {T_AT_{B}} & \alias{C'} {T_{T_AB}} & \alias{C''} {T^2_{AB}} & \alias{C'''} {T_{AB}} \\
		& \alias{D} {T_{T_A T_B}} & \alias{D'} {T^2_{T_A B}} & \alias{D''} {T^3_{AB}} & \alias{D'''} {T^2_{AB}}
		\arrow["{T_A \mu}"', from=3-1, to=3-2]
		\arrow[""{name=0, anchor=center, inner sep=0}, "s", from=3-2, to=2-2]
		\arrow[""{name=1, anchor=center, inner sep=0}, "{T_t}", from=2-2, to=2-4]
		\arrow[""{name=2, anchor=center, inner sep=0}, "\mu", from=2-4, to=3-5]
		\arrow[""{name=3, anchor=center, inner sep=0}, "{T_s}"', from=3-3, to=3-4]
		\arrow[""{name=4, anchor=center, inner sep=0}, "\mu"', from=3-4, to=3-5]
		\arrow["t"', from=3-2, to=3-3]
		\arrow[""{name=5, anchor=center, inner sep=0}, "s", from=3-1, to=2-1]
		\arrow["{T_{A \mu}}", from=2-1, to=2-2]
		\arrow["t"', from=3-1, to=4-2]
		\arrow[""{name=6, anchor=center, inner sep=0}, "{T_t}"', from=4-2, to=4-3]
		\arrow[""{name=7, anchor=center, inner sep=0}, "\mu"', from=4-3, to=3-3]
		\arrow["{T^2_s}"', from=4-3, to=4-4]
		\arrow[""{name=8, anchor=center, inner sep=0}, "\mu"', from=4-4, to=3-4]
		\arrow[""{name=9, anchor=center, inner sep=0}, "{T_\mu}"', from=4-4, to=4-5]
		\arrow["\mu"', from=4-5, to=3-5]
		\arrow["{T_t}", from=2-1, to=1-2]
		\arrow[""{name=10, anchor=center, inner sep=0}, "{T^2_t}", from=1-2, to=1-4]
		\arrow["{T_\mu}", from=1-4, to=2-4]
		\arrow[""{name=11, anchor=center, inner sep=0}, "\mu", from=1-4, to=2-5]
		\arrow["\mu", from=2-5, to=3-5]
		\arrow["\commute"{description}, draw=none, from=1, to=3]
		\arrow["\iso"{description}, draw=none, from=5, to=0]
		\arrow["\w"{description}, draw=none, from=3-2, to=6]
		\arrow["\iso"{description}, draw=none, from=7, to=8]
		\arrow["\m"{description}, draw=none, from=4, to=9]
		\arrow["{T_{\w}}"{description}, draw=none, from=10, to=2-2]
		\arrow["\m"{description}, draw=none, from=11, to=2]
	\end{tikzcd}
\hspace{2mm}
	=
	\hspace{2mm}
\begin{tikzcd}[scalenodes = .9, column sep = 3em, row sep = 1.5em,
			execute at end picture={
		 			\foreach \nom in  {A1, B1, C2, C4, D1, D4}
		   				{\coordinate (\nom) at (\nom.center);}  												 	  				
		 			\fill[\commutecolour,opacity=\opacity] 
		 	  				(B1) -- (A1) -- (C2) -- (D1);											 	  								 	  				
		 			\fill[\commutecolour,opacity=\opacity] 
		 	  				(D1) -- (C2) -- (C4) -- (D4);				 	  				
			}
			]
		& \alias{A1} {T^2_{A T_B}} & {T^3_{AB}} \\
		\alias{B1} {T_{A T^2_B}} & {T_{A T_B}} && {T^2_{AB}} \\
		{T_A T^2_B} & \alias{C2} {T^2_{A T_B}} && \alias{C4} {T_{AB}} \\
		\alias{D1} {T_{T_A T_B}} & {T^2_{T_A B}} & {T^3_{AB}} & \alias{D4} {T^2_{AB}}
		\arrow["s", from=3-1, to=2-1]
		\arrow["t"', from=3-1, to=4-1]
		\arrow["{T_t}"', from=4-1, to=4-2]
		\arrow["{T^2_s}"', from=4-2, to=4-3]
		\arrow["{T_\mu}"', from=4-3, to=4-4]
		\arrow[""{name=0, anchor=center, inner sep=0}, "\mu"', from=4-4, to=3-4]
		\arrow[""{name=1, anchor=center, inner sep=0}, "{T_t}", from=2-1, to=1-2]
		\arrow[""{name=2, anchor=center, inner sep=0}, "{T^2_t}", from=1-2, to=1-3]
		\arrow["\mu", from=1-3, to=2-4]
		\arrow["\mu", from=2-4, to=3-4]
		\arrow["\mu", from=1-2, to=2-2]
		\arrow[""{name=3, anchor=center, inner sep=0}, "{T_s}"{description}, from=4-1, to=3-2]
		\arrow["\mu"', from=3-2, to=2-2]
		\arrow[""{name=4, anchor=center, inner sep=0}, "{T^2_t}"{description}, from=3-2, to=3-4]
		\arrow[""{name=5, anchor=center, inner sep=0}, "{T_t}"{description}, from=2-2, to=2-4]
		\arrow["\iso"{description}, draw=none, from=2, to=5]
		\arrow["\iso"{description}, draw=none, from=5, to=4]
		\arrow["{T_{\commute}}"{description}, draw=none, from=3, to=0]
		\arrow["\commute"{description}, draw=none, from=1, to=3]
	\end{tikzcd}
\]
\end{minipage}

\caption{Axioms (6) \& (7) for a commutative pseudomonad.}
\label{fig:hyland-power-axioms-two}
\vspace{4mm}
\hrule
\end{figure*}

 \section{Relating commutative, concurrent and monoidal pseudomonads} 
\label{sec:monoidal-iff-commutative}

 We outline the construction of a commutative pseudomonad from a monoidal pseudomonad (\Cref{sec:monoidal-to-commutative}), and the construction of a monoidal pseudomonad from a commutative pseudomonad (\Cref{sec:commutative-to-monoidal}). 
In doing so, we see how to construct a bistrong pseudomonad from a concurrent pseudomonad (\Cref{sec:monoidal-to-commutative}).
Here we just show how to construct the data. The axioms are all checked directly: this is long-winded, but relatively straightforward.

 \subsection{From monoidal to commutative}
 \label{sec:monoidal-to-commutative}

 Fix a monoidal pseudomonad as in \Cref{def:monoidalpseudomonad}, with the three modifications of the underlying monoidal pseudofunctor denoted by
 	$\mongamma, \mondelta$ 
 and $\monomega$ as in~\cite{Cheng2011monbicat}.

 We give the data for a commutative pseudomonad. In doing so we need to construct bistrong structure; because we only use the invertibility of $\muprod$ in the definition of $\commute$, this also shows how to  construct a \emph{lax} bistrong pseudomonad from a \emph{lax} monoidal pseudomonad. 
A short check then shows that the invertibility conditions of a concurrent pseudomonad 	
 	(\Cref{def:concurrent-pseudomonad})
 suffice to make all the modifications for the induced bistrong pseudomonad invertible.

 For the two strengths, we take:
\begin{align*}
 	t_{A, B} := AT_B \xra{\eta T_B} T_AT_B \xra{\chi} T_{AB} \\
 	s_{A, B} := T_A B \xra{T_A \eta} T_AT_B \xra{\chi} T_{AB}
 \end{align*}
 
 We give the structural modifications making the pseudomonad $(T, \mu, \eta)$ left strong; the ones for the right strength are very similar.
First, the unit laws:
\vspace{-2mm}
 \[
 	\hspace{4mm}
 	\x :=
\begin{tikzcd}[column sep = 1.5em, ampersand replacement=\&]
 				{IT_A} \\
 				{T_I T_A} \& \: \& {T_A} \\
 				{T_{IA}}
 				\arrow[""{name=0, anchor=center, inner sep=0}, "{\eta T_A}"', curve={height=12pt}, from=1-1, to=2-1]
 				\arrow[""{name=1, anchor=center, inner sep=0}, "{\iota T_A}", curve={height=-12pt}, from=1-1, to=2-1]
 				\arrow["\chi"', from=2-1, to=3-1]
 				\arrow[""{name=2, anchor=center, inner sep=0}, "{T_{\lambda}}"', from=3-1, to=2-3]
 				\arrow[""{name=3, anchor=center, inner sep=0}, "\lambda", curve={height=-6pt}, from=1-1, to=2-3]
 				\arrow["{\etaunit T_A}", shift right, shorten <=7pt, shorten >=7pt, Rightarrow, from=0, to=1]
 				\arrow["\mongamma"{description}, draw=none, from=2, to=3]
 			\end{tikzcd} 
\hspace{5mm}
\z := 
 			\begin{tikzcd}[ampersand replacement=\&, column sep = 1.5em]
 				AB \\
 				{AT_B} \& {T_A T_B} \& {T_{AB}}
 				\arrow["A\eta"', from=1-1, to=2-1]
 				\arrow["{\eta T_B}"', from=2-1, to=2-2]
 				\arrow["\chi"', from=2-2, to=2-3]
 				\arrow[""{name=0, anchor=center, inner sep=0}, "\eta", curve={height=-12pt}, from=1-1, to=2-3]
 				\arrow[""{name=1, anchor=center, inner sep=0}, "\eta\eta"{yshift=-1mm, xshift=-1mm}, from=1-1, to=2-2]
 				\arrow["\iso"{description}, draw=none, from=1, to=2-1]
 				\arrow["\etaprod"', shorten >=3pt, draw=none, yshift=1mm, from=2-2, to=0]
 			\end{tikzcd}
 \]

Next, we give the two associativity laws. Note that if $T$ is concurrent, $\w$ is invertible even though $\muprod$ is not.
\begin{align*}
 	\y &:= \begin{tikzcd}[ampersand replacement=\&, column sep = 4em]
 			{(AB)T_C } \\
 			{A(BT_C)} \& {(T_AT_B)T_C} \& {T_{AB}T_C} \\
 			{A (T_B T_C)} \& {T_A(T_B T_C)} \& {T_{(AB)C}} \\
 			{A T_{BC}} \& {T_A T_{BC}} \& {T_{A(BC)}}
 			\arrow["\alpha"', from=1-1, to=2-1]
 			\arrow["{A(\eta T_C)}"', from=2-1, to=3-1]
 			\arrow["A\chi"', from=3-1, to=4-1]
 			\arrow[""{name=0, anchor=center, inner sep=0}, "{\eta T_{BC}}"', from=4-1, to=4-2]
 			\arrow[""{name=1, anchor=center, inner sep=0}, "\chi"', from=4-2, to=4-3]
 			\arrow[""{name=2, anchor=center, inner sep=0}, "{\eta(T_B T_C)}"', from=3-1, to=3-2]
 			\arrow["{T_A \chi}", from=3-2, to=4-2]
 			\arrow["{(\eta \eta)T_C}"{description}, from=1-1, to=2-2]
 			\arrow["\iso"{description}, draw=none, from=2-1, to=2-2]
 			\arrow["{\eta(\eta T_C)}"{description}, from=2-1, to=3-2]
 			\arrow["\alpha", from=2-2, to=3-2]
 			\arrow[""{name=3, anchor=center, inner sep=0}, "{\chi T_C}", from=2-2, to=2-3]
 			\arrow["\chi", from=2-3, to=3-3]
 			\arrow["{T_{\alpha}}", from=3-3, to=4-3]
 			\arrow[""{name=4, anchor=center, inner sep=0}, "{\eta T_C}", curve={height=-18pt}, from=1-1, to=2-3]
 			\arrow["\iso"{description}, draw=none, from=2, to=0]
 			\arrow["\iso"{description, pos=0.4}, shift left, draw=none, from=2, to=2-1]
 			\arrow["\monomega"{description, pos=0.3}, yshift = -3mm, shift left, draw=none, from=3, to=1]
 			\arrow["{\etaprod T_C}", shorten >=3pt, draw=none, yshift=-2mm, from=2-2, to=4]
 		\end{tikzcd}
\\[5mm]
\w &:=
 		\hspace{-1mm}
\begin{tikzcd}[ampersand replacement=\&, column sep = 3em]
 			{AT^2_B} \& \: \& \: \& {A T_B} \\
 			\& \: \& {T_A T^2_B} \\
 			{T_A T^2_B} \& {T^2_A T^2_B} \&\& {T_A T_B} \\
 			{T_{A T_B}} \& {T_{T_A T_B}} \& {T^2_{AB}} \& {T_{AB}}
 			\arrow["{\eta T^2_B}"', from=1-1, to=3-1]
 			\arrow["\chi"', from=3-1, to=4-1]
 			\arrow["{T_{\eta T_B}}"', from=4-1, to=4-2]
 			\arrow["{T_{\chi}}"', from=4-2, to=4-3]
 			\arrow["\mu"', from=4-3, to=4-4]
 			\arrow["{T_{\eta}T^2_B}"', from=3-1, to=3-2]
 			\arrow["\chi", from=3-2, to=4-2]
 			\arrow["\chi", from=3-4, to=4-4]
 			\arrow[""{name=0, anchor=center, inner sep=0}, "\mu\mu", from=3-2, to=3-4]
 			\arrow["{\mu T^2_B}"{description}, from=3-2, to=2-3]
 			\arrow["{T_A \mu}", from=2-3, to=3-4]
 			\arrow[""{name=1, anchor=center, inner sep=0}, "\Id", curve={height=-12pt}, from=3-1, to=2-3]
 			\arrow[""{name=2, anchor=center, inner sep=0}, "{A \mu}", from=1-1, to=1-4]
 			\arrow["{\eta T_B}", from=1-4, to=3-4]
 			\arrow["\muprod\:", shorten >=3pt, Rightarrow, from=4-3, to=0]
 			\arrow["\iso"{description}, yshift=1mm, draw=none, from=2-3, to=0]
 			\arrow["{\p T^2_B}"{description}, draw=none, from=1, to=3-2]
 			\arrow["\iso"{description}, draw=none, from=2, to=2-3]
 		\end{tikzcd}
\end{align*}
\vspace{2mm}

For the bistrong structure:
\[
 	\bistrong 
 	:=
\begin{tikzcd}[scalenodes = .95, row sep = 2.5em]
 		{(AT_B)C} & {A(T_B C)} & {A(T_B T_C)} & {A T_{BC}} \\
 		{(T_A T_B)C} & {(T_A T_B)T_C} & {T_A (T_B T_C)} & {T_{A} T_{BC}} \\
 		{T_{AB}C} & {T_{AB}T_C} & {T_{(AB)C}} & {T_{A(BC)}}
 		\arrow["{T_{\alpha}}"', from=3-3, to=3-4]
 		\arrow["\chi"', from=3-2, to=3-3]
 		\arrow[""{name=0, anchor=center, inner sep=0}, "{T_{AB}\eta}"', from=3-1, to=3-2]
 		\arrow["{\chi C}"', from=2-1, to=3-1]
 		\arrow[""{name=1, anchor=center, inner sep=0}, "{(\eta T_B) C}"', from=1-1, to=2-1]
 		\arrow["\alpha", from=1-1, to=1-2]
 		\arrow["{A(T_B \eta)}", from=1-2, to=1-3]
 		\arrow[""{name=2, anchor=center, inner sep=0}, "A\chi", from=1-3, to=1-4]
 		\arrow["{\eta T_{BC}}", from=1-4, to=2-4]
 		\arrow["\chi", from=2-4, to=3-4]
 		\arrow[""{name=3, anchor=center, inner sep=0}, "{(\eta T_B) \eta}"{description}, from=1-1, to=2-2]
 		\arrow[""{name=4, anchor=center, inner sep=0}, "{(T_A T_B)\eta}"', from=2-1, to=2-2]
 		\arrow["{\chi T_C}", from=2-2, to=3-2]
 		\arrow["{\eta  (T_B T_C)}", from=1-3, to=2-3]
 		\arrow[""{name=5, anchor=center, inner sep=0}, "{T_A \chi}"', from=2-3, to=2-4]
 		\arrow[""{name=6, anchor=center, inner sep=0}, "{\eta (T_B \eta)}"{description}, from=1-2, to=2-3]
 		\arrow["\alpha"', from=2-2, to=2-3]
 		\arrow["\monomega"{description}, draw=none, from=2-3, to=3-3]
 		\arrow["\iso"{description}, draw=none, from=4, to=0]
 		\arrow["\iso", shift left, draw=none, from=2, to=5]
 		\arrow["\iso"{description, pos=0.2}, draw=none, from=1-3, to=6]
 		\arrow["\iso"{description}, shift left, draw=none, from=1, to=4]
 		\arrow["\iso"{description}, draw=none, from=3, to=6]
 	\end{tikzcd}
 \]

 Finally, for the commutative structure we take the following. Note that this is the only pasting diagram that uses both $\muprod$ and its inverse.
\[
 		\commute 
 		:=
\begin{tikzcd}[row sep = 2.3em, column sep = 3em]
 			{T_A T_B} & {T_A T^2_B} & {T_{A T_B}} & {T_{T_A T_B}} \\
 			&& {T^2_A T^2_B} & {T^2_{AB}} \\
 			{T^2_A T_B} & {T^2_A T^2_B} & {T_A T_B} & {T_{AB}} \\
 			{T_{T_A B}} & {T_{T_A T_B}} && {T^2_{AB}}
 			\arrow["{\eta T_B}"', from=1-1, to=3-1]
 			\arrow["\chi"', from=3-1, to=4-1]
 			\arrow[""{name=0, anchor=center, inner sep=0}, "{T_{T_A \eta}}"', from=4-1, to=4-2]
 			\arrow["{T_\chi}"', from=4-2, to=4-4]
 			\arrow["\mu"', from=4-4, to=3-4]
 			\arrow[""{name=1, anchor=center, inner sep=0}, "{T^2_A T_\eta}", from=3-1, to=3-2]
 			\arrow[""{name=2, anchor=center, inner sep=0}, "\chi"', from=3-2, to=4-2]
 			\arrow[""{name=3, anchor=center, inner sep=0}, "\chi"{description}, from=3-3, to=3-4]
 			\arrow["\mu\mu", from=3-2, to=3-3]
 			\arrow["{T_A \eta}", from=1-1, to=1-2]
 			\arrow["\chi", from=1-2, to=1-3]
 			\arrow["{T_{\eta T_B}}", from=1-3, to=1-4]
 			\arrow["{T_\chi}", from=1-4, to=2-4]
 			\arrow[""{name=4, anchor=center, inner sep=0}, "\mu", from=2-4, to=3-4]
 			\arrow[""{name=5, anchor=center, inner sep=0}, "{T_\eta T^2_B}"{description}, from=1-2, to=2-3]
 			\arrow["\chi"{description}, from=2-3, to=1-4]
 			\arrow["\iso"{description}, draw=none, from=1-3, to=2-3]
 			\arrow[""{name=6, anchor=center, inner sep=0}, "\Id"{description, pos=0.3}, from=1-1, to=3-3]
 			\arrow[""{name=7, anchor=center, inner sep=0}, "\mu\mu", from=2-3, to=3-3]
 			\arrow["\muprod"{description}, draw=none, from=7, to=4]
 			\arrow["{\p \n}"{description}, draw=none, from=5, to=6]
 			\arrow["{\n \p}"{description}, draw=none, from=3-1, to=6]
 			\arrow["\iso"{description}, draw=none, from=1, to=0]
 			\arrow["\muprod"{description}, yshift=-3mm, draw=none, from=3, to=2]
 		\end{tikzcd}
 \]

\newpage
 \subsection{From commutative to monoidal}
 \label{sec:commutative-to-monoidal}

 Fix a commutative pseudomonad as in \Cref{def:commutative-pseudomonad}, with right-strong data denoted as in \Cref{sec:right-strong-data}. We define a monoidal pseudomonad structure by taking $\iota$ to be the identity and
\[
 	\chi :=
 		\big( 
 			T_A T_B \xra{s} T_{A T_B} \xra{T_t} T^2_{AB} \xra{\mu} T_{AB}
 		\big)
 \]
We may therefore take $\etaunit$ to be the identity. The modifications $\monomega, \mongamma$ and $\mondelta$ making $T$ a monoidal pseudofunctor as in~\cite[Definition~2.5]{SchommerPries2009} are then defined as follows.
 
\[
	\mongamma :=
		\hspace{-1mm}
\begin{tikzcd}[column sep = 1em]
			{IT_A} && {T_I T_A} \\
			& {T_{IA}} & {T_{I T_A}} \\
			&& {T^2_{IA}} \\
			{T_A} && {T_{IA}}
			\arrow["{\eta T_A}", from=1-1, to=1-3]
			\arrow["s", from=1-3, to=2-3]
			\arrow[""{name=0, anchor=center, inner sep=0}, "{T_t}", from=2-3, to=3-3]
			\arrow[""{name=1, anchor=center, inner sep=0}, "\eta"{description}, from=1-1, to=2-3]
			\arrow["t"', from=1-1, to=2-2]
			\arrow["\mu", from=3-3, to=4-3]
			\arrow["\eta"{description}, from=2-2, to=3-3]
			\arrow[""{name=2, anchor=center, inner sep=0}, "\Id"', curve={height=6pt}, from=2-2, to=4-3]
			\arrow["{T_\lambda}", from=4-3, to=4-1]
			\arrow["\lambda"{description}, from=1-1, to=4-1]
			\arrow["\x"{description}, draw=none, from=2-2, to=4-1]
			\arrow["{\z'}"{description}, draw=none, from=1-3, to=1]
			\arrow["\n"{description}, yshift = 1mm, draw=none, from=3-3, to=2]
			\arrow["\iso"{description}, shift right, draw=none, from=1, to=0]
		\end{tikzcd}
\hspace{5mm}
\mondelta :=
	\hspace{-1mm}
\begin{tikzcd}[column sep = 1em]
		{T_A I} && {T_A T_I} \\
		& {T_{AI}} & {T_{A T_I}} \\
		&& {T^2_{A I}} \\
		{T_A} && {T_{AI}}
		\arrow["{T_A \eta}", from=1-1, to=1-3]
		\arrow[""{name=0, anchor=center, inner sep=0}, "s", from=1-3, to=2-3]
		\arrow["{T_t}", from=2-3, to=3-3]
		\arrow["\mu", from=3-3, to=4-3]
		\arrow["{T_\rho}", from=4-3, to=4-1]
		\arrow["\rho"{description}, from=1-1, to=4-1]
		\arrow[""{name=1, anchor=center, inner sep=0}, "s"{description}, from=1-1, to=2-2]
		\arrow["{T_{A \eta}}", from=2-2, to=2-3]
		\arrow[""{name=2, anchor=center, inner sep=0}, "{T_\eta}"{description, pos=0.4}, curve={height=6pt}, from=2-2, to=3-3]
		\arrow[""{name=3, anchor=center, inner sep=0}, "\Id"{description}, curve={height=18pt}, from=2-2, to=4-3]
		\arrow["\iso"{description}, shift left=2, draw=none, from=1, to=0]
		\arrow["{T_{\z}}"{description, pos=0.4}, draw=none, from=2-3, to=2]
		\arrow["\p"{description, pos=0.4}, draw=none, from=3-3, to=3]
		\arrow["{\x'}"{description}, shift left=3, draw=none, from=1, to=4-1]
	\end{tikzcd}
\]
\vspace{-2mm}
\[
	\monomega :=
\begin{tikzcd}[scalenodes = .9, column sep = 1em]
		{(T_A T_B) T_C} & {T_{A T_B} T_C} && {T^2_{AB} T_C} && {T_{AB} T_C} \\
		{T_A (T_B T_C)} & {T_{(A T_B) T_C}} && {T_{T_{AB} T_C}} && {T_{(AB) T_C}} \\
		& {T_{A (T_B T_C)}} &&& {T^2_{(AB) T_C}} \\
		{T_A T_{B T_C}} & {T_{A T_{B T_C}}} && {T^2_{A (B T_C)}} && {T^2_{(AB)C}} \\
		{T_A T^2_{BC}} & {T_{A T^2_{BC}}} && {T^2_{A T_{BC}}} & {T^3_{(AB)C}} & {T_{(AB)C}} \\
		&&& {T^3_{A (BC)}} & {T^2_{(AB)C}} \\
		{T_A T_{BC}} & {T_{A T_{BC}}} && {T^2_{A(BC)} } && {T_{A(BC)}}
		\arrow[""{name=0, anchor=center, inner sep=0}, "{s T_C}", from=1-1, to=1-2]
		\arrow[""{name=1, anchor=center, inner sep=0}, "{T_t T_C}", from=1-2, to=1-4]
		\arrow["s", from=1-6, to=2-6]
		\arrow["{T_t}", from=2-6, to=4-6]
		\arrow["\mu", from=4-6, to=5-6]
		\arrow[""{name=2, anchor=center, inner sep=0}, "{T_\alpha}", from=5-6, to=7-6]
		\arrow["\alpha"', from=1-1, to=2-1]
		\arrow[""{name=3, anchor=center, inner sep=0}, "{T_A s}"', from=2-1, to=4-1]
		\arrow["{T_A t}"', from=4-1, to=5-1]
		\arrow["{T_A \mu}"', from=5-1, to=7-1]
		\arrow[""{name=4, anchor=center, inner sep=0}, "s"', from=7-1, to=7-2]
		\arrow[""{name=5, anchor=center, inner sep=0}, "{T_t}"', from=7-2, to=7-4]
		\arrow[""{name=6, anchor=center, inner sep=0}, "\mu"', from=7-4, to=7-6]
		\arrow[""{name=7, anchor=center, inner sep=0}, "{\mu T_C}", from=1-4, to=1-6]
		\arrow["s", from=1-2, to=2-2]
		\arrow["{T_{\alpha}}", from=2-2, to=3-2]
		\arrow[""{name=8, anchor=center, inner sep=0}, "s"{description}, from=2-1, to=3-2]
		\arrow[""{name=9, anchor=center, inner sep=0}, "{T_{t T_C}}", from=2-2, to=2-4]
		\arrow["s"', from=1-4, to=2-4]
		\arrow["{T_s}", from=2-4, to=3-5]
		\arrow[""{name=10, anchor=center, inner sep=0}, "\mu", from=3-5, to=2-6]
		\arrow["{T^2 t}"', from=3-5, to=5-5]
		\arrow[""{name=11, anchor=center, inner sep=0}, "\mu"{description}, from=5-5, to=4-6]
		\arrow["{T_{As}}", from=3-2, to=4-2]
		\arrow[""{name=12, anchor=center, inner sep=0}, "{T_t}", from=4-2, to=4-4]
		\arrow[""{name=13, anchor=center, inner sep=0}, "{T^2_{\alpha}}"', from=3-5, to=4-4]
		\arrow["{T_{\mu}}"', from=5-5, to=6-5]
		\arrow[""{name=14, anchor=center, inner sep=0}, "\mu"{description}, from=6-5, to=5-6]
		\arrow[""{name=15, anchor=center, inner sep=0}, "{T^2_\alpha}", from=6-5, to=7-4]
		\arrow[""{name=16, anchor=center, inner sep=0}, "{T^2_{A t}}"', from=4-4, to=5-4]
		\arrow[""{name=17, anchor=center, inner sep=0}, "{T^3_\alpha}"', from=5-5, to=6-4]
		\arrow["{T^2_{t}}"', from=5-4, to=6-4]
		\arrow[""{name=18, anchor=center, inner sep=0}, "{T_{A T_{t}}}"', from=4-2, to=5-2]
		\arrow[""{name=19, anchor=center, inner sep=0}, "{T_t}"', from=5-2, to=5-4]
		\arrow["{T_\mu}"', from=6-4, to=7-4]
		\arrow["{T_{A \mu}}", from=5-2, to=7-2]
		\arrow[""{name=20, anchor=center, inner sep=0}, "s"{description}, from=4-1, to=4-2]
		\arrow[""{name=21, anchor=center, inner sep=0}, "s"{description}, from=5-1, to=5-2]
		\arrow["{\y'}"{description}, draw=none, from=0, to=8]
		\arrow["{\w'}"{description}, draw=none, from=7, to=3-5]
		\arrow["\iso"{description}, draw=none, from=1, to=9]
		\arrow["{T_{\b}}"{description}, draw=none, from=9, to=12]
		\arrow["\iso"{description}, draw=none, from=10, to=11]
		\arrow["\m"{description}, draw=none, from=11, to=14]
		\arrow["\iso"{description}, draw=none, from=18, to=16]
		\arrow["\iso"{description}, draw=none, from=17, to=15]
		\arrow["\iso"{description}, draw=none, from=2, to=6]
		\arrow["\iso"{description}, draw=none, from=3-2, to=3]
		\arrow["\iso"{description}, draw=none, from=20, to=21]
		\arrow["\iso"{description}, draw=none, from=21, to=4]
		\arrow["{T_{\y}}"{description}, draw=none, from=13, to=17]
		\arrow["{T_{\w}}"{description}, shift right=2, draw=none, from=19, to=5]
	\end{tikzcd}
\]

Finally, for the modifications $\etaprod$, $\muunit$ and $\muprod$ we take:
\[
	\etaprod :=
		\hspace{-3mm}
\begin{tikzcd}[column sep = 1.2em]
			AB & {A T_B} & {T_A T_B} \\
			&& {T_{A T_B} } \\
			& {T_{AB}} & {T^2_{AB}} \\
			&& {T_{AB}}
			\arrow["{A \eta}", from=1-1, to=1-2]
			\arrow["{\eta T_B}", from=1-2, to=1-3]
			\arrow["s", from=1-3, to=2-3]
			\arrow["{T_t}", from=2-3, to=3-3]
			\arrow["\mu", from=3-3, to=4-3]
			\arrow[""{name=0, anchor=center, inner sep=0}, "\eta"', curve={height=6pt}, from=1-2, to=2-3]
			\arrow["t"', from=1-2, to=3-2]
			\arrow[""{name=1, anchor=center, inner sep=0}, "\eta", from=3-2, to=3-3]
			\arrow[""{name=2, anchor=center, inner sep=0}, "\Id"', curve={height=6pt}, from=3-2, to=4-3]
			\arrow[""{name=3, anchor=center, inner sep=0}, "\eta"', curve={height=6pt}, from=1-1, to=3-2]
			\arrow["{\z'}"{description}, draw=none, from=1-3, to=0]
			\arrow["\n"{description}, draw=none, from=3-3, to=2]
			\arrow["\iso"{description}, draw=none, from=0, to=1]
			\arrow["\z"{description}, draw=none, from=1-2, to=3]
		\end{tikzcd}
\hspace{7mm}
\muunit :=
	\hspace{-1mm}
\begin{tikzcd}[column sep = 1.2em]
		{I } & {T_I} & {T^2_I} \\
		&& {T_I}
		\arrow["{T_\eta}", from=1-2, to=1-3]
		\arrow["\mu", from=1-3, to=2-3]
		\arrow["\eta", from=1-1, to=1-2]
		\arrow[""{name=0, anchor=center, inner sep=0}, "\Id"', curve={height=6pt}, from=1-2, to=2-3]
		\arrow["\p"{description}, draw=none, from=1-3, to=0]
	\end{tikzcd}
\]

\[
	\muprod :=
\begin{tikzcd}[scalenodes = .9, column sep = 1em]
			& {T^2_A T^2_B} && {T_A T^2_B} && {T_A T_B} \\
			{T_{T_A T^2_B}} && {T^2_{A T^2_B}} & {T_{A T^2_B}} &&& {T_{A T_B}} \\
			&& {T^3_{A T_B}} & {T^2_{A T_B}} & {T^3_{AB}} && {T^2_{AB}} \\
			{T^2_{T_A T_B}} & {T^3_{A T_B}} & {T^2_{A T_B}} & {T_{A T_B}} & {T^2_{AB}} && {T_{AB}} \\
			& {T_{T_A T_B}} & {T^2_{A T_B}} && {T^3_{AB}} & {T^2_{AB}}
			\arrow[""{name=0, anchor=center, inner sep=0}, "{T_A\mu}", from=1-4, to=1-6]
			\arrow[""{name=1, anchor=center, inner sep=0}, "{\mu T^2_B}", from=1-2, to=1-4]
			\arrow["s", from=1-4, to=2-4]
			\arrow["s", from=1-6, to=2-7]
			\arrow["{T_t}", from=2-7, to=3-7]
			\arrow["\mu", from=3-7, to=4-7]
			\arrow[""{name=2, anchor=center, inner sep=0}, "{T_{A \mu}}", from=2-4, to=2-7]
			\arrow["{T_t}", from=2-4, to=3-4]
			\arrow[""{name=3, anchor=center, inner sep=0}, "\mu"', from=5-6, to=4-7]
			\arrow["s"', from=1-2, to=2-1]
			\arrow[""{name=4, anchor=center, inner sep=0}, "Ts", from=2-1, to=2-3]
			\arrow[""{name=5, anchor=center, inner sep=0}, "\mu", from=2-3, to=2-4]
			\arrow["{T^2_t}"', from=2-3, to=3-3]
			\arrow[""{name=6, anchor=center, inner sep=0}, "\mu", from=3-3, to=3-4]
			\arrow["{T_\mu}"', from=5-5, to=5-6]
			\arrow["{T^2_t}"', from=5-3, to=5-5]
			\arrow["{T_s}"', from=5-2, to=5-3]
			\arrow["{T_t}"', from=2-1, to=4-1]
			\arrow[""{name=7, anchor=center, inner sep=0}, "\mu"', from=4-1, to=5-2]
			\arrow["{T^2_s}", from=4-1, to=4-2]
			\arrow[""{name=8, anchor=center, inner sep=0}, "\mu"', from=4-2, to=5-3]
			\arrow[""{name=9, anchor=center, inner sep=0}, "{T^2_t}", from=3-4, to=3-5]
			\arrow[""{name=10, anchor=center, inner sep=0}, "{T_\mu}", from=3-5, to=3-7]
			\arrow["{T_\mu}"', from=3-3, to=4-3]
			\arrow["{T_\mu}", from=4-2, to=4-3]
			\arrow[""{name=11, anchor=center, inner sep=0}, "\mu", from=4-3, to=4-4]
			\arrow["\mu"', from=3-4, to=4-4]
			\arrow[""{name=12, anchor=center, inner sep=0}, "{T_t}", from=4-4, to=4-5]
			\arrow["\mu", from=3-5, to=4-5]
			\arrow[""{name=13, anchor=center, inner sep=0}, "\mu", from=4-5, to=4-7]
			\arrow[""{name=14, anchor=center, inner sep=0}, "\mu"', from=5-3, to=4-4]
			\arrow[""{name=15, anchor=center, inner sep=0}, "\mu", from=5-5, to=4-5]
			\arrow["\m"{description}, draw=none, from=4-3, to=5-3]
			\arrow["\iso"{description}, draw=none, from=0, to=2]
			\arrow["\iso"{description}, draw=none, from=5, to=6]
			\arrow["\iso"{description}, draw=none, from=7, to=8]
			\arrow["{T_{\commute}}"{description}, draw=none, from=4, to=4-2]
			\arrow["{\w'}"{description}, draw=none, from=1, to=2-3]
			\arrow["\m"{description}, draw=none, from=6, to=11]
			\arrow["{T_{\w}}"{description}, draw=none, from=2, to=3-5]
			\arrow["\m"{description}, draw=none, from=10, to=13]
			\arrow["\iso"{description}, draw=none, from=9, to=12]
			\arrow["\iso"{description}, draw=none, from=14, to=15]
			\arrow["\m"{description}, draw=none, from=15, to=3]
		\end{tikzcd}
\]

\section{Sketches of proofs omitted from the main body}

\subsection{Proofs for \Cref{sec:strong-pseudofunctors}}

\redundancy*
\begin{proof}
  The first of these axioms is proved using that $\mathfrak{l}$ is completely determined by the other structural data of a monoidal bicategory (see~\cite[p.~64]{Gurski2013}). For the second axiom, while $\p$ is not  determined by the rest of the data, the composite $\mu \circ \p_T$  is uniquely expressible using $\n$ and $\m$. This suffices for
  the proof.
\end{proof}

\subsection{Proofs for \Cref{sec:examples} }
\label{sec:proofs-for-basic-examples-section}

\easyexamplesofstrengths*
\begin{proof}
For both claims, one constructs the data by following the corresponding 1-categorical argument and filling the commuting diagrams with the appropriate 2-cells; the equations hold by coherence. 
For~(\ref{c:pseudomonad-from-pseudomonoid}), for instance, 
the structural modifications $\x$ and $\y$ are given using $\mathfrak{l}$ and 
$\mathfrak{p}$, respectively, while $\w$ and $\z$ are given using 
$\mathfrak{r}$ and $\mathfrak{p}$, respectively, together with the 
pseudonaturality of $\psinv\alpha$.
The axioms hold by the coherence of pseudomonoids~\cite{Lack2000}.
Similarly for (\ref{c:pseudomonads-strong-wrt-plus}): the strength has components
	$[T\inl \circ \eta_A, T\inr] : A + TB \to T(A + B)$ 
and the structural modifications are given by taking the categorical proof and filling in the commuting diagrams with the appropriate 2-cells. 
The equations hold by coherence for bicategories with finite products~\cite{Power1989bilimit} and the fact all the structural 2-cells are invertible.
\end{proof}

\everymonadstrong*
\begin{proof}
Similarly to the categorical proof, for any pseudofunctor $F : \Cat \to \Cat$, and $a \in \catA$ one has
	$F(\lambda b \bind \seq{a, b}) : F(\catB) \to F(\catA \times \catB)$.
However, since $F$ is now a pseudofunctor we also have a natural transformation 
	$F(\lambda b \bind \seq{f, b})$ 
for each $f : a \to a'$ in $\catA$, with components
	$F(\lambda b \bind \seq{f, b})_w : 
			F(\lambda b \bind \seq{a, b}) 
			\to 
			F(\lambda b \bind \seq{a', b})$
in $F(\catA \times \catB)$.
We may therefore define a functor 
	$t_{\catA, \catB} : \catA \times F\catB \to F(\catA\times\catB)$
sending a pair of objects
	$(a, w)$
to 
	$F(\lambda b \bind \seq{a, b})(w)$
and a pair of morphisms
	$(a \xra{f} a', w \xra{g} w')$
to the composite 
	$F(\lambda b \bind \seq{a', b})(g) \circ F(\lambda b \bind \seq{f, b})_w$.
This is functorial because $F$ is functorial on natural transformations and
	$F(\lambda b \bind \seq{a, b})$
is a functor, and pseudonatural via the compositor for $F$.

Then $\x$ and $\y$ are defined using the compositor and unitor for $F$, and the coherence of pseudomonads ensures the axioms hold. 
Finally, if $T$ is a pseudomonad then one defines $\w$ and $\z$ using the pseudonaturality of $\eta$ and $\mu$: this is similar to the proof in the categorical setting, where one uses the naturality of the unit and multiplication to show the two compatibility laws hold. Again, the axioms follow from coherence.
\end{proof}

\subsection{Proofs for \Cref{sec:premonoidal-Kleisli-bicats}}

\freydstructurefromstrong*
\begin{proof}
We use \Cref{res:left-actions-and-left-strengths-equivalent}.
The binoidal structure is as in~(\ref{eq:left-bowties-defined}). Then,
  the proof consists in constructing a compatible pair of a left
  action and a right action, where ``compatible'' means that the
  two associators coincide on 1-cells, and the structural 2-cells
  $\actpentagonator$ and $\acttrianglem$ coincide. All of this is
  verified directly, based on the construction of the actions in Theorem~\ref{res:left-actions-and-left-strengths-equivalent}.

  The only remaining difficulty is the pseudonaturality of $\alpha$ in
  its middle argument, since this is not required for either of the
  actions. This is where the 2-cell given by the bistrong structure is used.
\end{proof}

\subsection{Proofs for \Cref{sec:actions}} 

\strengthtoleftaction*
 \begin{proof}
   The action on 2-cells is the same as that on morphisms. The
   compositor and unitor for $\act$ are given by the modifications
   $\w$ and $\z$ that come with the strength. We use $\x$ and $\y$ to
   construct the strength data $\actlambda$ and $\actalpha$ from the
   monoidal data $\lambda$ and $\alpha$, and finally we use $\z$
   again to lift $\pentagonator, \montrianglem, \montrianglel$ to 
   $\actpentagonator, \acttrianglem, \acttrianglel$.
   The strength axioms ensure that this forms an action.
 \end{proof}

\leftactionsandleftstrengthsequivalent*
\begin{proof}
The proof follows the categorical construction (see~\cite[Prop. 4.3]{McDermott2022}).
For every left strength $t$ for $T$, the induced action $\act :
\B \times \B_T \to \B_T$ extends the canonical action of $\B$ on
itself, by construction.
Conversely, any extension $(\act, \theta)$ induces a strength 
  	$t_{A,B} = \id_A \act \id_{TB}$, 
where  $\id_{TB}$ is regarded as an element of $\B_T(TB, B)$. 
These constructions are inverses, up to isomorphism, as we verify
directly. 
We then verify the axioms. In each direction, there is a tight match-up between the equations of the given structure and the equations for the required structure; 
for an outline, see \Cref{table:action-strength-correspondence}.
\end{proof}

\begin{table}[t]
\centering
\renewcommand{\arraystretch}{1.5}
\begin{tabular}{@{}cc@{}}
\toprule
\textsc{Strength} & \textsc{Action}                                                       
 \\ \midrule
\makecell{Axioms for a strong \\ pseudofunctor 
  (Fig. \ref{fig:coherence-axioms-for-strong-pseudofunctor})} & \makecell{Modification axioms
                                                               \\for the 2-cells
                                                               $\acttrianglem,
                                                               \acttrianglel,
                                                               \actpentagonator$ \\
                                                               determined
                                                               by
                                                               the action extension}\\[3mm]
\makecell{Compatibility between \\[-.3mm] 
$\mathbf{m, n, p}$ and $\mathbf{z, w}$  \\[1mm] 
}      
  & 
\makecell{Pseudofunctor axioms \\[-.3mm] for $\act : \B \times \B_T \to \B_T$}     \\[1mm]
  \makecell{Compatibility between \\[-.3mm]  $\mathbf{x}$ and $\mathbf{z, w}$ \\
}
&
                                                                \makecell{Pseudonaturality 
                                                                of  \\[-.3mm]  the transformation 
  \\[-.3mm] $\actlambda$ determined by the action extension } \\[4mm]
\makecell{Compatibility between \\[-.3mm] $\mathbf{y}$ and $\mathbf{z, w}$ 
}     
& \makecell{Pseudonaturality                                                                of \\[-.3mm] 
                 the transformation                                             \\
  $\actalpha$ determined by the action extension} \\[2mm]
  \bottomrule
\end{tabular}
 \caption{Relating the data and equations on each side of the correspondence in \Cref{res:left-actions-and-left-strengths-equivalent}.}
\label{table:action-strength-correspondence}
\vspace{-.5cm}
\end{table}

\vfill
\end{document}